
\typeout{Complexity of Abduction in Łukasiewicz Logic}


\documentclass{article}
\pdfpagewidth=8.5in
\pdfpageheight=11in
\usepackage{kr}
\usepackage{times}
\usepackage{xfrac}
\usepackage{soul}
\usepackage{url}
\usepackage[shortlabels]{enumitem}
\usepackage[hidelinks]{hyperref}
\usepackage[T1]{fontenc}
\usepackage[utf8]{inputenc}
\usepackage[small]{caption}
\usepackage{graphicx}
\usepackage{mathtools}
\usepackage{amsmath}
\usepackage{amssymb}
\usepackage{amsthm}
\usepackage{booktabs}
\usepackage{algorithm}
\usepackage{algorithmic}
\usepackage{amsfonts}
\urlstyle{same}
\usepackage{xcolor}
\usepackage{multicol}
\usepackage{multirow}
\usepackage{verbatim}
\usepackage{thm-restate}
\usepackage[normalem]{ulem}
\urlstyle{same}

\newcommand{\CPL}{\mathsf{CPL}}
\newcommand{\Luk}{{\normalfont{\textsf{\L}}}}

\newcommand{\RPL}{\mathsf{RPL}}


\newcommand{\conp}{\mathsf{coNP}}
\newcommand{\np}{\mathsf{NP}}
\newcommand{\DP}{\mathsf{DP}}

\newcommand{\Prop}{\mathsf{Pr}}

\newcommand{\LLuk}{\mathcal{L}_\Luk}
\newcommand{\LLukint}{\mathcal{L}^\mathbb{Q}_\Luk}
\newcommand{\LCPL}{\mathcal{L}_{\CPL}}




\newcommand{\Cmsf}{{\mathsf{C}}}

\newcommand{\Hmsf}{{\mathsf{H}}}

\newcommand{\Pmsf}{{\mathsf{P}}}

\newcommand{\Smsf}{{\mathsf{S}}}

\newcommand{\Xmsf}{{\mathsf{X}}}

\newcommand{\rmsf}{\mathsf{r}}

\newcommand{\cmbf}{\mathbf{c}}
\newcommand{\dmbf}{\mathbf{d}}

\newcommand{\imbf}{\mathbf{i}}
\newcommand{\vmbf}{\mathbf{v}}


\newcommand{\Nmbb}{\mathbb{N}}
\newcommand{\Pmbb}{\mathbb{P}}
\newcommand{\Qmbb}{\mathbb{Q}}
\newcommand{\Rmbb}{\mathbb{R}}


\newcommand{\Dmc}{\mathcal{D}}

\newcommand{\Hmc}{\mathcal{H}}

\newcommand{\Omc}{{\mathcal{O}}}

\newcommand{\Vmc}{{\mathcal{V}}}

\newcommand{\cl}{\mathsf{cl}}
\newcommand{\four}{\mathbf{4}}

\newcommand{\consvDashLuk}{\models^{\mathsf{cons}}_\Luk}
\newcommand{\consvDashCPL}{\models^{\mathsf{cons}}_\CPL}
\newcommand{\PropSol}{\mathcal{S}^\mathsf{p}}
\newcommand{\Sol}{\mathcal{S}}

\newcommand{\LukminSol}{\Sol^{\min}}
\newcommand{\ThminSol}{\Sol^\mathsf{Th}}
\newcommand{\interval}{\mathbf{int}}
\newcommand{\lineq}{\mathbb{LI}}


\newtheorem{theorem}{Theorem}
\newtheorem{proposition}{Proposition}

\newtheorem{example}{Example}
\newtheorem{definition}{Definition}
\newtheorem{convention}{Convention}
\theoremstyle{remark}
\newtheorem{remark}{Remark}



\pdfinfo{/TemplateVersion (KR.2022.0, KR.2023.0)}
\title{Complexity of Abduction in Łukasiewicz Logic}

\author{%
Katsumi Inoue$^1$\and
Daniil Kozhemiachenko$^2$\\
\affiliations
$^1$National Institute of Informatics, Tokyo, Japan\\
$^2$Aix Marseille Univ, CNRS, LIS, Marseille, France
\emails
inoue@nii.ac.jp, daniil.kozhemiachenko@lis-lab.fr}

\begin{document}
\maketitle
\begin{abstract}
We explore the problem of explaining observations in contexts involving statements with truth degrees such as ‘the lift is loaded’, ‘the symptoms are severe’, etc. To formalise these contexts, we consider infinitely-valued Łukasiewicz fuzzy logic~$\Luk$. We define and motivate the notions of abduction problems and explanations in the language of~$\Luk$ expanded with ‘interval literals’ of the form $p\geq\cmbf$, $p\leq\cmbf$, and their negations that express the set of values a~variable can have. We analyse the complexity of standard abductive reasoning tasks (solution recognition, solution existence, and re\-le\-vance~/ necessity of hypotheses) in~$\Luk$ for the case of the full language and for the case of theories containing only disjunctive clauses and show that in contrast to classical propositional logic, the abduction in the clausal fragment has lower complexity than in the general case.
\end{abstract}
\allowdisplaybreaks
\section{Introduction\label{sec:introduction}}
Abduction, deduction, and induction are three main forms of reasoning \cite{FlachKakas2000}. Abduction (finding explanations) has multiple applications in artificial intelligence, such as diagnosis \cite{ElAyebMarquisRusinowitch1993,JosephsonJosephson2009,Koitz-HristovWotawa2018}, commonsense reasoning~\cite{Paul1993,BhagavatulaLeBrasMalaviyaSakaguchiHoltzmanRashkinDowneyYihChoi2020}, formalisation of scientific reasoning~\cite{Magnani2001}, and machine learning~\cite{DaiXuYuZhou2019}. In \emph{logic-based abduction}~\cite{EiterGottlob1995}, the reasoning task is to find an explanation for an observation~$\chi$ from a~theory~$\Gamma$, i.e., a formula $\phi$ s.t.\ $\Gamma,\phi\models\chi$ but $\Gamma,\phi\not\models\bot$ (i.e., $\Gamma$ and $\phi$ should \emph{consistently entail} $\chi$).

Observe, however, that in many applications, one needs not only to state whether a~formula is true but also to specify \emph{to which degree} it holds. E.g., a~lift may be safe to use only when loaded to at least $5\%$ of its maximal capacity (otherwise, its software will register it as empty) and to no more than $90\%$ of the capacity. Or the cruising speed of a~car can be defined as $60\ldots90~\mathrm{kmh}$ (with the maximal speed being $150~\mathrm{kmh}$). As classical logic has only two values, it is not well-suited to reason about contexts involving truth degrees. On the other hand, \emph{fuzzy logics} evaluate formulas in the real-valued interval $[0,1]$ and thus are much better suitable to formalising such contexts than classical logic. In particular, to formalise the examples above, we can set $v(l)\geq0.05$ and $v(s)\in[0.4,0.6]$ where the truth degrees of $l$ and $s$ denote, respectively, the load of the lift and the speed of the car w.r.t. its maximal speed.
\paragraph{Fuzzy Logic}
Originally~\cite{Zadeh1965,Zadeh1975} fuzzy logics were introduced to reason about imprecise statements such as ‘it is cold outside’, ‘the symptoms are severe’, etc. The values between $[0,1]$ are interpreted as degrees of truth from $0$ (absolutely false) to $1$ (absolutely true). Fuzzy logic has also been applied to reasoning about uncertainty (cf., e.g.,~\cite{HajekTulipani2001} and~\cite{BaldiCintulaNoguera2020}) and beliefs~\cite{RodriguezTuytEstevaGodo2022}. In knowledge representation and reasoning, fuzzy logics have found multiple applications in representing graded and fuzzy ontologies~\cite{Straccia2016}. In such ontologies, concept assertions and terminological axioms have degrees of truth. Moreover, fuzzy versions of description logics and their computational properties have been extensively investigated~\cite{BorgwardtPenaloza2012,Borgwardt2014PhD,BorgwardtDistelPenaloza2014DL,BorgwardtPenaloza2017}.

In addition to that, fuzzy logic has found multiple applications in artificial intelligence. Recent work on machine learning tries to combine perception by deep learning and symbolic knowledge representation. Neurosymbolic frameworks such as~\cite{DiligentiGoriSacca2017,BadreddinedAvila-GarcezSerafiniSpranger2022} adopt semantics of fuzzy logic to support learning and reasoning in real-world domains. \cite{vanKriekenAcarvanHarmelen2022} analyse how different fuzzy logic semantics affect the behaviour of learning.  Fuzzy logic has also been used in reasoning problems with knowledge graphs~\cite{ChenHuSun2022} and MaxSAT~\cite{HanikovaManyaVidal2023}. Furthermore, t-norms (functions used to interpret conjunctions in fuzzy logic) are applied for autonomous driving with requirements~\cite{StoianGiunchigliaLukasiewicz2023}.
\paragraph{Abduction in Fuzzy Logic}
Abduction in different systems of fuzzy logic has long attracted attention. To the best of our knowledge, it was first presented by~\cite{YamadaMukaidono1995}. There, the authors formalised abduction problems in the infinitely-valued Łukasiewicz fuzzy logic~$\Luk$ and proposed explanations in the form of \emph{fuzzy sets}, i.e., assignments of values from $[0,1]$ to propositional variables. This approach was further expanded by~\cite{Vojtas1999} to G\"{o}del and Product fuzzy logics. Solutions to abduction problems in multiple fuzzy logics were further systematised by~\cite{dAllonnesAkdagBouchon-Meunier2007} and~\cite{ChakrabortyKonarPalJain2013}. Abduction in fuzzy logic has found multiple applications, in particular, diagnosis~\cite{MiyataFuruhashiUchikawa1998}, machine learning~\cite{BergadanoCutelloGunetti2000}, fuzzy logic programming~\cite{Vojtas2001,Ebrahim2001}, decision-making and learning in the presence of incomplete information~\cite{MellouliBouchon-Meunier2003,Tsypyschev2017}, and robot perception~\cite{Shanahan2005}.
\paragraph{Contributions}
Still, the complexity of fuzzy abduction largely remains unexplored. To the best of our knowledge, the only discussion is given by~\cite{Vojtas1999}. There, the author explores fuzzy abduction in definite logic programmes (i.e., sets of rules of the form $\langle B\leftarrow A_1,\ldots,A_n,x\rangle$ where each $A_i$ is an atom and $x$~belongs to the real-valued interval $[0,1]$). In particular, the author claims (cf.~\cite[\S4.4]{Vojtas1999} for details) that ‘[a]s linear programming is lying in $\np$ complexity class (even much lower) as prolog does, to find minimal solutions for a definite [fuzzy logic programming abduction problem] \ldots does not increase the complexity and remains in~$\np$’. Thus, there seems to be no formal study of the complexity of abduction in fuzzy logic.

In this paper, we make a~step towards a~study of the complexity of abduction in fuzzy logic. We concentrate on the Łukasiewicz logic as it is one of the most expressive ones. In particular, it can express rational numbers and continuous linear functions over $[0,1]$ (cf.~\cite{McNaughton1951} for details). In addition, it still retains many classical relations between its conjunction, disjunction, implication, and negation. We formalise abduction in~$\Luk$ and explore its computational properties. Our contribution is twofold. First, we propose and motivate a~new form of solutions --- \emph{interval terms} --- that allow to express intervals of values a~variable is permitted to have. Second, we establish an (almost) complete characterisation of the complexity of standard reasoning problems (solution recognition, solution existence, relevance and necessity of hypotheses) for the case of theories containing arbitrary formulas and those comprised of disjunctive clauses.
\paragraph{Plan of the Paper}
The paper is structured as follows. In Section~\ref{sec:Luk}, we present the Łukasiewicz logic. In Section~\ref{sec:intervalterms}, we propose and motivate interval terms that we will use as solutions to Łukasiewicz abduction problems. Sections~\ref{sec:Lukabduction} and~\ref{sec:simpleclauseabduction} are dedicated to the study of the complexity of abductive reasoning in the Łukasiewicz logic and its clausal fragment. Finally, we summarise our results and provide a~plan for future work in Section~\ref{sec:conclusion}. Due to limited space, some proofs are put into the appendix of the full version~\cite{InoueKozhemiachenko2025arxiv}.
\section{Łukasiewicz Logic\label{sec:Luk}}
We begin with the language of Łukasiewicz logic ($\Luk$). We fix a~countable set $\Prop$ of propositional variables and define $\LLuk$ via the following grammar.
\begin{align*}
\LLuk\ni\phi&\coloneqq p\in\Prop\mid\neg\phi\mid(\phi\odot\phi)\mid(\phi\oplus\phi)\mid(\phi\!\rightarrow\!\phi)
\end{align*}
\begin{convention}[Notation]\label{conv:notation}
We use the following shorthands:
\begin{align*}
\top&\coloneqq p\oplus\neg p&\bot&\coloneqq p\odot\neg p&\phi\!\leftrightarrow\!\chi&\coloneqq(\phi\!\rightarrow\!\chi)\!\odot\!(\chi\!\rightarrow\!\phi)
\end{align*}

For a~set of formulas $\Gamma$ and a~formula~$\phi$, we write $\Prop(\phi)$ and $\Prop[\Gamma]$ to denote the set of all variables occurring in $\phi$ and $\Gamma$, respectively.

We use $\Rmbb$ and $\Qmbb$ to denote the sets of real and rational numbers, respectively. When dealing with intervals, square brackets mean that the endpoint is included in the interval, and round brackets that it is excluded. Lower index~$_\Qmbb$ means that the interval contains rational numbers only. E.g.,
\begin{align*}
[\sfrac{1}{2},\sfrac{2}{3}]&=\{x\mid x\in\Rmbb, x\geq\sfrac{1}{2},x\leq\sfrac{2}{3}\}\\
(\sfrac{1}{2},\sfrac{2}{3}]_\Qmbb&=\{x\mid x\in\Qmbb,x>\sfrac{1}{2},x\leq\sfrac{2}{3}\}
\end{align*}
\end{convention}

The semantics of $\Luk$ is given in the next definition.
\begin{definition}[Semantics of Łukasiewicz logic]\label{def:Luksemantics}
An~\emph{$\Luk$-va\-lu\-a\-tion} is a~function $v:\Prop\rightarrow[0,1]$ extended to the complex formulas as follows:
\begin{align*}
v(\neg\phi)&=1-v(\phi)\\
v(\phi\odot\chi)&=\max(0,v(\phi)+v(\chi)-1)\\
v(\phi\oplus\chi)&=\min(1,v(\phi)+v(\chi))\\
v(\phi\!\rightarrow\!\chi)&=\min(1,1-v(\phi)+v(\chi))
\end{align*}

We say that $\phi\in\LLuk$ is \emph{$\Luk$-valid ($\Luk\models\phi$)} if $v(\phi)=1$ for every $\Luk$-valuation $v$; $\phi$ is \emph{$\Luk$-satisfiable} if $v(\phi)=1$ for some $\Luk$-valuation $v$.

We define two notions of equivalence --- \emph{strong equivalence ($\phi\equiv_\Luk\chi$)} and \emph{weak equivalence ($\phi\simeq_\Luk\chi$)}:
\begin{align*}
\phi\equiv_\Luk\chi&\text{ iff }\forall v:v(\phi)=v(\chi)\\
\phi\simeq_\Luk\chi&\text{ iff }\forall v:v(\phi)=1\leftrightharpoons v(\chi)=1
\end{align*}

Given a~finite $\Gamma\!\subset\!\LLuk$, \emph{$\Gamma$ entails $\chi$ in~$\Luk$ ($\Gamma\models_\Luk\chi$)} iff $v(\chi)\!=\!1$ in every $v$ s.t.\ $v(\phi)\!=\!1$ for all $\phi\!\in\!\Gamma$, 
and that \emph{$\Gamma$ consistently entails $\chi$ in~$\Luk$ ($\Gamma\!\consvDashLuk\!\chi$)} iff $\Gamma\not\models_\Luk\bot$ and $\Gamma\models_\Luk\chi$.
\end{definition}

We note some important semantical properties of $\Luk$. First, every connective \emph{behaves classically on $\{0,1\}$}: in particular, $\oplus$ behaves like disjunction and $\odot$ like conjunction. Thus, we will call $\oplus$ \emph{strong disjunction} and $\odot$~\emph{strong conjunction}.

Second, \emph{deduction theorem does not hold for $\models_\Luk$}: indeed, while $p\models_\Luk p\odot p$, it is easy to show that $\Luk\not\models p\rightarrow(p\odot p)$ by setting $v(p)=\frac{1}{2}$. Third, $\odot$ and $\oplus$ \emph{are not idempotent}: setting $v(p)\!=\!\frac{1}{2}$, we have that $v(p\!\odot\!p)\!=\!0$ and $v(p\!\oplus\!p)\!=\!1$.

Still, $\neg$, $\odot$, $\oplus$, and $\rightarrow$ interact in an expected manner. In particular, it is easy to check that $v(\top)=1$ and $v(\bot)=0$ for every valuation and that the following pairs of formulas are indeed \emph{strongly equivalent}:
\begin{align}
\neg(\phi\odot\chi)&\equiv_\Luk\neg\phi\oplus\neg\chi&\neg(\phi\oplus\chi)&\equiv_\Luk\neg\phi\odot\neg\chi\nonumber\\
\neg(\phi\!\rightarrow\!\chi)&\equiv_\Luk\phi\odot\neg\chi&\neg\phi\oplus\chi&\equiv_\Luk\phi\rightarrow\chi\label{equ:demorganproperties}\\
\neg\neg\phi&\equiv_\Luk\phi&(\phi\odot\chi)\!\rightarrow\!\psi&\equiv_\Luk\phi\!\rightarrow\!(\chi\!\rightarrow\!\psi)\nonumber
\end{align}

It is also important to observe that \emph{weak} conjunction ($\wedge$) and disjunction ($\vee$) are definable as follows:
\begin{align}
\phi\vee\chi&\coloneqq(\phi\rightarrow\chi)\rightarrow\chi&\phi\wedge\chi&\coloneqq\neg(\neg\phi\vee\neg\chi)\label{equ:weakconnectives}
\end{align}
Using Definition~\ref{def:Luksemantics}, one can recover semantics of $\wedge$ and $\vee$:
\begin{align*}
v(\phi\!\wedge\!\chi)&=\min(v(\phi),v(\chi))&v(\phi\!\vee\!\chi)&=\max(v(\phi),v(\chi))
\end{align*}
In what follows, we will write $\Gamma,\phi\models_\Luk\chi$ as a~shorthand for $\Gamma\cup\{\phi\}\models_\Luk\chi$. Similarly, if the set of premises is given explicitly, we omit brackets. E.g.,~$\phi,\chi\models_\Luk\psi$ stands for $\{\phi,\chi\}\models_\Luk\psi$. Note, furthermore, that the comma in the set of premises can be equivalently interpreted as $\wedge$ and $\odot$ as $(\phi\wedge\chi)\simeq_\Luk(\phi\odot\chi)$. Hence, for $\Gamma=\{\phi_1,\ldots,\phi_n\}$, we have
\begin{align*}
\Gamma\models_\Luk\chi&\text{ iff }\bigodot^n_{i=1}\phi_i\models_\Luk\chi\text{ iff }\bigwedge^n_{i=1}\phi_i\models_\Luk\chi
\end{align*}

We finish the section by recalling the complexity of $\Luk$. It is known~\cite[Theorem~3.4]{Mundici1987} that satisfiability of arbitrary formulas in $\Luk$ is $\np$-complete while validity and entailment are $\conp$-complete~\cite[Corollary~4.1.3]{Hanikova2011MFL2} just as in classical propositional logic ($\CPL$).
\begin{proposition}\label{prop:Luknpcomplete}~
\begin{enumerate}[noitemsep,topsep=1pt]
\item $\Luk$-satisfiability is $\np$-complete.
\item Entailment in $\Luk$ is $\conp$-complete.
\end{enumerate}
\end{proposition}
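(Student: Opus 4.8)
Both statements are available in the cited literature, but here is the route I would take to establish them directly. For item~1, I would get $\np$-membership by the standard translation of $\Luk$-satisfiability into linear feasibility after a nondeterministic guess. Given $\phi\in\LLuk$, introduce a real variable $x_\psi$ for every subformula $\psi$ of $\phi$, impose $x_p\in[0,1]$ for $p\in\Prop(\phi)$, and add the arithmetic constraints dictated by Definition~\ref{def:Luksemantics}: $x_{\neg\psi}=1-x_\psi$, $x_{\psi\odot\chi}=\max(0,x_\psi+x_\chi-1)$, $x_{\psi\oplus\chi}=\min(1,x_\psi+x_\chi)$, $x_{\psi\rightarrow\chi}=\min(1,1-x_\psi+x_\chi)$, together with $x_\phi=1$. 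The only nonlinear constraints are the $\min$'s and $\max$'s, of which there are at most $|\phi|$. Guess, for each of them, which argument is active (one bit each); this turns every $\min$/$\max$ into a linear equality plus a linear inequality (e.g.\ for $\odot$: either $x_{\psi\odot\chi}=0$ with $x_\psi+x_\chi\le1$, or $x_{\psi\odot\chi}=x_\psi+x_\chi-1$ with $x_\psi+x_\chi\ge1$), leaving a linear system with rational data of polynomial bit-size. One checks easily that $\phi$ is $\Luk$-satisfiable iff this system is feasible for some guess, and feasibility of such systems is decidable in polynomial time; hence $\Luk$-satisfiability is in $\np$. (Alternatively one may guess a rational model outright, using that a feasible vertex of the underlying polytope has coordinates of polynomial bit-size.)

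For $\np$-hardness of item~1 I would embed $\CPL$-satisfiability. Put $\beta_p\coloneqq p\leftrightarrow(p\odot p)$; a one-line computation shows $v(\beta_p)=1$ iff $v(p)\in\{0,1\}$ (indeed $v(\beta_p)=1-|v(p)-\max(0,2v(p)-1)|$, which is $1-v(p)$ when $v(p)\le\sfrac12$ and $v(p)$ when $v(p)>\sfrac12$). Given a $\CPL$-formula $\psi$ over $p_1,\dots,p_n$, rewrite it using only $\neg,\wedge,\vee$ — which by~(\ref{equ:weakconnectives}) and the min/max semantics agree with the classical connectives on $\{0,1\}$ — and set $\psi^\ast\coloneqq\psi\wedge\bigwedge_{i=1}^n\beta_{p_i}$. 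Any $\Luk$-valuation with $v(\psi^\ast)=1$ sends each $p_i$ into $\{0,1\}$ and hence restricts to a classical assignment satisfying $\psi$, and conversely a satisfying classical assignment is a satisfying $\Luk$-valuation of $\psi^\ast$. So $\psi^\ast$ is $\Luk$-satisfiable iff $\psi$ is $\CPL$-satisfiable, and $\psi\mapsto\psi^\ast$ is polynomial-time computable.

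For item~2 I would show that the complement, non-entailment, is in $\np$. Note $\Gamma\not\models_\Luk\chi$ holds iff there is a valuation $v$ with $v(\phi)=1$ for all $\phi\in\Gamma$ and $v(\chi)<1$. Build the linear system as above over the subformulas of $\Gamma\cup\{\chi\}$, add $x_\phi=1$ for every $\phi\in\Gamma$, and guess the $\min$/$\max$ branches; over the resulting polytope $P$, $x_\chi$ is a linear function, and since $x_\chi\le1$ holds under every valuation, $P$ meets $\{x_\chi<1\}$ iff $P\neq\emptyset$ and $\min_{P}x_\chi<1$. Both tests are carried out in polynomial time by linear programming, so non-entailment is in $\np$ and $\models_\Luk$ is in $\conp$. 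For $\conp$-hardness I would reduce from $\CPL$-validity (which is $\conp$-complete): reusing $\beta_p$ and the $\neg,\wedge,\vee$-rewriting, $\{\beta_{p_1},\dots,\beta_{p_n}\}\models_\Luk\psi$ holds iff every $\Luk$-valuation sending all $p_i$ into $\{0,1\}$ makes $\psi$ true, i.e.\ iff $\psi$ is a classical tautology.

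The one genuinely delicate point is item~2. Because the deduction theorem fails for $\models_\Luk$ (as noted right after Definition~\ref{def:Luksemantics}), $\Gamma\models_\Luk\chi$ is \emph{not} the validity of a single implication and cannot be reduced to (un)satisfiability of one formula in the naive way; one must keep the hypotheses in $\Gamma$ as separate equality constraints $x_\phi=1$ and deal with the strict requirement $x_\chi<1$ by minimising $x_\chi$ over the guessed polytope instead of by a single satisfiability test. Everything else is a routine linear-programming encoding together with the classical embedding via the Booleanising formula $\beta_p$.
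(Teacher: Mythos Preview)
The paper does not actually prove this proposition; it simply records it as known, citing Mundici for item~1 and H\'{a}nikov\'{a} for item~2. Your direct argument is correct and follows the standard route (nondeterministically resolve the $\min$/$\max$ branches and reduce to linear feasibility or optimisation in polynomial time). Two minor remarks. First, for the Booleanising gadget the paper itself later uses $p\vee\neg p$ (see Proposition~\ref{prop:CPLtoLuk}): since $v(p\vee\neg p)=\max(v(p),1-v(p))=1$ iff $v(p)\in\{0,1\}$, this is slightly lighter than your $\beta_p=p\leftrightarrow(p\odot p)$, though both work equally well for the reduction. Second, your treatment of the strict requirement $x_\chi<1$ in item~2 by minimising $x_\chi$ over the guessed polytope is exactly the right move; as you correctly observe, the failure of the deduction theorem means one cannot simply test $\Luk$-satisfiability of $\bigodot_{\phi\in\Gamma}\phi\odot\neg\chi$, since that would only detect witnesses with $v(\chi)=0$ rather than $v(\chi)<1$.
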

\section{Interval Terms\label{sec:intervalterms}}
Before proceeding to the formal presentation of abduction, let us introduce the terms that we will be using in solutions. Traditionally, the form of solutions is restricted to conjunctions of literals (\emph{terms}). In this case, a~solution corresponds to a~statement of facts. Moreover, in this case, the logically weakest solutions are the subset-minimal ones. We begin with Łukasiewicz counterparts of classical terms and clauses.
\begin{definition}[Simple literals, clauses, and terms]\label{def:simpleclauses}~
\begin{itemize}[noitemsep,topsep=1pt]
\item A~\emph{simple literal} is a~propositonal variable or its negation.
\item A~\emph{simple clause} is a~\emph{strong disjunction} of simple literals, i.e., a~formula of the form $\bigoplus^n_{i=1}l_i$ for some $n\in\Nmbb$.
\item A~\emph{simple term} is a~\emph{strong conjunction} of simple literals, i.e., a~formula of the form $\bigodot^n_{i=1}l_i$ for some $n\in\Nmbb$.
\end{itemize}
\end{definition}

One can observe, however, that \emph{simple terms} from Definition~\ref{def:simpleclauses} are too restrictive if we want to use them as solutions for abduction problems. Indeed, a~simple term $\tau$ has value~$1$ iff all its literals have value $1$. But in a~context with fuzzy propositions, we might need to express statements such as ‘$p$ has value $\frac{2}{3}$’ or ‘$q$ has value at least $\frac{1}{4}$’. The following example illustrates this situation.
\begin{example}\label{example:thresholds}
Assume that we have a~lift with a weight sensor that controls two indicators: green and blue. The green indicator is on when the weight sensor detects the load of \emph{at least $\frac{1}{4}$} of its maximal capacity. Blue light is on when the lift is loaded to \emph{at most $\frac{2}{3}$} of its maximal capacity. We see that both indicators are lit.

Let us now formalise this problem in $\Luk$. We interpret the value of $c$ as the percentage of capacity to which the lift is loaded and translate the condition ‘the lift is loaded to at least $\sfrac{1}{4}$ of its capacity’ as $c\oplus c\oplus c\oplus c$. Observe that $v(c\oplus c\oplus c\oplus c)=1$ iff $v(c)\geq\sfrac{1}{4}$. To represent the other condition ‘the lift is loaded to at most~$\sfrac{2}{3}$ of its capacity’, we write $\neg c\oplus\neg c\oplus\neg c$. 
We have that $v(\neg c\oplus\neg c\oplus\neg c)=1$ iff $v(\neg c)\geq\sfrac{1}{3}$, i.e., iff $v(c)\leq\sfrac{2}{3}$. Finally, we use $g$ and $b$ to represent that the green and blue lights are on. We obtain the following theory $\Gamma_\mathsf{lift}$ and observation $\chi_\mathsf{lift}$:
\begin{align*}
\Gamma_\mathsf{lift}&=\left\{\begin{matrix}(c\oplus c\oplus c\oplus c)\leftrightarrow g,(\neg c\oplus\neg c\oplus\neg c)\leftrightarrow b\end{matrix}\right\}
\\\chi_\mathsf{lift}&=g\odot b
\end{align*}

To explain why both indicators are on, we need to present a~formula $\phi$ s.t.\ $\Gamma_\mathsf{lift},\phi\consvDashLuk g\odot b$. For this, we need that $v(\neg c\oplus\neg c\oplus\neg c)=1$ and $v(c\oplus c\oplus c\oplus c)=1$ in every valuation that makes $\Gamma_\mathsf{lift}$ true. As we noted above, this requires that $v(c)\in[\frac{1}{4},\frac{2}{3}]$. 
On the other hand, a~simple term $\tau$ has value $1$ iff all its literals have value $1$ (i.e., all variables in $\tau$ should have value $0$ or $1$). Thus, there is no simple term $\tau$ containing $c$ or $\neg c$ s.t.\ $\Gamma_\mathsf{lift},\!\tau\!\consvDashLuk\!\chi$. Hence $\phi$ cannot be a~simple term.
\end{example}

One way to circumvent this problem is to adopt the proposal of~\cite{YamadaMukaidono1995} and~\cite{Vojtas1999} and define solutions to fuzzy abduction problems as sets of assignments of values to propositional variables. This approach, however, has a~drawback. In this setting, one can only express exact values of variables but not \emph{intervals} of their values. Now, observe from Example~\ref{example:thresholds} that any assignment of a~value from $[\frac{1}{4},\frac{2}{3}]$ to $c$ solves $\langle\Gamma_\mathsf{lift},\chi_\mathsf{lift}\rangle$. Thus, in the general case, it is impossible to generate the set of all solutions as there are infinitely many of them. 
Moreover, it may be problematic to choose between different values.

In this section, we propose an alternative. For that, we define terms that allow us to express both exact values of variables and their intervals and compare different solutions w.r.t.\ entailment in~$\Luk$. 
Moreover, as we will see in Section~\ref{sec:Lukabduction}, every abduction problem will have only finitely many solutions in our setting.
\begin{definition}[Rational interval literals, terms, and clauses]\label{def:intervalterms}
Let $p\!\in\!\Prop$, $\lozenge\!\in\!\{\leq,\geq,<,>\}$, and $\cmbf\in[0,1]_\Qmbb$.
\begin{itemize}[noitemsep,topsep=1pt]
\item A \emph{rational interval literal} has the form $p\lozenge\cmbf$ or $\neg(p\lozenge\cmbf)$. The semantics of rational interval literals is as follows:
\begin{align*}
v(p\lozenge\cmbf)&=
\begin{cases}
1&\text{if }v(p)\lozenge\cmbf\\
0&\text{otherwise}
\end{cases}
&
v(\neg(p\lozenge\cmbf))&=1-v(p\lozenge\cmbf)
\end{align*}
For a~rational interval literal $p\lozenge\cmbf$, we call $\cmbf$ its \emph{boundary value} and call the set $\{v(p)\mid v(p\lozenge\cmbf)=1\}$ its \emph{permitted values}.
\item A \emph{rational interval term} has the form $\bigodot^n_{i=1}\lambda_i$ with each $\lambda_i$ being an interval literal.
\item A \emph{rational interval clause} has the form $\bigoplus^n_{i=1}\lambda_i$ with each $\lambda_i$ being an interval literal.
\end{itemize}
\end{definition}
\begin{convention}\label{conv:intervallanguage}
We use $\LLukint$ to denote the language obtained from $\LLuk$ by expanding it with rational interval literals. We will mostly write ‘interval literals (terms, clauses)’ instead of ‘rational interval literals (terms, clauses)’. The notions of $\Luk$-validity, satisfiability and entailment are preserved from Definition~\ref{def:Luksemantics}. We will also utilise notation from Convention~\ref{conv:notation} for $\LLukint$. Additionally, given an interval literal $p\lozenge\cmbf$, we use $p\blacklozenge\cmbf$ to denote $\neg(p\lozenge\cmbf)$. Finally, given a~literal $\lambda$ and a~term or clause $\varrho$, we write $\lambda\in\varrho$ to designate that $\lambda$ occurs in $\varrho$.
\end{convention}

We observe that it follows from~\eqref{equ:demorganproperties} that interval clauses and terms are dual in the following sense:
\begin{align}\label{equ:intervaltermsintervalclausesduality}
\neg\bigodot\limits^n_{i=1}(p_i\lozenge\cmbf_i)\!\equiv_\Luk\!\bigoplus\limits^n_{i=1}(p_i\blacklozenge\cmbf_i)
&&
\neg\bigoplus\limits^n_{i=1}(p_i\lozenge\cmbf_i)\!\equiv_\Luk\!\bigodot\limits^n_{i=1}(p_i\blacklozenge\cmbf_i)
\end{align}

\begin{remark}\label{rem:intervalterms}
From Definition~\ref{def:intervalterms}, it is clear that interval terms generalise simple terms in the following sense: for every simple term $\tau$, there is an interval term $\tau^\lozenge$ s.t.\ $\tau\simeq_\Luk\tau^\lozenge$. Indeed, given $\tau=\bigodot^m_{i=1}p_i\odot\bigodot^n_{j=1}\neg q_j$, we can define $\tau^\lozenge\!=\!\bigodot^m_{i=1}(p_i\!\geq\!\mathbf{1})\!\odot\!\bigodot^n_{j=1}(q_j\!\leq\!\mathbf{0})$.
Moreover,
\begin{align*}
p\leq\cmbf\models_\Luk p\leq\cmbf'&\text{ iff }\cmbf\leq\cmbf'&p\geq\cmbf\models_\Luk p\geq\cmbf'&\text{ iff }\cmbf\geq\cmbf'
\end{align*}
and similarly for the literals of the form $p<\cmbf$ and $p>\cmbf$.
\end{remark}

The idea of interval terms comes from a~logic first introduced by~\cite{Pavelka1979FL1,Pavelka1979FL2,Pavelka1979FL3} as an extension of $\Luk$ with constants for every \emph{real number}. It turns out, however, (cf.~\cite[\S3.3]{Hajek1998} for details) that if one adds constants only for \emph{rational} numbers, the resulting logic (‘Rational Pavelka logic’ or $\RPL$) will have the same expressivity as the original one. 
To simulate the two-valued behaviour of interval terms, one also needs to introduce the ‘Delta operator’ $\triangle$ proposed by~\cite{Baaz1996} with the following semantics: $v(\triangle\phi)=1$ if $v(\phi)=1$ and $v(\triangle\phi)=0$, otherwise. Now, using the following equivalences
\begin{align*}
p\leq\cmbf&\equiv_\Luk\triangle(p\rightarrow\cmbf)&p\geq\cmbf&\equiv_\Luk\triangle(\cmbf\rightarrow p)\nonumber\\
p<\cmbf&\equiv_\Luk\neg\triangle(\cmbf\rightarrow p)&p>\cmbf&\equiv_\Luk\neg\triangle(p\rightarrow\cmbf)
\end{align*}
and expanding the constraint tableaux calculus of~\cite{Haehnle1999} with rules for $\triangle$ and rational constants, we have that the satisfiability and validity of $\LLukint$-formulas have the same complexity as those of~$\Luk$.
\begin{proposition}\label{prop:Lukintnpcomplete}~
\begin{enumerate}[noitemsep,topsep=1pt]
\item $\Luk$-satisfiability of $\LLukint$-formulas is $\np$-complete.
\item Entailment in~$\Luk$ of $\LLukint$-formulas is $\conp$-complete.
\end{enumerate}
\end{proposition}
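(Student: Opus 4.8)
The plan is to reduce satisfiability and entailment of $\LLukint$-formulas to the corresponding problems for $\RPL_\triangle$ (Rational Pavelka logic with the Baaz Delta), which are already known to be $\np$-complete and $\conp$-complete respectively, and then to note the trivial reductions in the other direction. Concretely, for the upper bounds I would take an arbitrary $\LLukint$-formula $\phi$ and replace, bottom-up, every interval literal occurring in it according to the four equivalences displayed just before the proposition, i.e.\ $p\leq\cmbf\rightsquigarrow\triangle(p\rightarrow\cmbf)$, $p\geq\cmbf\rightsquigarrow\triangle(\cmbf\rightarrow p)$, $p<\cmbf\rightsquigarrow\neg\triangle(\cmbf\rightarrow p)$, and $p>\cmbf\rightsquigarrow\neg\triangle(p\rightarrow\cmbf)$, where $\cmbf$ is now the rational truth constant of $\RPL$. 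Call the result $\phi^\ast$. One first checks that $v(\phi)=v(\phi^\ast)$ for every $\Luk$-valuation $v$ (the rational constant $\cmbf$ is interpreted by the number $\cmbf$, and the displayed equivalences are exactly the semantic identities $v(\triangle(p\rightarrow\cmbf))=1$ iff $v(p)\leq\cmbf$, etc.); hence $\phi$ is $\Luk$-satisfiable iff $\phi^\ast$ is $\RPL_\triangle$-satisfiable, and $\Gamma\models_\Luk\chi$ iff $\Gamma^\ast\models_{\RPL_\triangle}\chi^\ast$. The translation is clearly polynomial (each literal is replaced by a constant-size gadget, and the rational constants are already part of the input encoding of the interval literals). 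This gives membership in $\np$ and $\conp$.

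For the claim that satisfiability/validity of $\RPL_\triangle$ is itself in $\np$/$\conp$, I would invoke the constraint tableaux calculus of~\cite{Haehnle1999}: each branch of a tableau for an $\Luk$-formula yields a system of linear inequalities over the rationals, closed branches correspond to infeasible systems, and the formula is satisfiable iff some branch is feasible; a satisfying assignment can be guessed together with the branch and checked in polynomial time by linear arithmetic. Extending this to $\RPL_\triangle$ requires (i) rules for the rational constants $\cmbf$, which simply introduce the constraint $x_{\cmbf}=\cmbf$ into the linear system, and (ii) rules for $\triangle$: a signed formula $\triangle\psi\geq 1$ forces $\psi\geq 1$ on the branch, while $\triangle\psi\leq 0$ (equivalently $\triangle\psi<1$) branches into $\psi<1$, i.e.\ $\psi\leq 1-\varepsilon$ for a fresh strict inequality — this is the only non-linear phenomenon, but it is handled exactly as strict inequalities already are in the original calculus, and the $\triangle$ rules are non-branching on the positive side and binary on the negative side, so the branching factor and branch length stay polynomially bounded. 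Thus a branch plus a rational solution of its (polynomially sized) linear system is a polynomial-size witness, giving the $\np$ upper bound for satisfiability; the $\conp$ bound for entailment follows since $\Gamma\models_\Luk\chi$ fails iff $\bigodot\Gamma\odot\neg\triangle\chi$ (or the appropriate witnessing formula) is satisfiable, reducing non-entailment to satisfiability.

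The matching lower bounds are immediate: $\LLuk$ is a syntactic fragment of $\LLukint$, so $\np$-hardness of $\Luk$-satisfiability and $\conp$-hardness of $\Luk$-entailment (Proposition~\ref{prop:Luknpcomplete}) transfer verbatim.

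I expect the only real obstacle to be a careful treatment of the $\triangle$ rules and the strict inequalities they generate in the constraint-tableaux setting: one must make sure that introducing constraints of the form $x_\psi \le 1-\varepsilon$ together with the constant constraints $x_{\cmbf}=\cmbf$ keeps the resulting linear-programming feasibility test polynomial and sound/complete, i.e.\ that a branch is closed exactly when its system has no rational solution. This is essentially already done in~\cite{Haehnle1999} for $\Luk$ with strict inequalities and in the $\RPL$ literature for constants, so the argument amounts to checking that the two extensions compose without interaction; I would state the extended calculus, note soundness and completeness by the same branch-wise argument as in the cited work, and observe that the size bounds are preserved. Everything else is routine.
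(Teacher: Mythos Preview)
Your proposal is correct and follows essentially the same approach as the paper: the paper's justification for the proposition is precisely the paragraph preceding it, namely translate interval literals into $\RPL$ with Baaz $\triangle$ via the four displayed equivalences and extend H\"ahnle's constraint tableaux with rules for $\triangle$ and rational constants, with hardness inherited from Proposition~\ref{prop:Luknpcomplete}. You have simply fleshed out that sketch (the reduction of non-entailment to satisfiability via $\bigodot\Gamma\odot\neg\triangle\chi$ is a reasonable way to make item~2 explicit); the one minor inaccuracy is the remark that the $\triangle$ rule is ``binary on the negative side'' --- in fact $\triangle\psi<1$ yields $\psi<1$ without branching, so both $\triangle$ rules are linear, which only strengthens your size argument.
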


We note briefly that interval terms allow us to express not only the values of variables but also of arbitrary formulas  (cf.~\cite{Flaminio2007} for an alternative approach). For \mbox{$\phi\!\in\!\LLuk$} and $p\!\in\!\Prop$ s.t.\ $p\!\notin\!\Prop(\phi)$, $c\!\in\![0,1]_\Qmbb$, and $\lozenge\!\in\!\{\leq,<,\geq,>\}$, we have $v(\phi)\lozenge c$ iff \mbox{$v((\phi\leftrightarrow p)\odot p\lozenge\cmbf)=1$}.

We finish the section by establishing the complexity of the entailment of interval terms.
\begin{restatable}{proposition}{intervaltermcomplexity}\label{prop:intervaltermcomplexity}
Let $\Gamma\cup\{\chi\}\subseteq\LLukint$ be finite and $\tau$ and $\tau'$ be interval terms. Then the following statements hold.
\begin{enumerate}[noitemsep,topsep=1pt]
\item It takes polynomial time to decide whether $\tau\models_\Luk\tau'$.
\item It is $\conp$-complete to decide whether $\tau\models_\Luk\chi$.
\item It is $\conp$-complete to decide $\Gamma,\tau\models_\Luk\tau'$.
\end{enumerate}
\end{restatable}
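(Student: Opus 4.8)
The plan is to prove the three items in order of increasing difficulty, reusing Proposition~\ref{prop:Lukintnpcomplete} for the upper bounds of items 2 and 3.

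For item~1, I would first observe that $\tau\models_\Luk\tau'$ holds iff every valuation satisfying all literals of $\tau$ satisfies every literal of $\tau'$. The key point is that an interval term is, semantically, a conjunction of independent constraints on individual variables: for each variable $p$, the literals of $\tau$ mentioning $p$ carve out an interval (possibly half-open, possibly empty) $I_p\subseteq[0,1]$ of permitted values, and similarly $\tau'$ determines $I'_p$. Then $\tau\models_\Luk\tau'$ iff either some $I_p=\varnothing$ (i.e.\ $\tau$ is $\Luk$-unsatisfiable, detectable in polynomial time by checking the at most two relevant literals per variable for each variable occurring), or $I_p\subseteq I'_p$ for every variable $p$; and for variables occurring in $\tau'$ but not in $\tau$ we need $I'_p=[0,1]$, which cannot happen for a literal. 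Each such interval is computed from the boundary values (rationals given in the input) by taking maxima of lower bounds and minima of upper bounds, with care about strictness via the $\neg(p\lozenge\cmbf)$ literals; all arithmetic is on the input rationals, so this is polynomial. This is essentially the content of Remark~\ref{rem:intervalterms} lifted from single literals to terms.

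For item~2, membership in $\conp$ is immediate from Proposition~\ref{prop:Lukintnpcomplete}(2), since $\tau\in\LLukint$ and so $\tau\models_\Luk\chi$ is a special case of entailment in $\LLukint$. For $\conp$-hardness I would reduce from $\Luk$-unsatisfiability of an $\LLukint$-formula (equivalently, validity of its negation), which is $\conp$-complete by Proposition~\ref{prop:Lukintnpcomplete} --- or, more simply, reduce from $\CPL$-validity: given an arbitrary formula $\psi$, take $\tau$ to be the trivially satisfiable interval term $p\geq\mathbf{0}$ (or simply $\tau=p\geq\mathbf{0}$ on a fresh variable) and $\chi=\psi$; then $\tau\models_\Luk\chi$ iff $\Luk\models\chi$ (as $\tau$ is satisfied by valuations giving $p$ any value and the other variables arbitrary values), and $\conp$-hardness of $\Luk$-validity gives the result. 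One must double-check that making $\tau$ ``vacuous'' does not accidentally restrict valuations on the variables of $\chi$: using a fresh variable $p$ not in $\Prop(\chi)$ ensures the valuations of $\tau$ are exactly all valuations when projected to $\Prop(\chi)$, so the equivalence holds.

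For item~3, membership in $\conp$ again follows from Proposition~\ref{prop:Lukintnpcomplete}(2) applied to $\Gamma\cup\{\tau\}\models_\Luk\tau'$. Hardness is inherited from item~2 by taking $\tau$ vacuous and $\tau'$ a single fresh-variable literal that is forced: concretely, given an $\LLukint$ entailment instance $\Gamma\models_\Luk\chi$, pick a fresh $q\notin\Prop[\Gamma\cup\{\chi\}]$, put $\Gamma'=\Gamma\cup\{\chi\leftrightarrow q\}$, let $\tau$ be vacuous and $\tau'=(q\geq\mathbf{1})$; then $\Gamma',\tau\models_\Luk\tau'$ iff $\Gamma\models_\Luk\chi$. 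I expect the main obstacle to be item~1: one has to be careful that the polynomial-time interval-intersection argument correctly handles (i)~strict versus non-strict boundaries arising from literals $p<\cmbf$, $p>\cmbf$ and their negations, (ii)~the emptiness case, which makes $\tau$ unsatisfiable and hence $\tau\models_\Luk\tau'$ trivially true, and (iii)~variables appearing in $\tau'$ but not $\tau$. Once the reduction of an interval term to a family of one-variable interval constraints is made precise, the rest is routine rational arithmetic.
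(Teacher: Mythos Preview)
Your approach is essentially the same as the paper's: for item~1 both you and the paper compute, per variable, the interval of values permitted by the term and reduce entailment to inclusion of these intervals (with the unsatisfiable case handled separately); for items~2 and~3 both rely on Proposition~\ref{prop:Lukintnpcomplete} for membership and use simple reductions for hardness. The paper's reductions are slightly different in detail---it uses $p\geq\mathbf{1}$ on a fresh variable for item~2 and reduces item~3 directly from $\Luk$-unsatisfiability ($\phi$ is unsatisfiable iff $\phi,\,p\geq\mathbf{1}\models_\Luk q\geq\mathbf{1}$ for fresh $p,q$)---but your variants work equally well.

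One small slip to fix in item~1: your parenthetical claim that $I'_p=[0,1]$ ``cannot happen for a literal'' is false, since $p\geq\mathbf{0}$ and $p\leq\mathbf{1}$ are legitimate interval literals with full range; the correct statement is simply that the check ``$I'_p=[0,1]$?'' is itself polynomial, so the case of a variable in $\tau'$ but not in $\tau$ poses no difficulty. (Also, a term may contain more than two literals over the same variable, so ``at most two relevant literals per variable'' should be replaced by ``take the intersection of all literals over $p$'', which is still polynomial.)
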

\section{Abduction in Łukasiewicz Logic\label{sec:Lukabduction}}
Let us now present abduction in $\Luk$. Our idea is to use interval terms as solutions to problems $\Pmbb=\langle\Gamma,\chi,\Hmsf\rangle$. Here, $\Hmsf$ is the set of hypotheses (interval literals) that one can use to build solutions. One can restrict it in two ways. First, one may allow \emph{arbitrary} interval literals over a~given \emph{finite} set of variables. Second, one can explicitly define a~finite set of interval literals. We choose the second option, as the first one leads to \emph{infinite} sets of solutions (cf.~Example~\ref{example:thresholds}).
\begin{definition}[$\Luk$-abduction problems and solutions]\label{def:Lukabduction}~
\begin{itemize}[noitemsep,topsep=1pt]
\item An \emph{$\Luk$-abduction problem} is a~tuple $\Pmbb=\langle\Gamma,\chi,\Hmsf\rangle$ with $\Gamma\cup\{\chi\}$ a~finite set of $\LLukint$-formulas, 
and $\Hmsf$ a~finite set of \emph{interval literals}. We call $\Gamma$ a~\emph{theory}, $\chi$~an \emph{observation}, and members of $\Hmsf$ \emph{hypotheses}.
\item An \emph{$\Luk$-solution} of $\Pmbb$ is an~\emph{interval term} $\tau$ composed of hypotheses s.t.\ $\Gamma,\tau\consvDashLuk\chi$.
\item A solution is \emph{proper} if $\tau\not\models_\Luk\chi$.
\item A~proper solution $\tau$ is \emph{$\models_\Luk$-minimal (entailment-minimal)} if there is no proper solution $\sigma$ s.t.\ $\tau\models_\Luk\sigma$ and $\sigma\not\simeq_\Luk\tau$.
\item A~proper solution $\tau$ is \emph{theory-minimal} if there is no proper solution $\sigma$ s.t.\ $\Gamma,\sigma\not\models_\Luk\tau$ and $\Gamma,\tau\models_\Luk\sigma$.
\end{itemize}
\end{definition}
\begin{convention}
Given an abduction problem $\Pmbb$, we will use $\Sol(\Pmbb)$, $\PropSol(\Pmbb)$, $\LukminSol(\Pmbb)$, and $\ThminSol(\Pmbb)$ to denote the sets of all solutions, all proper solutions, all $\models_\Luk$-minimal solutions, and all theory-minimal solutions of $\Pmbb$, respectively.
\end{convention}

In the definition above, it is evident that there are finitely many (at most exponentially many in the size of $\Hmsf$) solutions for each abduction problem. We also present two notions of minimal solutions. Entailment-minimality corresponds to \emph{subset-minimality} by~\cite{EiterGottlob1995} 
in the setting of Łukasiewicz logic.
Theory-minimal solutions correspond to \emph{least specific} solutions in the terminology of~\cite{Stickel1990,SakamaInoue1995} and \emph{least presumptive} solutions in the terminology of~\cite{Poole1989}. Theory-minimal solutions can also be seen as duals of \emph{theory prime implicates} by~\cite{Marquis1995}.

In addition, it is easy to see that even though a~theory-minimal solution is entailment-minimal, the converse is not always the case. Indeed, let $\Pmbb=\langle\{p\vee q,r\},q\wedge r\rangle$. One can see that there are two entailment-minimal solutions: $p\leq\mathbf{0}$ and~$q\geq\mathbf{1}$. Note, however, that $p\vee q,r,p\leq\mathbf{0}\models_\Luk q\geq\mathbf{1}$. Thus, $p\leq\mathbf{0}$ \emph{is not theory-minimal}.

Let us now see how we can solve abduction problems using interval terms. Recall Example~\ref{example:thresholds}.
\begin{example}\label{example:thresholdssolution}
We continue Example~\ref{example:thresholds}. We need to formulate an abduction problem $\Pmbb_\mathsf{lift}=\langle\Gamma_\mathsf{lift},\chi_\mathsf{lift},\Hmsf_\mathsf{lift}\rangle$. $\Gamma_\mathsf{lift}$ and $\chi_\mathsf{lift}$ are already given in Example~\ref{example:thresholds}. It remains to form the set of hypotheses we are allowed to use. Assume for simplicity that we can measure the load of our lift in twelfths of its capacity. Thus, we can set
\begin{align*}
\Hmsf_\mathsf{lift}&=\{c\lozenge\mathbf{\tfrac{i}{12}}\mid\lozenge\in\{\leq,\geq,<,>\}\text{ and }i\in\{0,\ldots,12\}\}
\end{align*}
It is now easy to check that $(c\geq\mathbf{\tfrac{3}{12}})\odot(c\leq\mathbf{\tfrac{8}{12}})$ is indeed the theory-minimal solution of $\Pmbb_\mathsf{lift}$. Moreover, as expected,
\begin{align*}
\Sol(\Pmbb_\mathsf{lift})&=\left\{(c\!\triangleright\!\mathbf{\tfrac{i}{12}})\!\odot\!(c\!\triangleleft\!\mathbf{\tfrac{i'}{12}})\left|\begin{matrix}~\triangleleft\!\in\!\{\leq,<\},~\triangleright\!\in\!\{\geq,>\},\\[.3em]~\imbf\!\geq\!3,~\imbf'\!\leq\!8,~\imbf\!<\!\imbf'\end{matrix}\right.\!\right\}\cup\\[.4em]
&\quad~~\{(c\leq\mathbf{\tfrac{i}{12}})\odot(c\geq\mathbf{\tfrac{i}{12}})\mid3\leq\imbf\leq8\}
\end{align*}
That is, given any interval $\Dmc\subseteq\left[\frac{1}{4},\frac{2}{3}\right]$, every interval term~$\tau$ s.t.\ $v(\tau)=1$ iff $v(c)\in\Dmc$ is a~solution to $\Pmbb_\mathsf{lift}$.
\end{example}

It is also important to observe that the set of solutions depends not only on the variables in $\Hmsf$ but on their permitted values as well. Indeed, if in Example~\ref{example:thresholdssolution}, we supposed that we can measure the load in \emph{tenths} of the maximal capacity (i.e., if $\Hmsf=\{c\lozenge\mathbf{\tfrac{i}{10}}\mid\lozenge\in\{\leq,\geq,<,>\}\text{ and }i\in\{0,\ldots,10\}\}$), the theory-minimal solution would be $(c\!\geq\!\mathbf{\tfrac{3}{10}})\!\odot\!(c\!\leq\!\mathbf{\tfrac{6}{10}})$. Hence, to test a~solution $\tau$ for minimality, we need to calculate the nearest permitted value from the boundary value for every interval literal in~$\tau$.

Let us now establish the complexity of abductive reasoning. We will consider three standard tasks:
\begin{itemize}[noitemsep,topsep=1pt]
\item \emph{solution recognition} --- given a~problem $\Pmbb$ and an interval term~$\tau$, determine whether $\tau$ is a~(proper, entailment-minimal, or theory-minimal) solution;
\item \emph{solution existence} --- given a~problem $\Pmbb$, determine whether $\Sol(\Pmbb)=\varnothing$;
\item \emph{relevance and necessity of hypotheses} --- given a~problem $\Pmbb$ and a~hypothesis $\lambda$, determine whether there is a~solution where it occurs and whether it occurs in all solutions.
\end{itemize}

In our proofs, we will use reductions from the \emph{classical} abductive reasoning. In the following definition, we recall the notion of classical abduction problem and solutions. We adapt the definitions from~\cite{EiterGottlob1995} and~\cite{CreignouZanuttini2006} for our notation.  We use terminology and notation for $\CPL$ analogous to ones introduced for~$\Luk$, e.g.,\ speaking about $\CPL$-validity and using $\models_\CPL$ for the classical entailment relation.
\begin{definition}\label{def:CPLabductiveproblem}
Let $\LCPL$ be the propositional language over $\{\neg,\wedge,\vee\}$. A~\emph{classical abduction problem} is a~tuple $\Pmbb=\langle\Gamma,\chi,\Hmsf\rangle$ s.t.\ $\Gamma\!\cup\!\{\chi\}\!\subseteq\!\LCPL$ and $\Hmsf$ is a~set of simple literals.
\begin{itemize}[noitemsep,topsep=1pt]
\item A~\emph{solution} of $\Pmbb$ is a~weak conjunction $\tau$ of literals from $\Hmsf$ such that 
$\Gamma,\tau\consvDashCPL\psi$.
\item A~solution $\tau$ is \emph{proper} if $\tau\not\models_\CPL\psi$.
\item A~proper solution $\tau$ is \emph{$\models_\CPL$-minimal} if there is no proper solution $\phi$ s.t.\ $\tau\!\models_\CPL\!\phi$ and $\phi\not\models_\CPL\tau$.
\item A~proper solution $\tau$ is \emph{theory-minimal} if there is no proper solution $\phi$ s.t.\ $\Gamma,\tau\models_\CPL\phi$ and $\Gamma,\phi\not\models_\CPL\tau$.
\end{itemize}
\end{definition}

We will also need the following technical statement.
\begin{definition}\label{def:multiplicativetranslation}
Let $\phi\in\LCPL$. We define $\phi^\Luk$ as follows:
\begin{align*}
p^\Luk&=p&(\neg\phi)^\Luk&=\neg\phi^\Luk\\
(\phi\wedge\chi)^\Luk&=\phi^\Luk\odot\chi^\Luk&(\phi\vee\chi)^\Luk&=\phi^\Luk\oplus\chi^\Luk
\end{align*}
Given $\Gamma\subseteq\LCPL$, we set $\Gamma^\Luk=\{\phi^\Luk\mid\phi\in\Gamma\}$.
\end{definition}
\begin{restatable}{proposition}{CPLtoLuk}\label{prop:CPLtoLuk}
Let $\Gamma\cup\{\chi\}\subseteq\LCPL$. Then
\begin{align*}
\Gamma\models_\CPL\chi&\text{ iff }\Gamma^\Luk,\{p\vee\neg p\mid p\in\Prop[\Gamma\cup\{\chi\}]\}\models_\Luk\chi^\Luk
\end{align*}
\end{restatable}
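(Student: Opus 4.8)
The plan is to prove the biconditional by relating $\Luk$-valuations that satisfy the "classicality constraints" $\{p\vee\neg p\mid p\in\Prop[\Gamma\cup\{\chi\}]\}$ to classical truth assignments. The key observation is that in $\Luk$, $v(p\vee\neg p)=\max(v(p),1-v(p))=1$ if and only if $v(p)\in\{0,1\}$. So the $\Luk$-valuations that satisfy all of $\{p\vee\neg p\mid p\in X\}$ are exactly those that are two-valued (i.e.\ classical) on $X$; on such valuations, by the "behaves classically on $\{0,1\}$" property already noted after Definition~\ref{def:Luksemantics}, $\oplus$ acts as classical $\vee$, $\odot$ as classical $\wedge$, and $\neg$ as classical negation.

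First I would prove the following bridging lemma by structural induction on $\phi\in\LCPL$: for every $\Luk$-valuation $v$ that is two-valued on $\Prop(\phi)$, we have $v(\phi^\Luk)=v_{\CPL}(\phi)$, where $v_{\CPL}$ is the classical assignment given by restricting $v$ to $\Prop(\phi)$ (and $v(\phi^\Luk)\in\{0,1\}$). The base case $p^\Luk=p$ is immediate; the negation case uses $v(\neg\psi^\Luk)=1-v(\psi^\Luk)$ with $v(\psi^\Luk)\in\{0,1\}$ by the induction hypothesis; the $\wedge$ and $\vee$ cases use that on arguments in $\{0,1\}$, $\max(0,x+y-1)$ and $\min(1,x+y)$ coincide with classical conjunction and disjunction. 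Conversely, every classical assignment $w$ on a finite set $X\supseteq\Prop[\Gamma\cup\{\chi\}]$ is literally an $\Luk$-valuation (extended arbitrarily, say by $0$, outside $X$) that is two-valued on $X$, so the correspondence is onto in both directions.

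With the lemma in hand, the two directions follow. For the forward direction, suppose $\Gamma\models_\CPL\chi$ and let $v$ be any $\Luk$-valuation with $v(\psi^\Luk)=1$ for all $\psi\in\Gamma$ and $v(p\vee\neg p)=1$ for all $p\in\Prop[\Gamma\cup\{\chi\}]$. The latter forces $v$ to be two-valued on $\Prop[\Gamma\cup\{\chi\}]$, so by the lemma the induced classical assignment satisfies every $\psi\in\Gamma$, hence satisfies $\chi$, hence (lemma again, applied to $\chi$) $v(\chi^\Luk)=1$. For the converse, suppose the $\Luk$-entailment holds and let $w$ be a classical assignment (on $\Prop[\Gamma\cup\{\chi\}]$) satisfying $\Gamma$; view it as an $\Luk$-valuation, which trivially satisfies all $p\vee\neg p$ and, by the lemma, all $\psi^\Luk$, so $w(\chi^\Luk)=v(\chi^\Luk)=1$, whence by the lemma $w\models_\CPL\chi$.

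I do not expect a genuine obstacle here: the only point requiring mild care is bookkeeping about which variable set the valuations are required to be two-valued on — one must include all of $\Prop[\Gamma\cup\{\chi\}]$ (not merely $\Prop[\Gamma]$) so that the lemma applies to $\chi$ as well, and one must note that variables outside $\Prop[\Gamma\cup\{\chi\}]$ are irrelevant to both entailments and may be assigned freely. Everything else is the routine induction sketched above together with the already-recorded fact that $\Luk$-connectives restrict to classical ones on $\{0,1\}$.
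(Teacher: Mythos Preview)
Your proposal is correct and follows essentially the same approach as the paper: both hinge on the observation that $v(p\vee\neg p)=1$ iff $v(p)\in\{0,1\}$, together with the fact that $\odot$, $\oplus$, and $\neg$ behave classically on $\{0,1\}$, so that $\Luk$-valuations satisfying the classicality constraints correspond exactly to classical assignments on which $\phi^\Luk$ and $\phi$ agree. The only cosmetic difference is that the paper argues both directions by contraposition (a countermodel for one entailment yields a countermodel for the other) whereas you argue them directly, and you spell out the structural induction that the paper leaves implicit.
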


The complexity results from this section are in Table~\ref{tab:complexitysolutions}.
\begin{table}
\centering
\begin{tabular}{lcc}
Recognition and existence & $\Luk$& $\CPL$ \\ \midrule 
$\tau\in\Sol(\Pmbb)$? / $\tau\in\PropSol(\Pmbb)$? / $\tau\in\LukminSol(\Pmbb)$? & $\DP$& $\DP$ \\[.2em]
$\tau\in\ThminSol(\Pmbb)$?& in $\Pi^\Pmsf_2$& in $\Pi^\Pmsf_2$\\[.2em]
$\Sol(\Pmbb)\neq\varnothing$? / $\PropSol(\Pmbb)\neq\varnothing$?& $\Sigma^\Pmsf_2$& $\Sigma^\Pmsf_2$\\[.75em]
Relevance& $\Luk$& $\CPL$ \\\midrule
w.r.t.\ $\Sol(\Pmbb)$, $\PropSol(\Pmbb)$, $\LukminSol(\Pmbb)$& $\Sigma^\Pmsf_2$& $\Sigma^\Pmsf_2$\\[.2em]
w.r.t.\ $\ThminSol(\Pmbb)$ & in $\Sigma^\Pmsf_3$& in $\Sigma^\Pmsf_3$\\[.75em]
Necessity& $\Luk$ & $\CPL$ \\\midrule
w.r.t.\ $\Sol(\Pmbb)$, $\PropSol(\Pmbb)$, $\LukminSol(\Pmbb)$ & $\Pi^\Pmsf_2$& $\Pi^\Pmsf_2$\\[.2em]
w.r.t.\ $\ThminSol(\Pmbb)$ & in $\Pi^\Pmsf_3$& in $\Pi^\Pmsf_3$
\end{tabular}
\caption{Complexity of abductive reasoning problems. Unless specified otherwise, all results are completeness results.}
\label{tab:complexitysolutions}
\end{table}
\subsection{Solution Recognition\label{ssec:solutionrecognition}}
We begin with solution recognition. First, we show that recognition of arbitrary, proper, and entailment-minimal solutions is $\DP$-complete.
\begin{theorem}\label{theorem:arbitrarysolutionrecognitioncomplexity}
Let $\Pmbb=\langle\Gamma,\chi,\Hmsf\rangle$ be an $\Luk$-abduction problem and $\tau$ an interval term. Then, it is $\DP$-complete to decide whether $\tau\in\Sol(\Pmbb)$.
\end{theorem}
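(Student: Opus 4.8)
I would prove membership in $\DP$ and $\DP$-hardness separately. For membership, recall that $\DP=\{L_1\cap L_2:L_1\in\np,\ L_2\in\conp\}$. Unfolding Definition~\ref{def:Lukabduction}, $\tau\in\Sol(\Pmbb)$ holds iff (i) $\tau$ is an interval term every literal of which belongs to $\Hmsf$, (ii) $\Gamma\cup\{\tau\}$ is $\Luk$-satisfiable (equivalently $\Gamma,\tau\not\models_\Luk\bot$), and (iii) $\Gamma,\tau\models_\Luk\chi$. Condition (i) is a polynomial-time syntactic check; (ii) is $\Luk$-satisfiability of the single $\LLukint$-formula $\tau\wedge\bigwedge_{\phi\in\Gamma}\phi$, hence in $\np$ by Proposition~\ref{prop:Lukintnpcomplete}; (iii) is entailment in $\Luk$ of an $\LLukint$-formula, hence in $\conp$ by the same proposition. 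Taking $L_1$ to be the set of instances $\langle\Pmbb,\tau\rangle$ passing (i) and (ii), and $L_2$ the set passing (iii), solution recognition equals $L_1\cap L_2$ and so lies in $\DP$.

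\textbf{Hardness.} I would reduce from solution recognition for \emph{classical} propositional abduction, which is $\DP$-complete~\cite{EiterGottlob1995}. Given a classical problem $\Pmbb=\langle\Gamma,\chi,\Hmsf\rangle$ (Definition~\ref{def:CPLabductiveproblem}) and a candidate $\tau=\bigwedge_{i\le n}l_i$ with each $l_i\in\Hmsf$, map each simple literal $l$ to the interval literal $l^\lozenge$ given by $p^\lozenge=(p\ge\mathbf{1})$ and $(\neg p)^\lozenge=(p\le\mathbf{0})$ (as in Remark~\ref{rem:intervalterms}), and build the $\Luk$-abduction problem $\Pmbb^\Luk=\langle\Gamma^\Luk\cup\Delta,\ \chi^\Luk,\ \Hmsf^\lozenge\rangle$, where $\cdot^\Luk$ is the translation of Definition~\ref{def:multiplicativetranslation}, $\Hmsf^\lozenge=\{l^\lozenge:l\in\Hmsf\}$, and $\Delta=\{p\vee\neg p:p\in\Prop[\Gamma\cup\{\chi\}]\cup\Prop[\Hmsf]\}$ is a classicality gadget (using that $v(p\vee\neg p)=1$ iff $v(p)\in\{0,1\}$). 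Send $\tau$ to the interval term $\tau^\lozenge=\bigodot_{i\le n}l_i^\lozenge$. This transformation is computable in polynomial time, and $\tau^\lozenge$ is an interval term composed of hypotheses from $\Hmsf^\lozenge$.

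\textbf{Correctness and the main obstacle.} It then remains to show $\tau\in\Sol(\Pmbb)$ iff $\tau^\lozenge\in\Sol(\Pmbb^\Luk)$, which I would do by matching the consistency requirement and the entailment requirement separately. The recurring facts are: every $\Luk$-connective behaves classically on $\{0,1\}$, so $v(\phi^\Luk)=v(\phi)$ whenever $v$ is $\{0,1\}$-valued; a strong conjunction of interval literals has value $1$ exactly when all of its conjuncts do; and $v(p\ge\mathbf{1})=1\iff v(p)=1$ while $v(p\le\mathbf{0})=1\iff v(p)=0$ — so $\tau^\lozenge$ has value $1$ precisely at those classical valuations satisfying $\tau$, and it already pins its own variables to $\{0,1\}$. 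From these, the identity $v\mapsto v$ identifies the $\{0,1\}$-valued models of $\Gamma\cup\{\tau\}$ with the $\Luk$-models of $\Gamma^\Luk\cup\Delta\cup\{\tau^\lozenge\}$, yielding the consistency equivalence; and for entailment I would apply Proposition~\ref{prop:CPLtoLuk} to the premise set $\Gamma\cup\{\tau\}\subseteq\LCPL$, replace $\tau^\Luk$ by the weakly equivalent interval term $\tau^\lozenge$ (Remark~\ref{rem:intervalterms}) — legitimate since $\models_\Luk$ depends only on value $1$ — and note that the extra members of $\Delta$ on variables outside $\Prop[\Gamma\cup\{\tau\}\cup\{\chi\}]$ affect neither side, as any model can be patched to give such a fresh variable a value in $\{0,1\}$. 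Combining the two equivalences gives the biconditional for $\consvDashCPL$ / $\consvDashLuk$, and $\DP$-completeness of the classical problem transfers. The argument is essentially bookkeeping; the one delicate point — the expected main obstacle — is getting the classicality gadget $\Delta$ exactly right: tight enough that no non-$\{0,1\}$ valuation spuriously satisfies $\Gamma^\Luk\cup\{\tau^\lozenge\}$ or falsifies $\chi^\Luk$, yet loose enough on variables irrelevant to the particular $\tau$ that it disturbs neither satisfiability nor entailment.
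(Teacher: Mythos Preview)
Your proposal is correct and follows essentially the same approach as the paper: membership via Proposition~\ref{prop:Lukintnpcomplete} by splitting into an $\np$ satisfiability check and a $\conp$ entailment check, and hardness via the same reduction from classical solution recognition using the $^\Luk$ translation of Definition~\ref{def:multiplicativetranslation} together with the classicality gadget $\{p\vee\neg p\}$ and the interval encoding $p\mapsto p\geq\mathbf{1}$, $\neg p\mapsto p\leq\mathbf{0}$. The only cosmetic difference is that you add $\Prop[\Hmsf]$ to the gadget while the paper restricts it to $\Prop[\Gamma\cup\{\chi\}]$; as you yourself note, the interval literals $p\geq\mathbf{1}$ and $p\leq\mathbf{0}$ already pin their variables to $\{0,1\}$, so either choice works.
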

\begin{proof}
The membership follows immediately from Proposition~\ref{prop:Lukintnpcomplete} and the fact that $\tau\in\Sol(\Pmbb)$ iff $\Gamma,\tau\consvDashLuk\chi$. For the hardness, we provide a~reduction from the classical solution recognition which is $\DP$-complete~\cite[\S4]{EiterGottlob1995}. Let now $\Pmbb_\cl=\langle\Delta,\psi,\Hmsf^\cl\rangle$ be a~\emph{classical} abduction problem. Define $\Pmbb^\Luk=\langle\Delta^\sharp,\psi^\Luk,\Hmsf^\sharp\rangle$ as follows:
\begin{align*}
\Delta^\sharp&=\Delta^\Luk\cup\{p\vee\neg p\mid p\in\Prop[\Delta\cup\{\psi\}]\}\\
\Hmsf^\sharp&=\{p\geq\mathbf{1}\mid p\in\Hmsf^\cl\}\cup\{q\leq\mathbf{0}\mid q\in\Hmsf^\cl\}
\end{align*}
Furthermore, for $\tau=\bigwedge^m_{i=1}p_i\wedge\bigwedge^n_{j=1}\neg q_j$, 
we define $\tau^\odot=\bigodot^m_{i=1}(p_i\geq\mathbf{1})\odot\bigodot^n_{j=1}(q_j\leq\mathbf{0})$ 
and show that $\tau\in\Sol(\Pmbb_\cl)$ iff $\tau^\odot\in\Sol(\Pmbb^\Luk)$.

Assume that $\tau\in\Sol(\Pmbb_\cl)$, i.e., $\Delta,\tau\consvDashCPL\psi$ and let $v$ be a~\emph{classical} valuation s.t.\ $v(\phi)=1$ for every $\phi\in\Delta$ and $v(\tau)=1$. Since $\odot$ and $\oplus$ behave on $\{0,1\}$ the same as $\wedge$ and $\vee$, it is clear that $v(\chi)=1$ for every $\chi\in\Delta^\sharp$ and $v(\tau^\odot)=1$ as well. Now assume further for the sake of contradiction that there is some $\Luk$-valuation $v^\Luk$ s.t.\ \mbox{$v^\Luk(\chi)=1$} for every $\chi\in\Delta^\sharp$, $v^\Luk(\tau^\odot)=1$, but $v(\psi^\Luk)\neq1$. But $v^\Luk$ must be \emph{classical}, i.e., assign only values from $\{0,1\}$ to all $p\!\in\!\Prop[\Delta\!\cup\!\{\psi\}]$ because $p\!\vee\!\neg p\!\in\!\Delta^\sharp$ and $v^\Luk(p\!\vee\!\neg p)\!=\!1$ iff $v(p)\in\{0,1\}$. This would mean that $v^\Luk$ witnesses $\Delta,\tau\not\models_\CPL\psi$, contrary to the assumption.

For the converse direction, given an interval term $\tau$, define $\tau_\cl\!=\!\bigwedge_{p\!\geq\!\mathbf{1}\in\tau}\!p\wedge\!\bigwedge_{q\!\leq\!\mathbf{0}\in\tau}\!\neg q$. One can check that $\tau\!\in\!\Sol(\Pmbb^\Luk)$ iff $\tau_\cl\!\in\!\Sol(\Pmbb)$.
\end{proof}
\begin{theorem}\label{theorem:propersolutionrecognitioncomplexity}
Let $\Pmbb=\langle\Gamma,\chi,\Hmsf\rangle$ be an $\Luk$-abduction problem and $\tau$ an interval term. Then, it is $\DP$-complete to decide whether $\tau\in\PropSol(\Pmbb)$.
\end{theorem}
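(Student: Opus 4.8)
The plan is to prove $\DP$-completeness of recognizing proper solutions by reusing the machinery of the previous theorem. For membership, recall that $\tau\in\PropSol(\Pmbb)$ iff $\tau\in\Sol(\Pmbb)$ and $\tau\not\models_\Luk\chi$. The first conjunct is in $\DP$ by Theorem~\ref{theorem:arbitrarysolutionrecognitioncomplexity}, and the second is a non-entailment check, hence in $\np$ by Proposition~\ref{prop:Lukintnpcomplete}. Since $\DP$ is closed under intersection with $\np$ (indeed $\DP$ is closed under finite conjunctions of languages each in $\np\cup\conp$), the whole problem sits in $\DP$. I would state this carefully: $\tau\in\Sol(\Pmbb)$ unfolds as the conjunction of $\Gamma,\tau\not\models_\Luk\bot$ (an $\np$ condition) and $\Gamma,\tau\models_\Luk\chi$ (a $\conp$ condition); adding the third $\np$ condition $\tau\not\models_\Luk\chi$ keeps us within $\DP$ because the $\np$ conditions can be bundled together.

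For hardness, I would adapt the reduction from the proof of Theorem~\ref{theorem:arbitrarysolutionrecognitioncomplexity}, starting from \emph{classical} proper solution recognition, which is $\DP$-complete by~\cite[\S4]{EiterGottlob1995}. Given a classical abduction problem $\Pmbb_\cl=\langle\Delta,\psi,\Hmsf^\cl\rangle$ and a candidate classical term $\tau$, I build $\Pmbb^\Luk=\langle\Delta^\sharp,\psi^\Luk,\Hmsf^\sharp\rangle$ and $\tau^\odot$ exactly as before. The established correspondence already gives $\tau\in\Sol(\Pmbb_\cl)$ iff $\tau^\odot\in\Sol(\Pmbb^\Luk)$; what remains is to show the properness conditions match, i.e.\ $\tau\not\models_\CPL\psi$ iff $\tau^\odot\not\models_\Luk\psi^\Luk$. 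Both directions follow from the same observation used throughout: any $\Luk$-valuation appearing in the reduction can be taken classical on the relevant variables (because $p\vee\neg p$ is forced, or more directly because $\tau^\odot$ pins each of its variables to $0$ or $1$, and the only variables of $\psi^\Luk$ are among those of $\tau$ together with fresh ones that are irrelevant), and on classical valuations $\phi^\Luk$ and $\phi$ agree by Proposition~\ref{prop:CPLtoLuk}. So a classical countermodel to $\tau\models_\CPL\psi$ lifts to an $\Luk$-countermodel to $\tau^\odot\models_\Luk\psi^\Luk$, and conversely any $\Luk$-countermodel restricts to a classical one. Hence $\tau^\odot\in\PropSol(\Pmbb^\Luk)$ iff $\tau\in\PropSol(\Pmbb_\cl)$, completing the reduction.

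The main obstacle, such as it is, lies in the properness-equivalence step: one must be careful that $\psi^\Luk$ may contain variables not occurring in $\tau$, so $\tau^\odot$ does not constrain them and an $\Luk$-valuation witnessing $\tau^\odot\not\models_\Luk\psi^\Luk$ need not a priori be two-valued everywhere. The fix is to note that it suffices to consider valuations two-valued on $\Prop(\psi)$, since by McNaughton-style reasoning (or simply because $\psi^\Luk$ is built only from $\neg,\odot,\oplus$ and each connective is monotone in each argument) if $v(\psi^\Luk)<1$ for some $v$ then rounding each $v(p)$ for $p\in\Prop(\psi)$ to a suitable endpoint in $\{0,1\}$ still yields value $<1$ — alternatively, and more cleanly, one restricts attention from the start to the reduction where we additionally throw $\{p\vee\neg p\mid p\in\Prop(\psi)\}$ into $\Delta^\sharp$, which is already done, forcing classicality. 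Once that subtlety is handled, everything else is a routine transfer of the argument from the previous proof, so I would keep the write-up short, referencing Theorem~\ref{theorem:arbitrarysolutionrecognitioncomplexity} for the shared part and only spelling out the properness clause.
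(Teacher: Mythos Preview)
Your membership argument is correct and matches the paper's: the three conditions $\Gamma,\tau\not\models_\Luk\bot$ (in $\np$), $\tau\not\models_\Luk\chi$ (in $\np$), and $\Gamma,\tau\models_\Luk\chi$ (in $\conp$) combine to a $\DP$ check.

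Your hardness argument, however, has a genuine gap in the properness equivalence. You need $\tau^\odot\not\models_\Luk\psi^\Luk$ to imply $\tau\not\models_\CPL\psi$, but this fails. Take $\psi=(p\wedge q)\vee(p\wedge\neg q)\vee(\neg p\wedge q)\vee(\neg p\wedge\neg q)$, a classical tautology, and $\tau=r$ with $r\notin\{p,q\}$. Then $\tau\models_\CPL\psi$, yet at $v(r)=1$, $v(p)=v(q)=\tfrac12$ every strong conjunct in $\psi^\Luk$ evaluates to~$0$, so $v(\psi^\Luk)=0$ and $\tau^\odot\not\models_\Luk\psi^\Luk$. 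Neither of your proposed fixes saves this. The ``rounding'' argument relies on the claim that each connective is monotone in each argument, which is false for $\neg$; indeed in the example above \emph{every} rounding of $p,q$ to $\{0,1\}$ sends $\psi^\Luk$ to~$1$. And throwing $p\vee\neg p$ into $\Delta^\sharp$ is irrelevant, since the properness condition $\tau^\odot\not\models_\Luk\psi^\Luk$ does not mention the theory at all.

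The paper avoids this difficulty by reducing not from classical proper solution recognition but from \emph{arbitrary} $\Luk$-solution recognition (already shown $\DP$-hard in Theorem~\ref{theorem:arbitrarysolutionrecognitioncomplexity}): given $\Pmbb=\langle\Gamma,\chi,\Hmsf\rangle$, it passes to $\Pmbb_p=\langle\Gamma\cup\{p\},\chi\odot p,\Hmsf\rangle$ with $p$ fresh, so that every solution of $\Pmbb_p$ is automatically proper (since $p\notin\Prop[\Hmsf]$) and $\tau\in\Sol(\Pmbb)$ iff $\tau\in\PropSol(\Pmbb_p)$. This sidesteps any comparison of classical and {\L}ukasiewicz properness.
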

\begin{proof}
First, we obtain the hardness via a~reduction from the arbitrary solution recognition in Łukasiewicz logic. Namely, let $\Pmbb=\langle\Gamma,\chi,\Hmsf\rangle$ be an $\Luk$-abduction problem. We show that $\tau\in\Sol(\Pmbb)$ iff $\tau$ is a~\emph{proper solution} of $\Pmbb_p=\langle\Gamma\cup\{p\},\chi\odot p,\Hmsf\rangle$ with $p\notin\Prop[\Gamma\cup\{\chi\}]$. Assume that $\tau\in\Sol(\Pmbb)$. As $p\notin\Prop(\chi)$ and $\tau$ is $\Luk$-satisfiable, it is clear that $\tau\not\models_\Luk\chi\odot p$. It is also clear that $\Gamma,p,\tau\models_\Luk\bot$ iff $\Gamma,\tau\models_\Luk\bot$ and $\Gamma,p,\tau\models_\Luk\chi\odot p$ iff $\Gamma,\tau\models_\Luk\chi$. Conversely, let $\tau\in\PropSol(\Pmbb_p)$. As $\Gamma,p,\tau\consvDashLuk\chi$, it is clear that $\Gamma,\tau\consvDashLuk\chi$.

For the membership, given $\tau$ and $\Pmbb$, we (i) use an $\np$ oracle to guess two $\Luk$-valuations $v_\mathsf{sat}$ and $v_\mathsf{prp}$ and check that they witness $\Gamma,\tau\not\models_\Luk\bot$ and $\tau\not\models_\Luk\chi$. At the same time, we (ii) conduct a~$\conp$ check that $\Gamma,\tau\models_\Luk\chi$. Note that this is possible because we do not need the result of (i) to do (ii). It follows that $\tau\in\PropSol(\Pmbb)$ iff both checks succeed.
\end{proof}

\begin{restatable}{theorem}{Lukminimalsolutionrecognitioncomplexity}\label{theorem:Lukminimalsolutionrecognitioncomplexity}
Let $\Pmbb=\langle\Gamma,\chi,\Hmsf\rangle$ be an $\Luk$-abduction problem and $\tau$ an interval term. Then, it is $\DP$-complete to decide whether $\tau\in\LukminSol(\Pmbb)$.
\end{restatable}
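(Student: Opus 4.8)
The statement claims that recognizing entailment-minimal solutions is $\DP$-complete, so I need both membership and hardness. For membership I would argue that $\tau\in\LukminSol(\Pmbb)$ iff (a) $\tau$ is a proper solution and (b) no proper solution $\sigma$ built from $\Hmsf$ with $\tau\models_\Luk\sigma$ and $\sigma\not\simeq_\Luk\tau$ exists. Part (a) is $\DP$ by Theorem \ref{theorem:propersolutionrecognitioncomplexity}. For part (b), the key observation is that there are only polynomially many candidate $\sigma$ "relevant" to the test: since $\tau\models_\Luk\sigma$ forces (by Remark \ref{rem:intervalterms} and the duality in~\eqref{equ:intervaltermsintervalclausesduality}) every literal of $\sigma$ to be entailed by some literal of $\tau$, the candidate weaker solutions $\sigma$ are (up to strong equivalence) obtained from $\tau$ by dropping literals and/or relaxing boundary values of literals of $\tau$ to the nearest permitted values occurring in $\Hmsf$. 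Each such $\sigma$ has size polynomial in $|\Pmbb|+|\tau|$, and there are only polynomially many of them (one choice per literal of $\tau$). So checking (b) amounts to: for each such $\sigma\not\simeq_\Luk\tau$, verify that $\sigma$ is \emph{not} a proper solution, i.e., $\Gamma,\sigma\not\consvDashLuk\chi$ or $\sigma\models_\Luk\chi$ — each a Boolean combination of $\np$/$\conp$ checks, hence in $\DP$, and a polynomial conjunction of $\DP$-predicates together with the $\DP$ check of (a) stays in $\DP$ (the $\DP$ — equivalently $\mathrm{BH}_2$ — level is closed under the relevant polynomial Boolean combinations here because we can fold all the $\np$-guesses into one and all the $\conp$-checks into one).

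For hardness I would reduce from proper-solution recognition, which is $\DP$-complete by Theorem \ref{theorem:propersolutionrecognitioncomplexity}. Given an $\Luk$-abduction problem $\Pmbb=\langle\Gamma,\chi,\Hmsf\rangle$ and interval term $\tau$, I want to build $\Pmbb'$ and $\tau'$ so that $\tau\in\PropSol(\Pmbb)$ iff $\tau'\in\LukminSol(\Pmbb')$. The idea is to pad with a fresh variable so that $\tau'$ is forced to be minimal among proper solutions exactly when $\tau$ is proper. Concretely, take a fresh $r\notin\Prop[\Gamma\cup\{\chi\}]$, set $\tau'=\tau\odot(r\geq\mathbf 1)$, put $\Hmsf'=\Hmsf\cup\{r\geq\mathbf 1\}$ with no other literal on $r$ available, and let $\Gamma'=\Gamma\cup\{r\to(\text{something making }\chi\text{ derivable only with }r)\}$; more simply, take $\chi'=\chi\odot r$ and $\Gamma'=\Gamma$. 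Then any solution of $\Pmbb'$ must contain $r\geq\mathbf 1$ (since $r\notin\Prop[\Gamma\cup\{\chi\}]$, $r$ can only be made true via a hypothesis on $r$, and the only one available is $r\geq\mathbf 1$), so the literals on the $\Prop[\Gamma\cup\{\chi\}]$ side cannot be dropped or weakened without losing entailment of $\chi$ — while $\tau'\not\models_\Luk\chi'$ holds automatically because $r\notin\Prop(\chi)$ and $\tau'$ is satisfiable. Hence $\tau'$ is automatically proper, and it is entailment-minimal iff no strictly weaker proper solution exists, which by the structure of $\Pmbb'$ happens iff $\tau$ already consistently entails $\chi$, i.e., iff $\tau\in\Sol(\Pmbb)$; combining with Theorem~\ref{theorem:arbitrarysolutionrecognitioncomplexity} this already gives $\DP$-hardness, and a small extra twist (as in the proof of Theorem~\ref{theorem:propersolutionrecognitioncomplexity}) adjusts it to proper-solution recognition if needed.

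The main obstacle I anticipate is the hardness reduction, specifically ensuring that in $\Pmbb'$ there is genuinely \emph{no} proper solution strictly weaker than $\tau'$ other than in the "bad" case — one must rule out spurious weaker solutions that arise from hypotheses in $\Hmsf$ unrelated to $\tau$, and one must make sure the padding does not accidentally create an inconsistency or a shortcut to $\chi$. Handling this cleanly will likely require either restricting $\Hmsf'$ to exactly the literals of $\tau'$ (legitimate, since $\Hmsf$ is part of the input) or adding a guard formula to $\Gamma'$ forcing every literal of $\tau$ to be "used." The membership direction is comparatively routine once the polynomial bound on relevant candidate solutions $\sigma$ is established via Remark~\ref{rem:intervalterms}; the only care needed there is the bookkeeping that a polynomial Boolean combination of $\np$ and $\conp$ queries of the specific shape arising here collapses to $\DP$.
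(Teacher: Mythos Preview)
Your membership argument is essentially the paper's: enumerate the polynomially many ``one-step weakenings'' of $\tau$ and check that none of them is a proper solution. One point needs tightening: you phrase the per-candidate check as ``$\sigma$ is not a proper solution'', which is a $\mathrm{co}\DP$ predicate, and then appeal to closure of $\DP$ under polynomial conjunction. That closure holds for $\DP$ but not obviously for conjunctions of $\mathrm{co}\DP$ predicates. The paper makes the simplifying observation you need: since $\tau\models_\Luk\sigma$, consistency of $\Gamma\cup\{\tau\}$ implies consistency of $\Gamma\cup\{\sigma\}$, and $\tau\not\models_\Luk\chi$ implies $\sigma\not\models_\Luk\chi$; hence ``$\sigma$ not a proper solution'' collapses to the single $\np$ condition $\Gamma,\sigma\not\models_\Luk\chi$. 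With that, all the minimality checks are $\np$, and together with the one $\conp$ check $\Gamma,\tau\models_\Luk\chi$ you are genuinely in $\DP$.

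The hardness reduction, as written, does not work. From ``every solution of $\Pmbb'$ must contain $r\geq\mathbf 1$'' you conclude that ``the literals on the $\Prop[\Gamma\cup\{\chi\}]$ side cannot be dropped or weakened without losing entailment of $\chi$''; that inference is invalid. If $\tau$ happens to contain a redundant literal $\lambda$ (i.e., $\tau$ with $\lambda$ removed is still a solution of $\Pmbb$), then $\tau'=\tau\odot(r\geq\mathbf 1)$ is a proper solution of $\Pmbb'$ but is \emph{not} $\models_\Luk$-minimal, since $(\tau\setminus\lambda)\odot(r\geq\mathbf 1)$ is a strictly weaker proper solution. So ``$\tau'\in\LukminSol(\Pmbb')$ iff $\tau\in\Sol(\Pmbb)$'' fails. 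Restricting $\Hmsf'$ to the literals of $\tau'$ does not help, because dropping literals is still available. Your ``guard formula forcing every literal of $\tau$ to be used'' could be made to work (e.g., for each $\lambda_i\in\tau$ introduce a fresh $s_i$, add $\lambda_i\rightarrow s_i$ to $\Gamma'$, and conjoin $s_i$ into $\chi'$), but as stated it is only a hint, not a reduction.

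For comparison, the paper avoids this issue entirely by reducing from \emph{prime implicant recognition} in $\CPL$ (known $\DP$-complete): given satisfiable $\chi$, it asks whether $\tau^\odot$ is a $\models_\Luk$-minimal solution of $\langle\{p\vee\neg p: p\in\Prop(\chi\wedge q)\}\cup\{q\},\ \chi^\Luk\odot q,\ \Hmsf\rangle$ with $\Hmsf=\{p\geq\mathbf 1,p\leq\mathbf 0: p\in\Prop(\chi)\}$. Because a prime implicant is precisely a $\models_\CPL$-minimal implicant, minimality transfers directly and no guard machinery is needed.
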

\begin{proof}[Proof sketch]
We begin with the hardness. We will provide a~reduction from the prime implicant recognition in classical logic which is $\DP$-complete~\cite[Proposition~111]{Marquis2000HDRUMS}. Let w.l.o.g.\ $\chi$ be classically satisfiable. We can prove that $\tau$ is a~prime implicant of $\chi$ iff $\tau^\odot$ is an $\models_\Luk$-minimal solution to $\Pmbb\!=\!\left\langle\Gamma,\chi^\Luk\!\odot\!q,\Hmsf\right\rangle$ with $q\!\notin\!\Prop(\chi)$ and
\begin{align*}
\Gamma&=\left\{p\vee\neg p\mid p\in\Prop(\chi\wedge q)\right\}\cup\{q\}\\
\Hmsf&=\{p\geq\mathbf{1}\mid p\in\Prop(\chi)\}\cup\{p\leq\mathbf{0}\mid p\in\Prop(\chi)\}
\end{align*}
To show this, we use that $v(p\!\vee\!\neg p)\!=\!1$ iff $v(p)\!\in\!\{0,1\}$, $v(\tau)\!\in\!\{0,1\}$ for every interval term~$\tau$ and $\Luk$-valuation~$v$, and $\neg$, $\odot$, and $\oplus$ behave classically on $\{0,1\}$. Conversely, given $\tau\in\LukminSol(\Pmbb)$, we can define a~prime implicant $\tau_\cl$ by replacing $\odot$ with $\wedge$, $r\geq\mathbf{1}$ with $r$, and $s\leq\mathbf{0}$ with $\neg s$. The reasoning is similar.

The proof of the membership utilises the fact that given an interval term $\tau=\lambda\odot\tau'$ with $\lambda=p\lozenge\cmbf$ and a~set of hypotheses $\Hmsf$, there are $\Omc(|\Hmsf|)$ interval terms $\sigma$ for which we need to check that $\tau\models_\Luk\sigma$ and $\sigma\not\models_\Luk\tau$ hold. Namely, we can either replace $\lambda$ with the ‘next weakest’ literal $\lambda^\flat_\Hmsf$ or (if $\lambda$ is itself the weakest in~$\Hmsf$) remove $\lambda$ altogether. For example, if $\Hmsf=\{p\lozenge\sfrac{\imbf}{\four}\mid\lozenge\in\{\leq,<,\geq,>\},\imbf\in\{\mathbf{0},\ldots,\mathbf{4}\}\}$ and $\lambda=p\leq\mathbf{\sfrac{1}{4}}$, then $\lambda^\flat_\Hmsf=p\leq\mathbf{\sfrac{2}{4}}$. Let us now use $\tau^\flat_\lambda$ to denote the term obtained from $\tau$ by replacing $\lambda$ with $\lambda^\flat_\Hmsf$ or removing $\lambda$ when $\lambda^\flat_\Hmsf\notin\Hmsf$. As $\tau\models_\Luk\tau^\flat_\lambda$, $\Gamma,\tau\not\models_\Luk\bot$ entails $\Gamma,\tau^\flat_\lambda\not\models_\Luk\bot$ and $\tau\not\models_\Luk\chi$ entails $\tau^\flat_\lambda\not\models_\Luk\chi$.

Now, we use an $\np$-oracle that guesses linearly many w.r.t.\ $|\Hmsf|$ $\Luk$-valuations $v_\mathsf{sat}$, $v_\mathsf{prp}$, and $v_\lambda$ (for each $\lambda\in\tau$) and verifies whether they witness (i) $\Gamma,\tau\not\models_\Luk\bot$, (ii) $\tau\not\models_\Luk\chi$, and (iii) $\Gamma,\tau^\flat_\lambda\not\models_\Luk\chi$. Parallel to that (as we do not need the results of (i)--(iii)), we conduct a~$\conp$ check that (iv) $\Gamma,\tau\models_\Luk\chi$. It follows from the definition of $\models_\Luk$-minimal solutions that $\tau$ is a~$\models_\Luk$-minimal solution iff the $\np$ and $\conp$ checks succeed.
\end{proof}

In the case of theory-minimal solutions, we establish membership in~$\Pi^\Pmsf_2$. We expect that this case is indeed harder than entailment-minimality, intuitively because the presence of the theory means we cannot readily identify a~polynomial number of candidates for better solutions to check. We leave the search for a~matching lower bound for future work and remark that, to the best of our knowledge, the complexity of the analogous problem in $\CPL$ is also unknown. 
\begin{restatable}{theorem}{theoryminimalrecognition}\label{theorem:theoryminimalrecognition}
It is in $\Pi^\Pmsf_2$ to decide, given an $\Luk$-abduction problem $\Pmbb$ and an interval term $\tau$, whether $\tau$ is a theory-minimal solution of $\Pmbb$.
\end{restatable}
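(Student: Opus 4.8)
The plan is to show membership in $\Pi^\Pmsf_2$ by expressing the statement ``$\tau$ is a theory-minimal solution of $\Pmbb$'' as a $\forall\exists$ formula whose matrix is checkable in polynomial time (using the oracle for $\Luk$-entailment). Recall that $\tau\in\ThminSol(\Pmbb)$ means: (a) $\tau\in\PropSol(\Pmbb)$, i.e.\ $\Gamma,\tau\not\models_\Luk\bot$, $\Gamma,\tau\models_\Luk\chi$, and $\tau\not\models_\Luk\chi$; and (b) there is no proper solution $\sigma$ with $\Gamma,\tau\models_\Luk\sigma$ and $\Gamma,\sigma\not\models_\Luk\tau$. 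Part (a) is already known to be in $\DP\subseteq\Pi^\Pmsf_2$ by Theorem~\ref{theorem:propersolutionrecognitioncomplexity}. The work is in part (b).

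First I would observe that, since $\sigma$ ranges over interval terms composed of hypotheses from the \emph{finite} set $\Hmsf$, there are at most exponentially many candidates $\sigma$, each of polynomial size. Hence part (b) is a statement of the form ``for all interval terms $\sigma$ over $\Hmsf$: \emph{if} $\sigma$ is a proper solution \emph{and} $\Gamma,\tau\models_\Luk\sigma$ \emph{and} $\Gamma,\sigma\not\models_\Luk\tau$ \emph{then} $\bot$'' --- equivalently, ``for all $\sigma$: $\sigma\notin\PropSol(\Pmbb)$, or $\Gamma,\tau\not\models_\Luk\sigma$, or $\Gamma,\sigma\models_\Luk\tau$''. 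The universal quantifier over $\sigma$ is the outer $\forall$ of the $\Pi^\Pmsf_2$ bound. It remains to check that, for a fixed $\sigma$, the disjunction inside can be decided by an $\np$ computation (i.e.\ with one layer of existential guessing plus polynomial-time verification), so that the whole thing sits in $\forall\,\np = \Pi^\Pmsf_2$.

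For the inner check, fix $\tau$ and a guessed $\sigma$. The predicate ``$\sigma\notin\PropSol(\Pmbb)$ \emph{or} $\Gamma,\tau\not\models_\Luk\sigma$ \emph{or} $\Gamma,\sigma\models_\Luk\tau$'' is a disjunction of: two entailment statements $\Gamma,\tau\models_\Luk\sigma$ and $\Gamma,\sigma\models_\Luk\tau$ (each $\conp$, by Proposition~\ref{prop:Lukintnpcomplete}, so their negations/uses are $\np$), together with $\sigma\notin\PropSol(\Pmbb)$, which by the analysis in Theorem~\ref{theorem:propersolutionrecognitioncomplexity} is the complement of a $\DP$ predicate, hence in $\DP$ as well (it is the \emph{union} $\np\cup\conp$ of the failure of one of the three conditions $\Gamma,\sigma\not\models_\Luk\bot$, $\Gamma,\sigma\models_\Luk\chi$, $\sigma\not\models_\Luk\chi$). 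A finite disjunction of $\np$ and $\conp$ predicates is not obviously in $\np$; the standard move is to guess, for each of the existential (``$\np$-side'') disjuncts, a witnessing $\Luk$-valuation, and to perform each of the universal (``$\conp$-side'') disjuncts as a co-nondeterministic subcheck --- but these co-checks cannot be done in parallel with the outer $\forall\sigma$ unless we are careful. The clean way, which I would adopt, is the pattern already used in Theorems~\ref{theorem:propersolutionrecognitioncomplexity} and~\ref{theorem:Lukminimalsolutionrecognitioncomplexity}: push \emph{all} of the existential content (the witnesses for $\Gamma\not\models_\Luk\bot$-type facts, for $\not\models_\Luk$-type facts, and for the \emph{failure} of the bad entailments) into the single $\np$-guess at the start, so that after guessing we are left only with universal checks, which we then fold into the outer $\Pi^\Pmsf_2$ alternation. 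Concretely: the full statement becomes ``$\forall\sigma\ \forall$(valuations for the $\conp$-checks) $\exists$(valuations for the $\np$-checks): [polynomial-time verifiable matrix]'', and since a $\forall\forall\exists$ prefix collapses to $\forall\exists$, we land in $\Pi^\Pmsf_2$. Part (a) is handled by the same outer $\exists$/$\forall$ structure (it is in $\DP\subseteq\Pi^\Pmsf_2$ and can be conjoined without raising the level).

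The main obstacle is the bookkeeping in the previous paragraph: making sure that every sub-test genuinely decomposes into ``one existential witness'' plus ``one universal check'' with \emph{no} interleaving, and in particular that the $\DP$ predicate $\sigma\notin\PropSol(\Pmbb)$ --- which is itself a disjunction of an $\np$ and a $\conp$ event --- is absorbed correctly (its $\conp$ part goes with the outer $\forall$, its $\np$ part with the inner $\exists$). Once the quantifier prefix is seen to be $\forall\exists$ with a polynomial-time matrix (all entailment checks being ordinary $\LLukint$-entailment instances, polynomial-size by the size of $\Pmbb$ and $\tau$, and all valuations polynomially bounded by the $\np$-completeness results of Proposition~\ref{prop:Lukintnpcomplete}), the $\Pi^\Pmsf_2$ membership follows.
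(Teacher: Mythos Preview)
Your approach is correct and is essentially the dual of the paper's: the paper shows the complement (``$\tau$ is \emph{not} theory-minimal'') lies in $\Sigma^\Pmsf_2$ by guessing either ``not a proper solution'' or a~witnessing term $\sigma'$ and then making a constant number of $\np$-oracle calls, whereas you unfold the same structure directly into a $\forall(\sigma,\vec v)\,\exists\vec w$ prenex form. Two minor points: the claim that $\sigma\notin\PropSol(\Pmbb)$ is ``in $\DP$ as well'' is not right (it is in co-$\DP$, i.e.\ a~union of an $\np$ and a $\conp$ event, which is what you actually use), and your informal description ``guess first, then universal checks'' would give $\forall\exists\forall$; it is the later ``concrete'' $\forall\sigma\,\forall\vec v\,\exists\vec w$ ordering that is correct, and it works precisely because each coNP sub-check and each NP sub-check uses its own independent witness tuple, so the disjunction prenexes as $\forall\vec v\,\exists\vec w$.
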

\subsection{Solution Existence\label{ssec:solutionexistence}}
We now turn to establishing the complexity of the solution existence in $\Luk$-abduction problems. Note that a~problem may have solutions but no proper solutions. On the other hand, if a~problem has proper solutions, it will have entailment- and theory-minimal solutions as well. Thus, we will consider the complexity of arbitrary and proper solution existence.
\begin{theorem}\label{theorem:arbitrarysolutionexistence}
It is $\Sigma^\Pmsf_2$-complete to decide whether an $\Luk$-ab\-duc\-tion problem has a~(proper) solution.
\end{theorem}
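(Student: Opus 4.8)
The plan is to prove membership in $\Sigma^\Pmsf_2$ by a guess-and-check argument and hardness by reductions from results already in hand. For membership, first note that every interval literal takes only the values $0$ and $1$, so $\lambda\odot\lambda\equiv_\Luk\lambda$ for an interval literal $\lambda$; hence every interval term built from $\Hmsf$ is strongly equivalent to one in which each hypothesis occurs at most once, and it suffices to consider candidate solutions $\tau$ of size $\Omc(|\Hmsf|)$. A $\Sigma^\Pmsf_2$-procedure then guesses such a $\tau$ together with an $\Luk$-valuation $v_\mathsf{sat}$ (and, for the proper case, a second valuation $v_\mathsf{prp}$); in the universal phase it runs the $\conp$ check that $\Gamma,\tau\models_\Luk\chi$ (possible by Proposition~\ref{prop:Lukintnpcomplete}), while $v_\mathsf{sat}$ is verified in polynomial time to witness $\Gamma,\tau\not\models_\Luk\bot$ (and $v_\mathsf{prp}$ to witness $\tau\not\models_\Luk\chi$). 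Since the result of the universal check does not depend on the guessed valuations, the whole computation has the form $\exists\forall$, and $\tau\in\Sol(\Pmbb)$ (resp.\ $\tau\in\PropSol(\Pmbb)$) holds iff all checks succeed; so $\Sol(\Pmbb)\neq\varnothing$ and $\PropSol(\Pmbb)\neq\varnothing$ are in $\Sigma^\Pmsf_2$.

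For $\Sigma^\Pmsf_2$-hardness of arbitrary solution existence, I would reduce from classical solution existence, which is $\Sigma^\Pmsf_2$-complete~\cite{EiterGottlob1995}. Given a classical problem $\Pmbb_\cl=\langle\Delta,\psi,\Hmsf^\cl\rangle$, build $\Pmbb^\Luk=\langle\Delta^\sharp,\psi^\Luk,\Hmsf^\sharp\rangle$ exactly as in the proof of Theorem~\ref{theorem:arbitrarysolutionrecognitioncomplexity}. That proof already shows that for every classical term $\tau$ one has $\tau\in\Sol(\Pmbb_\cl)$ iff $\tau^\odot\in\Sol(\Pmbb^\Luk)$, and conversely that every $\Luk$-solution $\sigma$ of $\Pmbb^\Luk$ yields a classical solution $\sigma_\cl$ of $\Pmbb_\cl$ (a term containing a clashing pair $p\geq\mathbf{1}$, $p\leq\mathbf{0}$ is $\Luk$-unsatisfiable, hence violates consistency with $\Delta^\sharp$ and is not a solution, so $\sigma_\cl$ is well defined). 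Hence $\Sol(\Pmbb_\cl)\neq\varnothing$ iff $\Sol(\Pmbb^\Luk)\neq\varnothing$, and the translation is polynomial, giving $\Sigma^\Pmsf_2$-hardness for arbitrary solutions.

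For proper solutions I would reduce from arbitrary $\Luk$-solution existence using the fresh-variable gadget of Theorem~\ref{theorem:propersolutionrecognitioncomplexity}: for $\Pmbb=\langle\Gamma,\chi,\Hmsf\rangle$ and a fresh $p\notin\Prop[\Gamma\cup\{\chi\}]$, put $\Pmbb_p=\langle\Gamma\cup\{p\},\chi\odot p,\Hmsf\rangle$. By that theorem, for every interval term $\tau$ we have $\tau\in\Sol(\Pmbb)$ iff $\tau\in\PropSol(\Pmbb_p)$, so $\Sol(\Pmbb)\neq\varnothing$ iff $\PropSol(\Pmbb_p)\neq\varnothing$, which yields $\Sigma^\Pmsf_2$-hardness of proper solution existence and completes the proof. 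The only genuinely delicate point is in the membership argument: one must bound the guessed candidate solution polynomially (via the idempotence-up-to-equivalence observation above) and arrange the consistency/properness witnesses so they are checked independently of the universal entailment test, so that the quantifier alternation stays within $\Sigma^\Pmsf_2$; both hardness directions are essentially immediate from the correspondences already established for solution recognition.
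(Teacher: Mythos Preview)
Your proof is correct and close in spirit to the paper's, but the hardness part is organised differently. The paper treats the arbitrary and the proper case in a single reduction: it starts from the specific class of classical abduction problems used in~\cite[Theorem~4.2]{EiterGottlob1995}, in which the hypothesis set is disjoint from the variables of the observation, so that every classical (and hence every $\Luk$-) solution is automatically proper; one reduction then yields $\Sigma^\Pmsf_2$-hardness for both $\Sol(\Pmbb)\neq\varnothing$ and $\PropSol(\Pmbb)\neq\varnothing$ at once. You instead chain two reductions: general classical solution existence $\to$ $\Luk$ arbitrary solution existence via the construction of Theorem~\ref{theorem:arbitrarysolutionrecognitioncomplexity}, and then $\Luk$ arbitrary $\to$ $\Luk$ proper via the fresh-variable gadget of Theorem~\ref{theorem:propersolutionrecognitioncomplexity}. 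This is perfectly valid and has the advantage of being entirely modular (you reuse already-proved correspondences without inspecting the inner structure of Eiter--Gottlob's hard instances), at the cost of one extra step. For membership, the paper simply invokes Theorem~\ref{theorem:arbitrarysolutionrecognitioncomplexity} (guess a term, check in $\DP$); your more explicit argument, including the idempotence observation bounding the size of candidate terms, is a fine spelling-out of the same idea.
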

\begin{proof}
Membership follows immediately from Theorem~\ref{theorem:arbitrarysolutionrecognitioncomplexity}. For hardness, we provide a~reduction from the solution existence for classical abduction problems $\Pmbb_\cl=\langle\Gamma_\cl,\chi_\cl,\Hmsf\rangle$ of the following form (below $\phi\in\LCPL$):\footnote{We write $\neg r\Leftrightarrow r'$ as a~shorthand for $(r\wedge\neg r')\vee(\neg r\wedge r')$.}
\begin{align}
\Gamma_\cl&=\{\!\neg\phi\!\vee\!(p\!\wedge\!\tau),\!\neg p\!\vee\!\tau\}\!\cup\!\{\neg r\!\Leftrightarrow\!r'\!\mid\! r\!\in\!\Prop(\phi)\!\setminus\!\Prop(p\!\wedge\!\tau)\!\}\nonumber\tag{$p\notin\Prop(\phi\wedge\tau)$, $\tau$ is a~weak conjunction of literals}\\
\chi_\cl&=p\wedge\tau\nonumber\\
\Hmsf&=\{r\mid r\!\in\!\Prop(\phi)\!\setminus\!\Prop(p\!\wedge\!\tau)\}\!\cup\!\{r'\mid\neg r\!\Leftrightarrow\!r'\!\in\!\Gamma_\cl\}
\label{equ:solutionexistenceCPL}
\end{align}
By~\cite[Theorem~4.2]{EiterGottlob1995}, determining the existence of classical solutions for these problems is $\Sigma^\Pmsf_2$-hard. We reduce $\Pmbb_\cl$ to $\Pmbb^\Luk=\langle\Gamma^\sharp,\chi^\Luk_\cl,\Hmsf^\sharp\rangle$ as was done in Theorem~\ref{theorem:arbitrarysolutionrecognitioncomplexity} (recall Definition~\ref{def:multiplicativetranslation} for $\Gamma^\Luk$ and $\chi^\Luk$) with
\begin{align}\label{equ:solutionexistenceLuk}
\Gamma^\sharp&=\Gamma^\Luk\!\cup\!\{p\!\vee\!\neg p\mid p\!\in\!\Prop[\Gamma_\cl\!\cup\!\{\chi_\cl\}]\}\nonumber\\
\Hmsf^\sharp&=\{s\geq\mathbf{1}\mid s\in\Hmsf\}\quad\chi^\Luk_\cl=p\odot\bigodot\limits_{l\in\tau} l
\end{align}
First let $\sigma$ be a~solution of $\Pmbb_\cl$. It is immediate from~\eqref{equ:solutionexistenceCPL} that $\sigma$ is a~\emph{proper} solution because we cannot use variables occurring in $\chi_\cl$. It now follows from Proposition~\ref{prop:CPLtoLuk} that $\Gamma^\sharp,\sigma^\Luk\consvDashLuk\chi^\Luk_\cl$. Furthermore, as $\sigma\not\models_\CPL\chi_\cl$, it is clear that $\sigma^\Luk\not\models_\Luk\chi^\Luk_\cl$. We can now define $\sigma^{\lozenge\Luk}=\!\bigodot_{s\in\Prop(\sigma)}(s\geq\mathbf{1})$. Using Remark~\ref{rem:intervalterms}, one sees that $\sigma^{\lozenge\Luk}\simeq_\Luk\sigma^\Luk$. Thus, $\sigma^{\lozenge\Luk}$ is a~(proper) solution of $\Pmbb^\Luk$.

Conversely, let $\sigma'$ be a~solution of $\Pmbb^\Luk$. It is clear that $\sigma'\not\models_\Luk\chi^\Luk_\cl$ because $\Prop(\chi^\Luk_\cl)\cap\Prop[\Hmsf^\sharp]=\varnothing$, and $\chi^\Luk_\cl$ is not $\Luk$-valid. Hence, $\sigma'$ is a~proper solution. Define $\sigma'^\cl=\bigwedge_{s\in\Prop(\sigma')}\!s$. We show that $\sigma'^\cl$ is a~proper solution of $\Pmbb_\cl$. Since $\sigma'$ is a~solution, we have $\Gamma^\sharp,\sigma'\models_\Luk\chi^\Luk_\cl$. Using Remark~\ref{rem:intervalterms}, we have $\sigma'\simeq_\Luk\!\!\bigodot_{s\in\Prop(\sigma')}\!s$, whence, it is clear that $\Gamma^\sharp,\bigodot_{s\in\Prop(\sigma')}\!s\consvDashLuk\chi^\Luk_\cl$, whence, by Proposition~\ref{prop:CPLtoLuk}, $\Gamma,\sigma'^\cl\consvDashCPL\chi_\cl$, as required.
\end{proof}
\subsection{Relevance and Necessity of Hypotheses\label{ssec:relevance}}
Let us now consider the complexity of determining the relevance and necessity of hypotheses w.r.t.\ solutions to $\Luk$-abduction problems. Namely, given an $\Luk$-abduction problem $\Pmbb=\langle\Gamma,\chi,\Hmsf\rangle$, we will consider the complexity of determining whether an interval literal $\lambda\in\Hmsf$ is relevant (necessary) w.r.t.\ $\Sol(\Pmbb)$, $\PropSol(\Pmbb)$, $\LukminSol(\Pmbb)$, and $\ThminSol(\Pmbb)$.

We begin with the complexity of relevance and necessity w.r.t.\ arbitrary, proper, and $\models_\Luk$-minimal solutions. The classical counterparts of these decision problems were considered by~\cite{EiterGottlob1995}. We show that the complexity of relevance and necessity w.r.t.\ (proper) solutions and $\models_\Luk$-minimal solutions coincides with the complexity of the analogous problems for ($\subseteq$-minimal) solutions in $\CPL$.
\begin{theorem}\label{theorem:relevancecomplexity}
It is $\Sigma^\Pmsf_2$-complete (resp.\ $\Pi^\Pmsf_2$-complete) to decide, given a~$\Luk$-abduction problem $\Pmbb=\langle\Gamma,\chi,\Hmsf\rangle$ and $\lambda\in \Hmsf$, whether $\lambda$ is relevant (resp.\ necessary) w.r.t.\ $\Sol(\Pmbb)$. The same holds for relevance and necessity w.r.t.\ $\PropSol(\Pmbb)$ and $\LukminSol(\Pmbb)$.
\end{theorem}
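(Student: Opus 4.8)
The plan is to reduce the classical relevance and necessity problems for $\CPL$-abduction to the corresponding $\Luk$-versions, reusing essentially the same translation that appeared in Theorem~\ref{theorem:arbitrarysolutionrecognitioncomplexity} and Theorem~\ref{theorem:arbitrarysolutionexistence}. First I would recall that by~\cite{EiterGottlob1995}, deciding whether a given literal is relevant (resp.\ necessary) w.r.t.\ the solutions of a classical abduction problem is $\Sigma^\Pmsf_2$-complete (resp.\ $\Pi^\Pmsf_2$-complete), and that the same holds when the restriction to proper or $\subseteq$-minimal solutions is imposed; moreover the $\Sigma^\Pmsf_2$-hardness can be witnessed by instances of the restricted shape used in~\eqref{equ:solutionexistenceCPL}, for which every solution is automatically proper and (after the translation) the syntactic correspondence between classical solutions $\sigma$ and interval terms $\sigma^{\lozenge\Luk}=\bigodot_{s\in\Prop(\sigma)}(s\geq\mathbf{1})$ preserves being a solution, a proper solution, and an $\models_\Luk$-minimal solution. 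Since a hypothesis $\lambda=r$ of $\Pmbb_\cl$ corresponds bijectively to the hypothesis $r\geq\mathbf{1}$ of $\Pmbb^\Luk$ and $r$ occurs in $\sigma$ iff $r\geq\mathbf{1}$ occurs in $\sigma^{\lozenge\Luk}$, relevance and necessity transfer directly, giving the hardness side for all three solution notions.

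For membership, I would argue as follows. Relevance w.r.t.\ $\Sol(\Pmbb)$ is an $\exists$-then-recognise question: guess an interval term $\tau$ over $\Hmsf$ (of polynomial size, since there are at most $|\Hmsf|$ literals) that contains $\lambda$, then verify $\tau\in\Sol(\Pmbb)$; by Theorem~\ref{theorem:arbitrarysolutionrecognitioncomplexity} the verification is in $\DP\subseteq\Delta^\Pmsf_2$, so the whole procedure is in $\Sigma^\Pmsf_2$. The same bound holds for $\PropSol(\Pmbb)$ and $\LukminSol(\Pmbb)$ using Theorems~\ref{theorem:propersolutionrecognitioncomplexity} and~\ref{theorem:Lukminimalsolutionrecognitioncomplexity}, whose recognition checks are likewise in $\DP$. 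Necessity is the complement of ``there exists a solution of the relevant kind that does \emph{not} contain $\lambda$'', which by the same reasoning is in $\conp^{\np}=\Pi^\Pmsf_2$: guess a $\lambda$-free interval term, check membership in the appropriate solution set with a $\DP$ oracle, and negate. (One subtlety for necessity: if $\Pmbb$ has no solution at all of the relevant kind, then vacuously every hypothesis is ``necessary''; this does not affect the upper bound, and the hardness instances are chosen so that at least one solution exists, so it does not affect hardness either.)

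The main obstacle I anticipate is not the upper bounds --- those are routine once the recognition complexities from Section~\ref{ssec:solutionrecognition} are in hand --- but making sure the hardness reduction respects the three solution notions \emph{simultaneously}. Concretely, I need the classical hardness instances to have the property that their solutions, proper solutions, and $\subseteq$-minimal solutions coincide (or at least that relevance/necessity of a given hypothesis is the same across all three), and that this is preserved under the translation $\sigma\mapsto\sigma^{\lozenge\Luk}$. The instances from~\eqref{equ:solutionexistenceCPL} already have this feature: no hypothesis variable occurs in the observation $\chi_\cl=p\wedge\tau$, so no solution entails $\chi_\cl$ and hence every solution is proper; and the $\equiv_\Luk$-correspondence established via Remark~\ref{rem:intervalterms} together with Proposition~\ref{prop:CPLtoLuk} ensures that $\models_\CPL$-minimality of $\sigma$ transfers to $\models_\Luk$-minimality of $\sigma^{\lozenge\Luk}$ (as already exploited in Theorem~\ref{theorem:arbitrarysolutionexistence}). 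I would therefore state the reduction once, check these invariants, and conclude all six completeness results at once.
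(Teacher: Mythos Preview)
Your membership argument is correct and matches the paper's: guess a candidate term containing (resp.\ not containing) $\lambda$ and use the $\DP$ recognition results from Theorems~\ref{theorem:arbitrarysolutionrecognitioncomplexity}--\ref{theorem:Lukminimalsolutionrecognitioncomplexity}.

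For hardness you take a different route from the paper. The paper does \emph{not} translate classical relevance instances; instead it reduces from $\Luk$ \emph{solution existence} (already shown $\Sigma^\Pmsf_2$-hard in Theorem~\ref{theorem:arbitrarysolutionexistence}) via the gadget in~\eqref{equ:relevanceLuk}: fresh variables $t,t',t''$ are added so that $t\geq\mathbf{1}$ is relevant and $t'\geq\mathbf{1}$ non-necessary iff the original $\Luk$-problem has a solution, and the structure of $\Sol(\Pmbb^\rmsf)$ makes this equivalence hold simultaneously for $\Sol$, $\PropSol$, and $\LukminSol$. This is self-contained and never appeals to the precise shape of Eiter--Gottlob's classical relevance instances.

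Your route---pull classical relevance hardness through the $\CPL\!\to\!\Luk$ translation of Theorem~\ref{theorem:arbitrarysolutionrecognitioncomplexity}---can be made to work, but as written it has two gaps. First, you assert that classical relevance hardness ``can be witnessed by instances of the restricted shape used in~\eqref{equ:solutionexistenceCPL}''. That shape is for \emph{solution existence}; you would still need to run Eiter--Gottlob's relevance-from-existence gadget on top and verify that the resulting instances keep the hypothesis variables out of the observation (so that $\Sol=\PropSol$ on both sides and the solution bijection does the work). This is plausible but not something \eqref{equ:solutionexistenceCPL} gives you for free. Second, the parenthetical ``as already exploited in Theorem~\ref{theorem:arbitrarysolutionexistence}'' is wrong: that theorem says nothing about minimal solutions. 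The minimality transfer you need is the separate (easy) observation that when $\Hmsf^\sharp$ contains only literals $s\geq\mathbf{1}$, the $\models_\Luk$-order on interval terms over $\Hmsf^\sharp$ is exactly the subset order, so $\models_\Luk$-minimality corresponds to $\subseteq$-minimality; you should state and check this rather than cite Theorem~\ref{theorem:arbitrarysolutionexistence}. With those two points supplied your argument would go through, but the paper's gadget-based reduction is both shorter and avoids relying on unstated properties of the classical hardness instances.
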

\begin{proof}
The membership is evident from Theorems~\ref{theorem:arbitrarysolutionrecognitioncomplexity} and~\ref{theorem:Lukminimalsolutionrecognitioncomplexity} as we can just guess a~(proper or $\Luk$-minimal) solution $\tau$ which can be verified in $\DP$ time and then check in linear time whether $\lambda\in\tau$. For the hardness, we adapt the proof by~\cite{EiterGottlob1995} and establish a~reduction from the solution existence \emph{in Łukasiewicz logic}. Now let $\Pmbb=\langle\Gamma,\chi,\Hmsf\rangle$, pick three fresh variables $t$, $t'$, and $t''$, and $\Pmbb^\rmsf=\langle\Gamma^\rmsf,\chi^\rmsf,\Hmsf^\rmsf\rangle$ be as follows:
\begin{align}
\Gamma^\rmsf&=\{t\!\rightarrow\!\psi\mid\psi\!\in\!\Gamma\}\!\cup\!\{t'\!\rightarrow\!\chi,t\!\rightarrow\!\neg t',t\!\rightarrow\!t'',t'\!\rightarrow\!t''\}\nonumber\\
\chi^\rmsf&=t''\odot\chi\quad
\Hmsf^\rmsf=\Hmsf\cup\{t\geq\mathbf{1},t'\geq\mathbf{1}\}
\label{equ:relevanceLuk}
\end{align}

Now let $\PropSol(\Pmbb)$ be the set of all proper solutions of $\Pmbb$. It is clear that $\PropSol(\Pmbb^\rmsf)=\Sol(\Pmbb^\rmsf)$ (because $t''\notin\Hmsf^\rmsf$) and
\begin{align*}
\Sol(\Pmbb^\rmsf)&=\left\{\!\varrho\!\odot\!(t\!\geq\!\mathbf{1})\!\mid\!\varrho\!\in\!\Sol(\Pmbb^\Luk)\right\}\!\cup\\
&\quad~\left\{\!\varrho'\!\odot\! (t'\geq\mathbf{1})\!\mid\!\exists\Hmsf'\!\subseteq\!\Hmsf\!:\!\varrho'\!=\!\bigodot\limits_{l\in\Hmsf'}\!l\!\right\}
\end{align*}
and that $\Pmbb^\Luk$ has solutions iff $t$ is relevant and $t'$ is not necessary w.r.t.\ $\Sol(\Pmbb^\rmsf)$. 

For hardness w.r.t.\ $\LukminSol(\Pmbb)$, observe that $t$ is relevant to $\Pmbb^\rmsf$ iff it is relevant w.r.t.\ $\models_\Luk$-minimal solutions. Similarly, $t'$ is (not) necessary in $\Pmbb^\rmsf$ iff it is (not) necessary w.r.t.\ $\models_\Luk$-minimal solutions.
\end{proof}

We finish the section by presenting the membership results on the complexity of the recognition of relevant and necessary hypotheses w.r.t.\ theory-minimal solutions. The next statement is an easy consequence of Theorem~\ref{theorem:theoryminimalrecognition}. To the best of our knowledge, no tight complexity bounds have been established for this problem in the $\CPL$-abduction. 
\begin{restatable}{theorem}{theoryrelevancerecognition}\label{theorem:theoryrelevancerecognition}
It is in $\Sigma^\Pmsf_3$ (resp.\ $\Pi^\Pmsf_3$) to decide, given an $\Luk$-ab\-duc\-tion problem $\Pmbb=\langle\Gamma,\chi,\Hmsf\rangle$ and $\lambda\in\Hmsf$, whether $\lambda$ is relevant (resp.\ necessary) w.r.t.\ $\ThminSol(\Pmbb)$.
\end{restatable}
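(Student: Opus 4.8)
The plan is to obtain both bounds as straightforward consequences of the membership result for theory-minimal solution recognition in Theorem~\ref{theorem:theoryminimalrecognition}, using the generic scheme that relevance is an ``exists a solution with property $P$'' question and necessity is its ``for all solutions, property $P$ holds'' dual.

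\medskip

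First, for the relevance claim, I would argue that $\lambda$ is relevant w.r.t.\ $\ThminSol(\Pmbb)$ iff there exists an interval term $\tau$ composed of hypotheses from $\Hmsf$ such that $\lambda\in\tau$ and $\tau\in\ThminSol(\Pmbb)$. Since there are at most exponentially many interval terms over $\Hmsf$, such a $\tau$ has size polynomial in $|\Hmsf|$ and can be guessed by an $\np$ machine; checking $\lambda\in\tau$ is done in linear time; and checking $\tau\in\ThminSol(\Pmbb)$ is in $\Pi^\Pmsf_2$ by Theorem~\ref{theorem:theoryminimalrecognition}. Hence the whole procedure is an $\np$ computation with a $\Pi^\Pmsf_2$ (equivalently, $\Sigma^\Pmsf_2$) oracle call, placing relevance in $\np^{\Sigma^\Pmsf_2}=\Sigma^\Pmsf_3$.

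\medskip

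Second, for the necessity claim, $\lambda$ is necessary w.r.t.\ $\ThminSol(\Pmbb)$ iff \emph{every} $\tau\in\ThminSol(\Pmbb)$ satisfies $\lambda\in\tau$; equivalently, there is no interval term $\tau$ over $\Hmsf$ with $\lambda\notin\tau$ and $\tau\in\ThminSol(\Pmbb)$. The complement is again an $\np$-guess of a polynomial-size $\tau$, a linear-time check that $\lambda\notin\tau$, and a $\Pi^\Pmsf_2$ check that $\tau$ is theory-minimal, so the complement is in $\Sigma^\Pmsf_3$ and necessity itself is in $\Pi^\Pmsf_3$. (One subtlety to note but not belabour: if $\Pmbb$ has no theory-minimal solution at all --- which happens exactly when it has no proper solution --- then $\lambda$ is vacuously necessary; this edge case is handled uniformly by the ``no counterexample $\tau$ exists'' formulation, so no separate treatment is needed.)

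\medskip

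I do not expect a genuine obstacle here: the argument is essentially the same upper-bound bootstrapping used for Theorem~\ref{theorem:relevancecomplexity}, only starting from the $\Pi^\Pmsf_2$ bound of Theorem~\ref{theorem:theoryminimalrecognition} instead of the $\DP$ bound of Theorems~\ref{theorem:arbitrarysolutionrecognitioncomplexity}--\ref{theorem:Lukminimalsolutionrecognitioncomplexity}, which costs one extra level in the polynomial hierarchy. The only point requiring minimal care is confirming that the witnessing interval term is of polynomial size --- it is, since an interval term over $\Hmsf$ has at most $|\Hmsf|$ literals, each of size bounded by the input --- and that the oracle query is correctly typed (a $\Sigma^\Pmsf_2$ oracle suffices for both the $\np$ and the $\conp$ side, since $\Sigma^\Pmsf_2$ and $\Pi^\Pmsf_2$ oracles give the same relativised class). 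Matching lower bounds are left open, consistent with the remark that even the $\CPL$ analogue is not known to be complete.
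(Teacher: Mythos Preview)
Your proposal is correct and takes essentially the same approach as the paper: guess a candidate interval term over $\Hmsf$, check the occurrence condition in linear time, and appeal to the $\Pi^\Pmsf_2$ bound of Theorem~\ref{theorem:theoryminimalrecognition} for the recognition check, yielding $\Sigma^\Pmsf_3$ for relevance and $\Pi^\Pmsf_3$ for necessity via the complement. The paper's proof is more terse, but your added remarks on witness size and oracle typing are all sound and simply make the argument more explicit.
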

\section{Abduction in Clause Fragments\label{sec:simpleclauseabduction}}
Recall from Proposition~\ref{prop:Luknpcomplete} that the complexity of Łukasiewicz logic coincides with the complexity of $\CPL$. On the other hand, while every classical formula can be equivalently represented as a~set of $\vee$-clauses, $\LLuk$-formulas cannot be transformed into sets of simple clauses. In fact, the simple clause fragment of $\LLuk$ \emph{is decidable in linear time}~\cite[Lemma~2]{BofillManyaVidalVillaret2019}.
\begin{proposition}\label{prop:simpleclausescomplexity}
Let $\Gamma=\{\kappa_1,\ldots,\kappa_n\}$ be a~finite set of simple clauses. It takes linear time to decide whether there is an $\Luk$-valuation $v$ s.t.\ $v(\kappa_i)=1$ for every $i\in\{1,\ldots,n\}$.
\end{proposition}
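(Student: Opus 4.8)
The plan is to show that classical unit propagation decides this problem, that it is correct for $\Luk$, and that it runs in linear time. First I would record the key semantic fact: since an $\Luk$-valuation $v$ evaluates $\oplus$ by truncated addition and $\oplus$ is associative, $v\big(\bigoplus_{i=1}^{n}l_i\big)=\min\big(1,\sum_{i=1}^{n}v(l_i)\big)$, so a simple clause has value $1$ exactly when the values of its literals sum to at least $1$. Three corollaries drive everything: a \emph{unit} clause $l$ has value $1$ iff $v(l)=1$, hence it forces the value ($1$ or $0$) of its variable; a clause containing a complementary pair $p,\neg p$ is an $\Luk$-tautology, so we may assume w.l.o.g.\ that no clause contains one; and the empty clause has value $0$.

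Next I would spell out the procedure. Maintain a growing partial assignment together with, for each clause, the multiset of its literal occurrences not yet evaluated to $0$ and a counter for their number. While some clause is a unit $\{l\}$: set $v(p):=1$ if $l=p$ and $v(p):=0$ if $l=\neg p$; if this clashes with a previous assignment to $p$, answer \textbf{unsatisfiable}. Then propagate $v(p)$: every clause containing the literal now worth $1$ is discarded (it is satisfied); in every clause containing the literal now worth $0$, remove that occurrence and decrement the counter — reaching $1$ creates a new unit clause, reaching $0$ yields an empty clause so answer \textbf{unsatisfiable}. When no unit clause remains, answer \textbf{satisfiable}.

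For correctness I would argue two directions. Soundness: every step preserves the set of satisfying valuations — a unit clause genuinely forces its variable by the corollary above, and once $v(p)$ is fixed, discarding clauses that contain a literal of value $1$ and deleting literals of value $0$ from the remaining clauses leaves all clause values unchanged; hence a derived contradiction or empty clause witnesses $\Luk$-unsatisfiability of the original set. Completeness: if propagation halts without a contradiction, extend the partial assignment by setting every still-unassigned variable to $\tfrac12$. A discarded clause is already satisfied by the forced part. A surviving clause has had all occurrences of assigned variables removed (each such occurrence had value $0$, since otherwise the clause would have been discarded), so it retains $\ge 2$ literal occurrences, all over unassigned variables and each worth $\tfrac12$; thus its value is $\min(1,\sum)\ge\min(1,2\cdot\tfrac12)=1$. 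So every clause of the original set gets value $1$.

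Finally, for the complexity bound I would note that this is a single sweep of Boolean constraint propagation: with occurrence lists and per-clause counters (the Dowling--Gallier data structure) each literal occurrence of the input is handled a constant number of times, giving running time linear in $\sum_i|\kappa_i|$; this is essentially \cite[Lemma~2]{BofillManyaVidalVillaret2019}. The only genuinely non-routine point is the completeness step — seeing that the clauses surviving unit propagation always have at least two ``free'' literal occurrences and that the uniform value $\tfrac12$ satisfies every simple clause of length $\ge 2$. This is a phenomenon special to $\Luk$, resting on the non-idempotence of $\oplus$, and it is exactly what makes the simple-clause fragment so much easier than its classical analogue.
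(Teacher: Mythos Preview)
The paper does not prove this proposition; it simply states it and cites \cite[Lemma~2]{BofillManyaVidalVillaret2019}. Your argument is correct and supplies precisely what the paper leaves to the citation: unit propagation is sound because a unit clause really pins its variable in~$\Luk$ and deleting value-$0$ literal occurrences (resp.\ discarding clauses containing a value-$1$ literal) leaves the truncated sum unchanged; it is complete because any clause surviving propagation retains at least two unassigned literal occurrences, so the uniform assignment $\tfrac12$ gives $\min(1,\sum)\geq\min(1,2\cdot\tfrac12)=1$. The Dowling--Gallier bookkeeping then yields the linear bound. This is the expected argument behind the cited lemma, and you correctly isolate the one non-classical ingredient---the non-idempotence of~$\oplus$---that makes the $\tfrac12$-completion work where it would fail for classical CNF.
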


We note briefly that due to~\eqref{equ:demorganproperties}, the following formulas are pairwise strongly equivalent for any $n\in\Nmbb$ and $k<n$:
\begin{align}
\bigoplus\limits^n_{i=1}l_i&&\neg l_1\!\rightarrow\!\bigoplus\limits^n_{i=2}l_i
&&\bigodot^{n-1}_{i=1}\!\neg l_i\!\rightarrow\!l_n
&&\bigodot^k_{i=1}\neg l_i\!\!\rightarrow\!\!\!\!\bigoplus^{n}_{j=n-k}\!\!l_j\label{equ:alternativeclauses}
\end{align}
Proposition~\ref{prop:simpleclausescomplexity} together with~\eqref{equ:alternativeclauses} means, in particular, that the satisfiability of logic programming under Łukasiewicz semantics is polynomial independent of whether it contains negation. In classical logic, the complexity of abduction in polynomial fragments is expectedly lower than in general (cf.~\cite{CreignouZanuttini2006} for details). Hence, it is instructive to establish whether the complexity of $\Luk$-abduction will also be lower in a~polynomial fragment.

In this section, we will consider the complexity of $\Luk$-abduction when the theory is a~set of $\oplus$-clauses. Namely, we will be dealing with two cases: when all clauses in $\Gamma$ are simple and when clauses can contain interval literals. The results are shown in Table~\ref{tab:clausecomplexitysolutions}.
\begin{table}
\centering
\begin{tabular}{lcccc}
&SC&IC&CF&Horn\\\midrule
$\tau\in\Xmsf(\Pmbb)$?&$\Pmsf$&$\DP$&$\Pmsf$&$\Pmsf$\\[.2em]
$\tau\!\in\!\ThminSol(\Pmbb)$?&in $\conp$&in~$\Pi^\Pmsf_2$&in $\conp$&in~$\conp$\\[.2em]
$\Xmsf(\Pmbb)\neq\varnothing$?&$\np$&$\Sigma^\Pmsf_2$&$\np$&$\np$\\[.2em]
rel.\ w.r.t.\ $\Xmsf(\Pmbb)$&$\np$&$\Sigma^\Pmsf_2$&$\np$&$\np$
\end{tabular}
\caption{Abduction in clause fragments. Unless specified otherwise, all results are completeness results. SC --- simple clause theories; CF --- cover-free theories; IC --- theories with arbitrary interval clauses; Horn --- classical Horn theories. $\Xmsf\in\{\Sol,\PropSol,\LukminSol\}$.}
\label{tab:clausecomplexitysolutions}
\end{table}
\subsection{Simple Clause Fragment\label{ssec:simpleclauseabduction}}
Let us now consider the complexity of abductive reasoning for the simple clause fragment of~$\Luk$. We begin with the definition of simple clause abduction problems.
\begin{definition}[Simple clause abduction]\label{def:simpleclauseabduction}
A~\emph{simple clause abduction problem} (SCA problem) is an $\Luk$-abduction problem $\Pmbb=\langle\Gamma,\chi,\Hmsf\rangle$ s.t.\ $\Gamma$ is a~set of simple clauses and interval terms and $\chi$ is an interval clause, simple clause, interval term, or a~simple term.
\end{definition}
In the definition above, note that $\Gamma$ can contain interval terms. This makes it possible to express constraints on the values of variables. Similarly, different observations correspond to constraints that we explain based on the theory.

First, we show that recognition of arbitrary, proper, and $\models_\Luk$-minimal solutions is $\Pmsf$-complete.
\begin{restatable}{theorem}{SCArecognitionpolynomial}\label{theorem:SCArecognitionpolynomial}
Let $\Pmbb=\langle\Gamma,\chi,\Hmsf\rangle$ be an SCA problem, and~$\sigma$ an interval term. Then it is $\Pmsf$-complete to decide whether $\sigma$ is an arbitrary, proper, or $\models_\Luk$-minimal solution.
\end{restatable}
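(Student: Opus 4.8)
The plan is to establish $\Pmsf$-hardness and $\Pmsf$-membership separately, and for membership to treat all three solution notions (arbitrary, proper, $\models_\Luk$-minimal) uniformly.

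For hardness, I would give a reduction from \textbf{Horn-SAT} (or directly from the abduction/implication problems known to be $\Pmsf$-complete). The point is that a set of simple clauses together with interval terms is, by Proposition~\ref{prop:simpleclausescomplexity} and~\eqref{equ:alternativeclauses}, essentially a Horn-like constraint set solvable in polynomial time, but solution \emph{recognition} still has to decide an entailment $\Gamma,\sigma\models_\Luk\chi$, which should be enough to carry the $\Pmsf$-lower bound. A clean route: encode a Horn formula $\Delta$ as a set of simple clauses $\Gamma$ using~\eqref{equ:alternativeclauses} (a Horn clause $\neg q_1\vee\dots\vee\neg q_k\vee p$ is strongly equivalent to $\bigodot^k_{i=1}q_i\to p$, which in turn is a simple clause up to the equivalences in~\eqref{equ:demorganproperties} and~\eqref{equ:alternativeclauses}), take $\chi$ to be a single variable or $\bot$, and $\sigma$ a trivial interval term such as $r\geq\mathbf{1}$ for a fresh $r$; then $\sigma\in\Sol(\Pmbb)$ iff $\Delta$ entails the relevant literal, a $\Pmsf$-hard question. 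One must double-check that the consistency requirement $\Gamma,\sigma\not\models_\Luk\bot$ does not trivialise things — this is handled by Proposition~\ref{prop:simpleclausescomplexity}, which gives us consistency checking in linear time.

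For membership, the key observation is that $\tau\in\Sol(\Pmbb)$ iff $\Gamma,\tau\not\models_\Luk\bot$ and $\Gamma,\tau\models_\Luk\chi$. The first conjunct is a satisfiability question for a set of simple clauses and interval literals (the literals of $\tau$ are interval literals), which by Proposition~\ref{prop:simpleclausescomplexity} together with a reduction of interval literals to simple clauses via~\eqref{equ:alternativeclauses} is decidable in polynomial time. For the second conjunct, $\Gamma,\tau\models_\Luk\chi$: by~\eqref{equ:intervaltermsintervalclausesduality} the negation of an interval term is an interval clause, and by the duality~\eqref{equ:demorganproperties} between $\odot$ and $\oplus$, checking $\Gamma\cup\{\tau\}\models_\Luk\chi$ for $\chi$ of each of the allowed shapes (interval/simple clause or term) reduces to a small number of satisfiability checks of sets of simple clauses and interval literals — e.g.\ if $\chi=\bigoplus_j\mu_j$ is a clause, then $\Gamma,\tau\not\models_\Luk\chi$ iff $\Gamma\cup\{\tau\}\cup\{\neg\mu_j\mid j\}$ is satisfiable, and each $\neg\mu_j$ is again an interval literal. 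If $\chi$ is an interval or simple \emph{term} $\bigodot_j\mu_j$, then $\Gamma,\tau\models_\Luk\chi$ iff $\Gamma,\tau\models_\Luk\mu_j$ for each $j$, so we run one polynomial check per conjunct. Hence recognising arbitrary solutions is in $\Pmsf$. For proper solutions we additionally test $\tau\not\models_\Luk\chi$, which by Proposition~\ref{prop:intervaltermcomplexity}(1) (or directly, since $\tau$ is a term and $\chi$ has restricted shape) is polynomial. For $\models_\Luk$-minimal solutions I reuse the idea from the membership proof of Theorem~\ref{theorem:Lukminimalsolutionrecognitioncomplexity}: there are only $\Omc(|\Hmsf|)$ candidate ``one-step-weaker'' terms $\tau^\flat_\lambda$ (replace one literal $\lambda\in\tau$ by the next-weakest hypothesis $\lambda^\flat_\Hmsf$, or delete it), and $\tau$ is $\models_\Luk$-minimal iff it is proper and none of these $\tau^\flat_\lambda$ is a proper solution; each such check is polynomial by the above, and there are linearly many of them, so the whole test stays in $\Pmsf$.

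The main obstacle I anticipate is bookkeeping in the membership argument: I must be careful that converting interval literals and interval terms into simple-clause form via~\eqref{equ:alternativeclauses} and~\eqref{equ:demorganproperties} genuinely preserves satisfiability (the interval-literal semantics is two-valued, so one should verify that the constraint-tableaux / simple-clause machinery of Proposition~\ref{prop:simpleclausescomplexity} still applies when such literals are present, perhaps by noting an interval literal $p\geq\cmbf$ behaves exactly like the simple clause $\underbrace{p\oplus\dots\oplus p}_{n}$ when $\cmbf=1/n$, and in general can be accommodated directly in the linear-time procedure), and that the case split over the four admissible shapes of $\chi$ is exhaustive. The $\models_\Luk$-minimality part also requires arguing, exactly as in Theorem~\ref{theorem:Lukminimalsolutionrecognitioncomplexity}, that checking only the $\tau^\flat_\lambda$ suffices — i.e.\ that any proper solution strictly entailment-above $\tau$ would entailment-dominate some $\tau^\flat_\lambda$; this monotonicity fact follows from Remark~\ref{rem:intervalterms} and does not depend on the theory being clausal, so it transfers without change.
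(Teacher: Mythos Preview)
Your hardness sketch is close in spirit to the paper's: both go through classical Horn reasoning, though the paper reduces from \emph{Horn abduction solution recognition} (transforming a classical Horn problem $\langle\Gamma,p,\Hmsf\rangle$ into an SCA problem by replacing $\vee$ with $\oplus$ and literals $r,\neg q$ with $r\geq\mathbf{1},q\leq\mathbf{0}$, and arguing via soundness of unit resolution in~$\Luk$), whereas you reduce from Horn entailment with a dummy hypothesis. Either route works, but yours needs one extra remark you omit: that classical Horn entailment $\Delta\models_\CPL p$ really coincides with $\Delta^\oplus\models_\Luk p$ for the translated clauses. The paper makes this explicit via unit resolution.

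The membership argument, however, has a genuine gap, and the paper takes a different route that avoids it. The paper does \emph{not} reduce to Proposition~\ref{prop:simpleclausescomplexity}; it translates $\Gamma\cup\{\sigma\}$ directly into a system of linear inequalities over $[0,1]$ (each simple clause $\bigoplus_i p_i\oplus\bigoplus_j\neg q_j$ becomes $\sum_i x_{p_i}+\sum_j(1-x_{q_j})\geq 1$, each interval literal $p\lozenge\cmbf$ becomes $x_p\lozenge c$) and invokes polynomial-time LP feasibility. Your plan instead tries to push everything through simple-clause satisfiability, and this breaks for the case where $\chi$ is a \emph{simple} clause. You claim that $\Gamma,\tau\not\models_\Luk\chi$ iff $\Gamma\cup\{\tau\}\cup\{\neg\mu_j\mid j\}$ is satisfiable; but for simple literals~$\mu_j$ this is false: $v(\bigoplus_j\mu_j)<1$ only requires $\sum_j v(\mu_j)<1$, not $v(\mu_j)=0$ for every~$j$. (Your parenthetical ``each $\neg\mu_j$ is again an interval literal'' shows you were tacitly thinking of interval clauses, where the argument is correct because interval literals are two-valued.) The negated simple clause contributes a single strict inequality $\sum_i x_{p_i}+\sum_j(1-x_{q_j})<1$, which is not itself a simple clause or interval literal; you need LP (or an ad hoc extension of the Bofill et al.\ procedure) to handle it. A secondary issue is that encoding an interval literal $p\geq\cmbf$ as a simple clause $p\oplus\cdots\oplus p$ is exponential in the bit-size of~$\cmbf$ for general rationals, so ``accommodated directly in the linear-time procedure'' is doing real work that you do not spell out. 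Once you switch to the LP formulation, both problems disappear, and the rest of your outline (properness via Proposition~\ref{prop:intervaltermcomplexity}, $\models_\Luk$-minimality via the polynomially many $\tau^\flat_\lambda$ candidates from Theorem~\ref{theorem:Lukminimalsolutionrecognitioncomplexity}) matches the paper.
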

\begin{proof}[Proof sketch]
For the membership, we provide a~sketch of the proof for the case of $\chi$ being an interval literal. Other cases can be dealt with in a~similar manner. Observe from Definitions~\ref{def:Luksemantics} and~\ref{def:intervalterms} that if $\kappa\!=\!\bigoplus^m_{i=1}p_i\oplus\bigoplus^n_{j=1}\!\!\neg q_j$ is an interval clause and $\tau\!=\!\bigodot^r_{i=1}(s_i\lozenge\cmbf_i)$ an interval term, then $v(\kappa)=1$ iff $\sum^m_{i=1}v(p_i)+\sum^n_{j=1}(1-v(q_j))\!\geq\!1$ and $v(\tau)\!=\!1$ iff $v(s_i)\lozenge c_i$ for each $i\in\{1,\ldots,r\}$.
It can now be easily shown that the satisfiability of sets of clauses and interval terms can be reduced to solving systems of linear inequalities over $[0,1]$. Similarly, entailment of an interval literal from a set $\Gamma$ containing simple clauses and interval terms can be reduced to verifying that a~system of linear inequalities \emph{does not} have a~solution over $[0,1]$. As both tasks can be done in polynomial time, arbitrary and proper solutions can be recognised in polynomial time. Finally, consider determining whether $\sigma\in\LukminSol(\Pmbb)$. From Theorem~\ref{theorem:Lukminimalsolutionrecognitioncomplexity}, we have that given $\sigma$, we only need to check polynomially many solution candidates, each of which requires polynomial time.

Let us now tackle the hardness. We provide a~logspace reduction from (proper, entailment-minimal) \emph{solution recognition for classical Horn theories}, which is $\Pmsf$-complete. Let $\Pmbb=\langle\Gamma,p,\Hmsf\rangle$ be a~classical Horn abduction problem and $p\notin\Hmsf$, clauses in $\Gamma$ be finite sets of literals, and $\tau\in\Sol(\Pmbb)$. I.e., $\Gamma\cup\{\tau\}$ is $\CPL$-satisfiable, and there is a~unit resolution inference of $\{p\}$ from $\Gamma\cup\{\tau\}$ (i.e., in each application of the resolution rule, at least one premise is a~unit clause). Let $\Gamma^\oplus$ be the result of replacing each clause $\{l_1,\ldots,l_k\}$ in $\Gamma$ with $l_1\oplus\ldots\oplus l_k$; $\tau^{\lozenge\Luk}$ be the result of replacing $\wedge$ with $\odot$, positive literals $r$ in $\tau$ with $r\!\geq\!\mathbf{1}$, and negative literals $\neg q$ with $q\leq\mathbf{0}$; and $\Hmsf^\Luk$ be the result of replacing $r$ in $\Hmsf$ with $r\!\geq\mathbf{1}$, and $\neg q$ with $q\!\leq\!\mathbf{0}$. The size of $\Pmbb^\oplus=\langle\Gamma^\oplus,p,\Hmsf^\Luk\rangle$ is linear in the size of $\Pmbb$, and $\Gamma^\oplus\cup\{\tau^{\lozenge\Luk}\}$ is $\Luk$-satisfiable because $\Gamma\cup\{\tau\}$ is $\CPL$-satisfiable. Furthermore, unit resolution is $\Luk$-sound: if $v(l_1\oplus\ldots\oplus l_k)=1$ and $v(l_1)=0$, then $v(l_2\oplus\ldots\oplus l_k)=1$. Hence, reusing the inference in~$\Luk$, we have $\Gamma^\oplus\cup\{\tau^{\lozenge\Luk}\}\models_\Luk p$. Thus, $\tau^{\lozenge\Luk}\in\Sol(\Pmbb^\oplus)$. Conversely, let $\tau$ solve $\Pmbb^\oplus$, i.e., $\Gamma^\oplus\cup\{\tau^{\lozenge\Luk}\}\consvDashLuk p$. Let $\tau_\cl$ be obtained from $\tau$ by replacing $\odot$ with $\wedge$, $r\!\geq\!\mathbf{1}$ with $r$, and $q\!\leq\!\mathbf{0}$ with $\neg q$. Clearly, $\Gamma,\tau_\cl\models_\CPL p$. Assume for contradiction that $\Gamma,\tau_\cl\models_\CPL\bot$. Then there is a classical unit resolution inference of the empty clause from $\Gamma\cup\{\tau_\cl\}$. As it is sound in~$\Luk$, $\Gamma,\tau_\cl\models_\Luk\bot$. Contradiction.

Observe that all solutions of $\Pmbb$ and $\Pmbb^\oplus$ \emph{are proper} because $p$ does not occur in them. Thus, our reduction can be used to show that determining the existence of proper solutions is $\Pmsf$-hard as well. For the $\Pmsf$-hardness of $\models_\Luk$-minimal solution recognition, one can check that if $\tau\in\LukminSol(\Pmbb)$, then $\tau^{\lozenge\Luk}\in\LukminSol(\Pmbb^\oplus)$. Conversely, if $\sigma\in\LukminSol(\Pmbb^\oplus)$, then $\sigma_\cl\in\LukminSol(\Pmbb)$.
\end{proof}

For the recognition of \emph{theory-minimal} solutions, we provide a~$\conp$ membership result.
\begin{restatable}{theorem}{SCAtheoryminimalrecognitionconp}\label{theorem:SCAtheoryminimalrecognitionconp}
It is in $\conp$ to decide, given an SCA problem~$\Pmbb$ and an interval term~$\tau$, whether $\tau$ is a~the\-o\-ry-minimal solution of $\Pmbb$.
\end{restatable}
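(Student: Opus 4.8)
The plan is to show that the complementary problem---given an SCA problem $\Pmbb=\langle\Gamma,\chi,\Hmsf\rangle$ and an interval term $\tau$, decide that $\tau$ is \emph{not} a theory-minimal solution of $\Pmbb$---lies in $\np$. Unfolding the definition, $\tau$ fails to be theory-minimal exactly when either (i) $\tau$ is not a proper solution of $\Pmbb$, or (ii) $\tau$ is a proper solution and there is a proper solution $\sigma$ with $\Gamma,\tau\models_\Luk\sigma$ and $\Gamma,\sigma\not\models_\Luk\tau$. By Theorem~\ref{theorem:SCArecognitionpolynomial}, recognising proper solutions of an SCA problem is in $\Pmsf$, so condition~(i) is decided in polynomial time. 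For condition~(ii), I would exploit that any interval term built from hypotheses is, since $\lambda\odot\lambda\equiv_\Luk\lambda$ for interval literals, strongly equivalent to the strong conjunction of a \emph{subset} of $\Hmsf$ (and strong equivalence preserves being a proper solution as well as both entailments in question); hence the certificate is a subset $S\subseteq\Hmsf$, encoded in $|\Hmsf|$ bits, and we set $\sigma=\bigodot_{\lambda\in S}\lambda$.

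Given the guessed $\sigma$, the verifier checks three things in polynomial time: (a) $\sigma$ is a proper solution of $\Pmbb$, which is a polynomial-time check by Theorem~\ref{theorem:SCArecognitionpolynomial}; (b) $\Gamma,\tau\models_\Luk\sigma$; and (c) $\Gamma,\sigma\not\models_\Luk\tau$. For (b) and (c) I would use that interval literals are two-valued and $\odot$ acts as $\wedge$ on $\{0,1\}$, so a finite set of premises $\Luk$-entails $\bigodot_i\lambda_i$ iff it $\Luk$-entails every $\lambda_i$. Thus (b) reduces to checking, for each interval literal $\lambda$ occurring in $\sigma$, that $\Gamma\cup\{\tau,\neg\lambda\}$ is $\Luk$-unsatisfiable, and (c) reduces to exhibiting one interval literal $\lambda$ occurring in $\tau$ for which $\Gamma\cup\{\sigma,\neg\lambda\}$ is $\Luk$-satisfiable. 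Each set arising here consists of simple clauses together with interval terms---the members of $\Gamma$, the conjuncts $\tau$ and $\sigma$, and the length-one interval terms $\neg\lambda$---and only polynomially many such queries are needed.

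The last ingredient, which is already used in the proof of Theorem~\ref{theorem:SCArecognitionpolynomial}, is that $\Luk$-satisfiability of a finite set of simple clauses and interval terms reduces in polynomial time to feasibility over $[0,1]$ of a system of (possibly strict) linear inequalities: a simple clause $\bigoplus_i l_i$ yields the constraint $\sum_i v(l_i)\geq 1$ with $v(\neg p)$ read as $1-v(p)$, and an interval term $\bigodot_i(s_i\lozenge\cmbf_i)$ yields the constraints $v(s_i)\lozenge\cmbf_i$; feasibility of such systems is polynomial. Hence (a), (b), and (c) are all polynomial, the verifier runs in polynomial time, ``$\tau$ is not theory-minimal'' is in $\np$, and therefore theory-minimal solution recognition for SCA problems is in $\conp$. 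I expect the only delicate point to be the bookkeeping that adjoining $\tau$, $\sigma$, and the complemented literals $\neg\lambda$ keeps us inside the ``simple clauses and interval terms'' fragment, so that every entailment subquery collapses to the polynomial-time linear-feasibility test; this collapse is exactly what brings the bound down to $\conp$ from the $\Pi^\Pmsf_2$ of Theorem~\ref{theorem:theoryminimalrecognition}, since afterwards only the single existential guess of $\sigma$ remains above the polynomial level.
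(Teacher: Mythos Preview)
Your proposal is correct and follows essentially the same approach as the paper: show the complement is in $\np$ by guessing a witness $\sigma$ and checking in polynomial time that $\sigma$ is a proper solution, $\Gamma,\tau\models_\Luk\sigma$, and $\Gamma,\sigma\not\models_\Luk\tau$, appealing to Theorem~\ref{theorem:SCArecognitionpolynomial} (and its underlying linear-feasibility argument) for the polynomial-time checks. Your write-up simply spells out more of the details that the paper leaves implicit, in particular the reduction of the two entailment tests to finitely many satisfiability queries over simple clauses plus interval terms.
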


As expected, \emph{solution existence} for simple clause abduction problems is $\np$-complete.
\begin{restatable}{theorem}{SCAsolutionexistencenp}\label{theorem:SCAsolutionexistencenp}
Given a~simple clause abduction problem $\Pmbb$, it is $\np$-complete to decide whether it has (proper) solutions.
\end{restatable}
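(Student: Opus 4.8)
The plan is to establish membership in $\np$ and hardness separately. For membership, observe that by Theorem~\ref{theorem:SCArecognitionpolynomial}, recognising whether a given interval term $\tau$ built from hypotheses in $\Hmsf$ is a (proper) solution of an SCA problem $\Pmbb$ can be done in polynomial time. Since an interval term is a strong conjunction of at most $|\Hmsf|$ interval literals, its size is polynomial in $|\Pmbb|$. Therefore I would guess such a term $\tau$ nondeterministically and then run the polynomial-time recognition check; this gives an $\np$ procedure for deciding whether $\Sol(\Pmbb)\neq\varnothing$ (and likewise for $\PropSol(\Pmbb)\neq\varnothing$).

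For hardness, I would reduce from a classical $\np$-complete problem. The natural candidate is solution existence for \emph{classical abduction with definite Horn theories}, or — even more directly — plain propositional satisfiability, packaged as an abduction problem. Concretely, given a CNF formula $\phi=\bigwedge_j\kappa_j$ over variables $x_1,\dots,x_n$, I would build an SCA problem whose theory consists of the clauses $\kappa_j$ (each a simple $\oplus$-clause, using~\eqref{equ:alternativeclauses}/\eqref{equ:demorganproperties} to match the classical clause on $\{0,1\}$) together with the bivalence constraints $x_i\vee\neg x_i$ rewritten via interval terms as in Remark~\ref{rem:intervalterms} (note SCA problems permit interval terms in $\Gamma$), take $\chi$ to be a fresh tautology-free observation $p$ guarded so that solutions must actually force $\phi$ to be satisfiable, and let $\Hmsf=\{x_i\geq\mathbf{1},\,x_i\leq\mathbf{0}\mid 1\le i\le n\}$. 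The key correctness claim, proved as in Theorem~\ref{theorem:arbitrarysolutionrecognitioncomplexity} using the fact that the bivalence axioms force every $\Luk$-valuation satisfying $\Gamma$ to be classical and that $\neg,\odot,\oplus$ behave classically on $\{0,1\}$, is that $\Pmbb$ has a solution iff $\phi$ is satisfiable.

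Alternatively, and more in keeping with the reductions already used in the paper, I would reduce from classical Horn abduction solution existence, which is $\np$-complete, reusing the translation $\Gamma\mapsto\Gamma^\oplus$, $\Hmsf\mapsto\Hmsf^\Luk$, $\tau\mapsto\tau^{\lozenge\Luk}$ from the proof of Theorem~\ref{theorem:SCArecognitionpolynomial}: there it is shown that $\tau\in\Sol(\Pmbb)$ iff $\tau^{\lozenge\Luk}\in\Sol(\Pmbb^\oplus)$, and since every hypothesis term translates faithfully in both directions, $\Pmbb$ has a solution iff $\Pmbb^\oplus$ does. This immediately yields $\np$-hardness of SCA solution existence, and since in that reduction all solutions are proper (the observation $p$ does not occur among the hypotheses), it simultaneously gives $\np$-hardness of proper solution existence.

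The main obstacle I anticipate is not the hardness direction — which is essentially a repackaging of an existing reduction — but making sure the membership argument genuinely stays in $\np$ rather than drifting upward: one must check that \emph{consistent} entailment $\Gamma,\tau\consvDashLuk\chi$ for the clausal fragment, with $\chi$ ranging over interval clauses, simple clauses, interval terms, or simple terms, really is polynomial-time checkable (via Proposition~\ref{prop:simpleclausescomplexity}, the linear-inequality reformulation sketched in Theorem~\ref{theorem:SCArecognitionpolynomial}, and a case split on the shape of $\chi$), so that the guessed certificate $\tau$ can be verified deterministically in polynomial time. Once that is in place, both halves close routinely.
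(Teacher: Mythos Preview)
Your second alternative---reducing from classical Horn abduction solution existence by reusing the translation $\Gamma\mapsto\Gamma^\oplus$, $\Hmsf\mapsto\Hmsf^\Luk$ from Theorem~\ref{theorem:SCArecognitionpolynomial}---is exactly the paper's proof, and your membership argument is likewise identical (guess $\tau$, verify via Theorem~\ref{theorem:SCArecognitionpolynomial}).

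One caution about the first alternative you sketch: you propose to put the bivalence constraints $x_i\vee\neg x_i$ into the SCA theory ``rewritten via interval terms as in Remark~\ref{rem:intervalterms}''. This does not work. Remark~\ref{rem:intervalterms} only converts simple \emph{terms} into interval terms, and $x_i\vee\neg x_i$ is neither a simple term nor weakly equivalent to any interval term: an interval term over $x_i$ can only carve out an \emph{interval} of permitted values, never the two-point set $\{0,1\}$. Nor is $x_i\vee\neg x_i$ a simple $\oplus$-clause (the simple clause $x_i\oplus\neg x_i$ is $\Luk$-valid and hence useless). Since an SCA theory is, by Definition~\ref{def:simpleclauseabduction}, a set of simple clauses and interval terms, there is no way to force bivalence inside $\Gamma$, and the direct-from-SAT route collapses (recall Proposition~\ref{prop:simpleclausescomplexity}: satisfiability of simple $\oplus$-clauses is linear-time). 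So stick with the Horn-abduction reduction; the SAT packaging does not go through in this fragment.
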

\begin{proof}
The membership is immediate from Theorem~\ref{theorem:SCArecognitionpolynomial}. For the hardness, we can reuse the reduction from classical Horn abduction given in Theorem~\ref{theorem:SCArecognitionpolynomial}.
\end{proof}

The $\np$-completeness of relevance of hypotheses can be obtained using the reduction shown in~\eqref{equ:relevanceLuk}, setting $\chi=p$ and assuming that $\Gamma$ is a~set of simple clauses. Indeed, if $\psi$ is a~simple clause, then $t\rightarrow\psi$ can be represented as a~simple clause (recall~\eqref{equ:alternativeclauses}).
\begin{theorem}\label{theorem:SCArelevance}
It is $\np$-complete (resp.\ $\conp$-complete) to decide, given an SCA problem $\Pmbb=\langle\Gamma,\chi,\Hmsf\rangle$ and $\lambda\in \Hmsf$, whether $\lambda$ is relevant (resp.\ necessary) w.r.t.\ $\Sol(\Pmbb)$. The same holds for relevance and necessity w.r.t.\ $\PropSol(\Pmbb)$ and $\LukminSol(\Pmbb)$.
\end{theorem}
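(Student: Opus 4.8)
The plan is to handle membership and hardness separately, reusing the machinery of Sections~\ref{sec:Lukabduction} and~\ref{ssec:simpleclauseabduction}. For membership, relevance (w.r.t.\ $\Sol(\Pmbb)$, $\PropSol(\Pmbb)$, or $\LukminSol(\Pmbb)$) is in $\np$: I would nondeterministically guess an interval term $\tau$ composed of hypotheses from $\Hmsf$ with $\lambda\in\tau$, and then verify in polynomial time, via Theorem~\ref{theorem:SCArecognitionpolynomial}, that $\tau$ is a solution of the required kind. Necessity is the complementary problem --- $\lambda$ is \emph{not} necessary exactly when some solution of the required kind omits $\lambda$ --- so by the same polynomial check it lands in $\conp$.

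For hardness, I would reduce from SCA solution existence, which Theorem~\ref{theorem:SCAsolutionexistencenp} shows is $\np$-complete; and, inspecting the reduction used there, I may assume the source instance has the form $\Pmbb=\langle\Gamma,p,\Hmsf\rangle$ with $\Gamma$ a set of simple clauses and $p\in\Prop$ the observation. Applying the construction~\eqref{equ:relevanceLuk} with $\chi:=p$ yields $\Pmbb^\rmsf=\langle\Gamma^\rmsf,\chi^\rmsf,\Hmsf^\rmsf\rangle$, and the key point to verify is that $\Pmbb^\rmsf$ is again an SCA problem. This follows from~\eqref{equ:demorganproperties} and~\eqref{equ:alternativeclauses}: $t\rightarrow\psi\equiv_\Luk\neg t\oplus\psi$ is a simple clause whenever $\psi$ is; the four fixed axioms $t'\rightarrow p$, $t\rightarrow\neg t'$, $t\rightarrow t''$, and $t'\rightarrow t''$ are simple two-literal clauses; and $\chi^\rmsf=t''\odot p$ is a simple term, which Definition~\ref{def:simpleclauseabduction} permits as an observation.

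To finish, I would invoke the combinatorial analysis from the proof of Theorem~\ref{theorem:relevancecomplexity}, which decomposes $\Sol(\Pmbb^\rmsf)$ into the family of terms $\varrho\odot(t\geq\mathbf{1})$ with $\varrho\in\Sol(\Pmbb)$ together with the family of terms $\big(\bigodot_{l\in\Hmsf'}l\big)\odot(t'\geq\mathbf{1})$ for $\Hmsf'\subseteq\Hmsf$, observes that all of these are proper (as $t''\notin\Hmsf^\rmsf$), and transfers the relevance/necessity facts to $\LukminSol(\Pmbb^\rmsf)$. That argument is purely combinatorial and uses no structural property of $\Gamma$, so it applies unchanged here: $\Pmbb$ has a solution iff $t$ is relevant w.r.t.\ $\Sol(\Pmbb^\rmsf)$ (equivalently, w.r.t.\ $\PropSol(\Pmbb^\rmsf)$ or $\LukminSol(\Pmbb^\rmsf)$), giving $\np$-hardness of relevance, and $\Pmbb$ has \emph{no} solution iff $t'$ is necessary, giving $\conp$-hardness of necessity (since the complement of SCA solution existence is $\conp$-hard). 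I expect the only genuine obstacle to be the bookkeeping in this last step --- confirming that guarding each theory clause by $t$ keeps it inside the simple-clause fragment --- which is exactly what~\eqref{equ:alternativeclauses} secures; no other part of the reduction threatens to leave the fragment.
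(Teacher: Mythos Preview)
Your proposal is correct and follows the paper's own approach essentially verbatim: the paper's proof is a one-sentence remark that the reduction~\eqref{equ:relevanceLuk} can be reused with $\chi=p$ and $\Gamma$ a set of simple clauses because $t\rightarrow\psi$ is again a simple clause by~\eqref{equ:alternativeclauses}, and you have simply spelled out the membership step (via Theorem~\ref{theorem:SCArecognitionpolynomial}) and the bookkeeping that the resulting $\Pmbb^\rmsf$ stays inside the SCA fragment.
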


We finish the section with two remarks. First, the complexity of simple clause abduction coincides with that of \emph{classical Horn abduction}~\cite{EiterGottlob1995,CreignouZanuttini2006}. Thus, Łukasiewicz abduction with clausal theories is simpler than classical abduction over clausal theories. Second, one can easily see that simple clause abduction problems can be straightforwardly generalised to problems whose theories are \emph{Łukasiewicz fuzzy logic programmes} as presented by~\cite{Vojtas1999,Vojtas2001} \emph{while preserving the complexity}.%
\begin{definition}\label{def:FLPAP}
A~\emph{Łukasiewicz fuzzy logic programme} ($\Luk$-FLP) is a~finite set $\Gamma_\Pmsf=\{\langle\kappa_i,x_i\rangle\mid 1\!\leq\!i\!\leq\!n, x_i\!\in\!(0,1]_\Qmbb\}$ with $\kappa_i$'s being simple clauses written as $\bigodot^m_{i=1}l_i\rightarrow l$. Pairs $\langle\kappa_i,x_i\rangle$ are called \emph{fuzzy rules}. An $\Luk$-valuation $v$ \emph{satisfies $\Gamma_\Pmsf$} if $v(\kappa_i)=x_i$ for every $\langle\kappa_i,x_i\rangle\in\Gamma_\Pmsf$.

An \emph{$\Luk$-FLP abduction problem} is a~tuple $\Pmbb=\langle\Gamma_\Pmsf,\chi,\Hmsf\rangle$ with $\Gamma_\Pmsf$ being an~$\Luk$-FLP, $\chi$ a~fuzzy rule or interval term, and $\Hmsf$ a~set of interval literals. A~solution to $\Pmbb$ is defined as in Definition~\ref{def:Lukabduction}.
\end{definition}

One can observe from Definition~\ref{def:simpleclauseabduction} that simple clause abduction problems are a~particular case of $\Luk$-FLP abduction problems (namely, when $x_i=1$ for every~$i$). Thus, the hardness results are preserved. For the membership, note that $\Luk$-FLP abduction problems can be reduced to solving systems of linear inequalities in the same way as simple clause abduction problems.
\subsection{Interval Clause Fragment\label{ssec:intervalclausesabduction}}
Results in Section~\ref{sec:Lukabduction} show that the complexity of $\Luk$-abduction in the general case was not affected by whether we allow the use of interval literals \emph{in theories or observations} (recall from Definition~\ref{def:Lukabduction} that theories and observations are defined over~$\LLukint$). Interval literals can also reduce the size and facilitate the understanding (for a~human) of an abduction problem 
while preserving its solutions. E.g., the theory $\Gamma_\mathsf{lift}$ of the problem $\Pmbb_\mathsf{lift}$ from Examples~\ref{example:thresholds} and~\ref{example:thresholdssolution} can be reformulated as follows: $\Gamma^\Qmbb_\mathsf{lift}=\{(c\geq\mathbf{\tfrac{1}{4}})\leftrightarrow g,(c\leq\mathbf{\tfrac{2}{3}})\leftrightarrow b\}$. Observe that $(c\!\geq\!\mathbf{\tfrac{1}{4}})\!\leftrightarrow\!g$ is weakly equivalent to $(c\!\oplus\!c\!\oplus\!c\!\oplus\!c)\!\leftrightarrow\!g$ and $(c\!\leq\!\mathbf{\tfrac{2}{3}})\!\leftrightarrow\!b$ to $(\neg c\!\oplus\!\neg c\!\oplus\!\neg c)\!\leftrightarrow\!b$.
So, for $\Pmbb^\Qmbb_\mathsf{lift}=\langle\Gamma^\Qmbb_\mathsf{lift},g\odot b,\Hmsf_\mathsf{lift}\rangle$, we have $\Sol(\Pmbb_\mathsf{lift})=\Sol(\Pmbb^\Qmbb_\mathsf{lift})$ (and likewise for other types of solutions).
Moreover, using interval literals in $\oplus$-clauses simplifies the presentation of relations between values. E.g., ‘if $v(p)\geq\tfrac{1}{2}$, then $v(q)\!\leq\!\tfrac{3}{4}$’ can be put as $(p\!\geq\!\mathbf{\tfrac{1}{2}})\!\rightarrow\!(q\!\leq\!\mathbf{\tfrac{3}{4}})$.

Thus, one might wonder whether we can permit theories built from \emph{interval clauses} (recall Definition~\ref{def:intervalterms}) and preserve the complexity bounds from Section~\ref{ssec:simpleclauseabduction}. As the next statements show, this is not generally the case.
\begin{restatable}{theorem}{intervalclauseabductioncomplexityrecognition}\label{theorem:intervalclauseabductioncomplexityrecognition}
Let $\Pmbb\!=\!\langle\Gamma,p,\Hmsf\rangle$ be an $\Luk$-abduction problem s.t.\ $\Gamma$ is a~set of interval clauses and $p\!\in\!\Prop$. Then, for an interval term $\tau$, it is $\DP$-complete to decide whether \mbox{$\tau\in\Sol(\Pmbb)$.}
\end{restatable}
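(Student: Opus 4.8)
I would prove $\DP$-membership and $\DP$-hardness separately. For membership, observe that $\tau\in\Sol(\Pmbb)$ iff $\Gamma,\tau\not\models_\Luk\bot$ (an $\np$ condition by Proposition~\ref{prop:Lukintnpcomplete}, since $\LLukint$-satisfiability is $\np$-complete and $\Gamma$ together with $\tau$ is an $\LLukint$-formula) and $\Gamma,\tau\models_\Luk\chi$ (a $\conp$ condition by the same proposition). The conjunction of an $\np$ and a $\conp$ condition is exactly a $\DP$ predicate, so membership in $\DP$ is immediate and in fact no different from the argument in Theorem~\ref{theorem:arbitrarysolutionrecognitioncomplexity}; the point of the present theorem is that restricting the theory to interval clauses does \emph{not} drop the complexity below $\DP$, in contrast to the simple-clause case (Theorem~\ref{theorem:SCArecognitionpolynomial}).

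For hardness, the key obstacle is that the reduction from classical solution recognition used in Theorem~\ref{theorem:arbitrarysolutionrecognitioncomplexity} produces a theory $\Delta^\sharp = \Delta^\Luk \cup \{p\vee\neg p \mid \ldots\}$ whose members are arbitrary $\LLuk$-formulas, not interval clauses. So I need a $\DP$-hard problem that can be encoded with interval clauses only. I would again reduce from classical solution recognition (which is $\DP$-complete by Eiter–Gottlob), but first massage the classical problem. The trick is that an interval clause can encode a fair amount: a classical clause $l_1\vee\cdots\vee l_k$ corresponds (on $\{0,1\}$-valued variables) to the simple clause $l_1\oplus\cdots\oplus l_k$, which is a special interval clause; and the ``classicality'' constraints $p\vee\neg p$, i.e.\ $v(p)\in\{0,1\}$, are exactly what we need to \emph{avoid}, because interval clauses can't express them directly. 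The resolution is to use interval literals instead: the clause $(p\geq\mathbf{1})\oplus(p\leq\mathbf{0})$ is an interval clause, and $v((p\geq\mathbf{1})\oplus(p\leq\mathbf{0}))=1$ iff $v(p)\in\{0,1\}$, since each disjunct is $\{0,1\}$-valued and their sum reaches $1$ precisely when one of them is $1$. Thus I can replace every conjunct $p\vee\neg p$ of the old theory by the interval clause $(p\geq\mathbf{1})\oplus(p\leq\mathbf{0})$, forcing classical behaviour on the relevant variables while keeping the theory within the interval-clause fragment. The observation $\chi$ in the classical problem can be taken to be a single variable $p$ (classical recognition stays $\DP$-hard for such observations, since from an arbitrary problem $\langle\Delta,\psi,\Hmsf\rangle$ one passes to $\langle\Delta\cup\{\psi\leftrightarrow p'\},p',\Hmsf\rangle$ — but here I must be careful that $\psi\leftrightarrow p'$ is not a clause; instead I use the standard definitional/Tseitin transformation of $\psi$ into a set of clauses plus a fresh target variable, each clause then becoming an $\oplus$-clause on the corresponding $\geq\mathbf{1}$/$\leq\mathbf{0}$ literals, or more simply just a simple clause since the Tseitin variables are two-valued). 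This keeps the observation $p\in\Prop$ as required by the statement.

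Then, given a classical abduction problem $\Delta_\cl,\psi_\cl=p,\Hmsf^\cl$ with $\Delta_\cl$ a set of clauses, I define $\Gamma = \{\,l_1^{\lozenge}\oplus\cdots\oplus l_k^{\lozenge} \mid \{l_1,\ldots,l_k\}\in\Delta_\cl\,\}\cup\{\,(r\geq\mathbf{1})\oplus(r\leq\mathbf{0}) \mid r\in\Prop[\Delta_\cl\cup\{p\}]\,\}$, where $r^{\lozenge}$ is $r\geq\mathbf{1}$ if $r$ is a positive literal and $r\leq\mathbf{0}$ if $r$ is a negative literal; I set $\Hmsf^\sharp = \{r\geq\mathbf{1}\mid r\in\Hmsf^\cl\}\cup\{r\leq\mathbf{0}\mid \neg r\in\Hmsf^\cl\}$; and for a candidate classical solution $\tau=\bigwedge_i l_i$ I set $\tau^\odot=\bigodot_i l_i^\lozenge$. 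The correctness argument then mirrors Theorem~\ref{theorem:arbitrarysolutionrecognitioncomplexity}: every $\Luk$-valuation satisfying $\Gamma$ is classical on the relevant variables (because of the $(r\geq\mathbf{1})\oplus(r\leq\mathbf{0})$ clauses and the fact that each interval literal and each interval term is $\{0,1\}$-valued), $\oplus/\odot/\neg$ behave as $\vee/\wedge/\neg$ on $\{0,1\}$, so $\Gamma,\tau^\odot\models_\Luk p$ iff $\Delta_\cl,\tau\models_\CPL p$ and $\Gamma,\tau^\odot\not\models_\Luk\bot$ iff $\Delta_\cl,\tau\not\models_\CPL\bot$; hence $\tau^\odot\in\Sol(\Pmbb)$ iff $\tau\in\Sol(\Pmbb_\cl)$. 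The reduction is clearly polynomial (indeed logspace), establishing $\DP$-hardness and completing the proof. The main subtlety to get right is the Tseitin step for the observation, ensuring the target stays a single propositional variable as the statement demands while all theory formulas remain interval clauses.
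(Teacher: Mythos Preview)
Your membership argument is fine and matches the paper. Your hardness reduction is also correct, but the paper takes a slicker route. You encode a positive literal $r$ as $r\geq\mathbf{1}$ and a negative literal $\neg r$ as $r\leq\mathbf{0}$, which forces you to add the extra interval clauses $(r\geq\mathbf{1})\oplus(r\leq\mathbf{0})$ to pin every variable to $\{0,1\}$. The paper instead encodes $\neg r$ as $r<\mathbf{1}$. Since $r\geq\mathbf{1}$ and $r<\mathbf{1}$ are complementary interval literals (for every $\Luk$-valuation exactly one of them evaluates to $1$), each translated clause already behaves classically under \emph{any} $\Luk$-valuation, and no classicality clauses are needed: from an arbitrary $v$ one recovers a classical witness by setting $v_\cl(q)=1$ iff $v(q)=1$. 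So the paper's reduction is shorter and sidesteps your Tseitin discussion entirely (it simply starts from a classical clause theory with a variable observation, which is already known to be $\DP$-hard). What your approach buys is conceptual reuse of the $\{\geq\mathbf{1},\leq\mathbf{0}\}$ encoding from Theorem~\ref{theorem:arbitrarysolutionrecognitioncomplexity}, at the price of the extra clauses; what the paper's approach buys is that the image theory is literally the original clause set with literals relabelled, with no auxiliary formulas at all. One small wrinkle in your write-up: the parenthetical ``or more simply just a simple clause since the Tseitin variables are two-valued'' is off, because in this theorem $\Gamma$ must consist of \emph{interval} clauses, and simple clauses are not interval clauses by Definitions~\ref{def:simpleclauses} and~\ref{def:intervalterms}; stick to the $\geq\mathbf{1}/\leq\mathbf{0}$ encoding throughout.
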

\begin{proof}[Proof sketch]
Membership follows from Theorem~\ref{theorem:arbitrarysolutionrecognitioncomplexity}. Hardness can be shown via a~reduction from solution recognition. in~$\CPL$. Let $\Pmbb=\langle\Gamma,p,\Hmsf\rangle$ be a~classical abduction problem with $\Gamma$ a~set of $\vee$-clauses. We replace all positive literals $r$ occurring in~$\Pmbb$ with $r\geq\mathbf{1}$ and all negative literals $\neg s$ with $s<\mathbf{1}$. Given~$\tau\in\Sol(\Pmbb)$, we produce an $\Luk$-solution by a~similar replacement of literals and changing $\wedge$ to $\odot$. Conversely, a~classical solution for~$\Pmbb$ can be obtained from a~solution for the $\Luk$-abduction problem by a~reverse replacement.
\end{proof}

The $\Sigma^\Pmsf_2$-hardness of solution existence for $\Luk$-abduction in the interval clause fragment can be obtained by a~reduction used in Theorem~\ref{theorem:intervalclauseabductioncomplexityrecognition}.
\begin{restatable}{theorem}{intervalclauseabductioncomplexityexistence}\label{theorem:intervalclauseabductioncomplexityexistence}
Let $\Pmbb=\langle\Gamma,l,\Hmsf\rangle$ be an $\Luk$-abduction problem s.t.\ $\Gamma$ is a~set of interval clauses. Then it is $\Sigma^\Pmsf_2$-complete to decide whether there is a~(proper) solution to~$\Pmbb$.
\end{restatable}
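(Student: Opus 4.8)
The plan is to establish membership and hardness separately. Membership in $\Sigma^\Pmsf_2$ is inherited directly from Theorem~\ref{theorem:arbitrarysolutionexistence}: interval clauses are ordinary $\LLukint$-formulas, so an $\Luk$-abduction problem with an interval-clausal theory is a special case of the problems treated there. Explicitly, one guesses an interval term $\tau$ over the polynomially many hypotheses of $\Hmsf$ and then runs the recognition test for $\tau\in\Sol(\Pmbb)$ (resp.\ $\tau\in\PropSol(\Pmbb)$), which by Theorem~\ref{theorem:intervalclauseabductioncomplexityrecognition} is in $\DP$; merging the existential half of that $\DP$ test with the guess of $\tau$ leaves a single $\Pi^\Pmsf_1$ verification, so the whole procedure is in $\Sigma^\Pmsf_2$, exactly as for the upper bound in Theorem~\ref{theorem:arbitrarysolutionexistence}.

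For hardness I would reuse the literal-replacement reduction from the proof of Theorem~\ref{theorem:intervalclauseabductioncomplexityrecognition}, now applied to classical solution \emph{existence}. The first step is to observe that classical solution existence remains $\Sigma^\Pmsf_2$-hard when restricted to problems $\langle\Gamma,l,\Hmsf\rangle$ with $\Gamma$ a set of $\vee$-clauses and $l$ a single literal: one can either invoke the classification of \cite{CreignouZanuttini2006}, or start from the $\Sigma^\Pmsf_2$-hard family~\eqref{equ:solutionexistenceCPL} used in Theorem~\ref{theorem:arbitrarysolutionexistence}, take its underlying quantified formula with a DNF matrix so that $\neg\phi\vee(p\wedge\tau)$, $\neg p\vee\tau$ and $\neg r\Leftrightarrow r'$ all become conjunctions of $\vee$-clauses after distribution, and replace the conjunctive observation $p\wedge\tau$ by a fresh variable $m\notin\Hmsf$ together with the clausal definition of $m\leftrightarrow(p\wedge\tau)$; since $m$ is fresh and unavailable as a hypothesis, this changes neither the consistency of the theory with any candidate nor the set of solutions. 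Given such a problem $\Pmbb=\langle\Gamma,l,\Hmsf\rangle$, I would form $\Pmbb'=\langle\Gamma',l',\Hmsf'\rangle$ exactly as in Theorem~\ref{theorem:intervalclauseabductioncomplexityrecognition}: replace each positive literal $r$ by the interval literal $r\geq\mathbf{1}$, each negative literal $\neg s$ by $s<\mathbf{1}$, and $\wedge,\vee$ by $\odot,\oplus$; then $\Gamma'$ is a set of interval clauses, $\Hmsf'$ a set of interval literals and $l'$ an interval literal, so $\Pmbb'$ has the required form.

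Correctness rests on the Boolean collapse $v\mapsto v^\cl$ with $v^\cl(x)=1$ iff $v(x)=1$: the interval literals $r\geq\mathbf{1}$ and $s<\mathbf{1}$ are $\{0,1\}$-valued and their value under an $\Luk$-valuation $v$ depends only on $v^\cl$, while $\odot$ and $\oplus$ agree with $\wedge$ and $\vee$ on $\{0,1\}$; hence $v$ satisfies $\Gamma'\cup\{\tau'\}$ iff $v^\cl$ satisfies $\Gamma\cup\{\tau\}$, and $v\models_\Luk l'$ iff $v^\cl\models_\CPL l$. Together with the fact that every classical valuation is itself an $\Luk$-valuation, this gives $\Gamma,\tau\consvDashCPL l$ iff $\Gamma',\tau'\consvDashLuk l'$ under the obvious literal-replacement bijection between classical terms over $\Hmsf$ and interval terms over $\Hmsf'$, so $\Pmbb$ has a solution iff $\Pmbb'$ does; and since neither $l$ nor $l'$ occurs among the hypotheses, all of these solutions are automatically proper, which also yields $\Sigma^\Pmsf_2$-hardness of proper solution existence. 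The main obstacle I anticipate is the normalisation step of the hardness argument --- checking that the canonical $\Sigma^\Pmsf_2$-hard abduction instances can be recast with a $\vee$-clausal theory and a single-literal observation without altering the solution set --- whereas the $\Luk$-specific collapse argument is an almost verbatim reprise of the correctness proof of Theorem~\ref{theorem:intervalclauseabductioncomplexityrecognition}.
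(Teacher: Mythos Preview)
Your proposal is correct and follows essentially the same approach as the paper, which merely states that $\Sigma^\Pmsf_2$-hardness ``can be obtained by a reduction used in Theorem~\ref{theorem:intervalclauseabductioncomplexityrecognition}'' without spelling out the normalisation of the classical source instance. You supply the details the paper omits --- in particular the clausification of the Eiter--Gottlob family~\eqref{equ:solutionexistenceCPL} and the replacement of the conjunctive observation by a fresh non-hypothesis variable --- and your Boolean-collapse correctness argument is exactly the mechanism behind the appendix proof of Theorem~\ref{theorem:intervalclauseabductioncomplexityrecognition}.
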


As expected, determining the relevance of $\lambda\in\Hmsf$ is $\Sigma^\Pmsf_2$-hard. The next statement can be obtained by a~reduction shown in~\eqref{equ:relevanceLuk}. The only difference is that instead of \emph{variables} $t$, $t'$, $t''$, we take \emph{interval literals} $t\geq\mathbf{1}$, $t'\geq\mathbf{1}$, and $t''\geq\mathbf{1}$.
\begin{theorem}\label{theorem:ICrelevancecomplexity}
It is $\Sigma^\Pmsf_2$-complete (resp.\ $\Pi^\Pmsf_2$-complete) to decide, given a~$\Luk$-abduction problem $\Pmbb=\langle\Gamma,\chi,\Hmsf\rangle$ s.t.\ $\Gamma$ is a~set of interval clauses and $\lambda\in\Hmsf$, whether $\lambda$ is relevant (resp.\ necessary) w.r.t.\ $\Sol(\Pmbb)$. The same holds for relevance and necessity w.r.t.\ $\PropSol(\Pmbb)$ and $\LukminSol(\Pmbb)$.
\end{theorem}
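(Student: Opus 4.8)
The plan is to establish membership and hardness separately, in both cases leveraging the infrastructure already built for the general case in Section~\ref{sec:Lukabduction}. For membership, I would observe that interval clause theories are a special case of arbitrary $\LLukint$-theories, so the $\Sigma^\Pmsf_2$ (resp.\ $\Pi^\Pmsf_2$) upper bounds for relevance (resp.\ necessity) w.r.t.\ $\Sol(\Pmbb)$, $\PropSol(\Pmbb)$, and $\LukminSol(\Pmbb)$ follow directly from Theorem~\ref{theorem:relevancecomplexity} (the general-case result), or equivalently by the same ``guess a solution, then verify'' argument using Theorems~\ref{theorem:intervalclauseabductioncomplexityrecognition} and the fact that $\models_\Luk$-minimal solution recognition stays in $\DP$ (it needs only polynomially many satisfiability/entailment checks, each of which, by Proposition~\ref{prop:Lukintnpcomplete}, is in $\np$/$\conp$ even over $\LLukint$).

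For hardness, I would adapt the reduction of Eiter and Gottlob encoded in~\eqref{equ:relevanceLuk}. Recall that in that reduction one starts from an $\Luk$-abduction problem $\Pmbb=\langle\Gamma,\chi,\Hmsf\rangle$, adds three fresh atoms $t,t',t''$, and forms $\Gamma^\rmsf$ containing $t\rightarrow\psi$ for each $\psi\in\Gamma$ together with $t'\rightarrow\chi$, $t\rightarrow\neg t'$, $t\rightarrow t''$, $t'\rightarrow t''$, sets $\chi^\rmsf=t''\odot\chi$, and $\Hmsf^\rmsf=\Hmsf\cup\{t\geq\mathbf{1},t'\geq\mathbf{1}\}$; then $t$ is relevant and $t'$ is not necessary w.r.t.\ $\Sol(\Pmbb^\rmsf)$ iff $\Pmbb$ has a solution. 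Here the starting point $\Pmbb$ will be an \emph{interval clause} $\Luk$-abduction problem of the kind used in Theorem~\ref{theorem:intervalclauseabductioncomplexityexistence}, whose solution existence is already known to be $\Sigma^\Pmsf_2$-hard. The key observation, as the remark preceding the statement indicates, is that the auxiliary formulas introduced by the reduction remain expressible as interval clauses: if $\psi$ is an interval clause $\bigoplus_i\lambda_i$, then $t\rightarrow\psi$ is strongly equivalent, via~\eqref{equ:alternativeclauses} and~\eqref{equ:demorganproperties}, to the interval clause $(t<\mathbf{1})\oplus\bigoplus_i\lambda_i$ (writing $t<\mathbf{1}$ for $\neg(t\geq\mathbf{1})$); similarly $t'\rightarrow(l\bmsf\lozenge\cmbf)$ becomes $(t'<\mathbf{1})\oplus(l\lozenge\cmbf)$, and the short implications $t\rightarrow\neg t'$, $t\rightarrow t''$, $t'\rightarrow t''$ become the two-literal interval clauses $(t<\mathbf{1})\oplus(t'\leq\mathbf{0})$, $(t<\mathbf{1})\oplus(t''\geq\mathbf{1})$, $(t'<\mathbf{1})\oplus(t''\geq\mathbf{1})$. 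Using interval literals $t\geq\mathbf{1}$, $t'\geq\mathbf{1}$, $t''\geq\mathbf{1}$ in place of the variables $t,t',t''$ (as the remark notes), the entire construction stays within the interval clause fragment. One then verifies, exactly as in the proof of Theorem~\ref{theorem:relevancecomplexity}, that $\PropSol(\Pmbb^\rmsf)=\Sol(\Pmbb^\rmsf)$, that $\Sol(\Pmbb^\rmsf)$ decomposes into the two families $\{\varrho\odot(t\geq\mathbf{1})\mid\varrho\in\Sol(\Pmbb)\}$ and $\{\bigodot_{l\in\Hmsf'}l\odot(t'\geq\mathbf{1})\mid\Hmsf'\subseteq\Hmsf\}$, and hence that $t\geq\mathbf{1}$ is relevant (and $t'\geq\mathbf{1}$ not necessary) iff $\Pmbb$ has a solution. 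The necessity claim and the transfer to $\LukminSol$ follow as in Theorem~\ref{theorem:relevancecomplexity}, observing that $t\geq\mathbf{1}$ is relevant to $\Pmbb^\rmsf$ iff relevant w.r.t.\ $\models_\Luk$-minimal solutions, and similarly for necessity of $t'\geq\mathbf{1}$.

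The main obstacle is purely bookkeeping: one must check that each auxiliary formula really does admit an interval clause presentation and that the soundness arguments of~\eqref{equ:alternativeclauses} carry over when the disjuncts are interval literals rather than simple literals --- but since interval literals are themselves $\{0,1\}$-valued and $\neg$, $\odot$, $\oplus$, $\rightarrow$ behave classically on $\{0,1\}$, the equivalences in~\eqref{equ:demorganproperties} and~\eqref{equ:alternativeclauses} apply verbatim. A secondary point to be careful about is that, unlike the simple clause case of Section~\ref{ssec:simpleclauseabduction}, here we \emph{cannot} appeal to the polynomial-time satisfiability of the theory, so the membership argument must genuinely route through the general $\DP$ recognition bound rather than any clause-fragment shortcut; this is already acknowledged by the fact that the interval clause column of Table~\ref{tab:clausecomplexitysolutions} matches the general column rather than the simple clause one.
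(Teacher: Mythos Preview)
Your proposal is correct and follows essentially the same approach as the paper: membership is inherited from the general case (Theorem~\ref{theorem:relevancecomplexity}), and hardness is obtained by instantiating the Eiter--Gottlob-style reduction~\eqref{equ:relevanceLuk} with interval literals $t\geq\mathbf{1}$, $t'\geq\mathbf{1}$, $t''\geq\mathbf{1}$ in place of the bare variables, starting from an interval-clause solution-existence instance (Theorem~\ref{theorem:intervalclauseabductioncomplexityexistence}). Your additional bookkeeping---spelling out why each auxiliary implication becomes an interval clause via~\eqref{equ:demorganproperties} and~\eqref{equ:alternativeclauses}---is exactly the verification the paper leaves implicit in its one-sentence remark.
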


Still, we can restrict interval clause theories so that abduction becomes $\np$-complete. Namely, we prohibit interval literals ‘covering’ the entire $[0,1]$ interval (e.g., $p\!\leq\!\mathbf{\tfrac{2}{3}}$ and $p\!\geq\!\mathbf{\tfrac{1}{4}}$) in the implicative representation of interval clauses.
\begin{definition}\label{def:coverfreeintervalclauses}
Let $\Gamma$ be a~set of interval clauses represented as $\bigodot^m_{i=1}(p_i\lozenge\cmbf_i)\!\rightarrow\!(q\lozenge\dmbf)$ and $\bigodot^m_{i=1}(p_i\lozenge\cmbf_i)\!\rightarrow\!\bot$. $\Gamma$ is \emph{cover-free} (CF) if no pair of interval literals $\lambda$ and $\lambda'$ over the same variable occurs in~$\Gamma$ s.t.\ $\Luk\models\lambda\oplus\lambda'$.
\end{definition}

The following theorems can be shown by reductions between Horn abduction problems and $\Luk$-abduction problems with CF theories.
\begin{restatable}{theorem}{coverfreesolutionrecognition}\label{theorem:coverfreesolutionrecognition}
Let $\Pmbb=\langle\Gamma,\tau,\Hmsf\rangle$ be an $\Luk$-abduction problem with $\Gamma$ a~CF set of interval clauses and $\tau$ an interval term. Then it is $\Pmsf$-complete to check whether an interval term $\sigma$ is a~(proper, minimal) solution to~$\Pmbb$.
\end{restatable}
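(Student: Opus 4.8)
The plan is to prove $\Pmsf$-completeness in two halves. For $\Pmsf$-hardness I would give a~logspace reduction from the recognition of arbitrary, proper, and $\models_\CPL$-minimal solutions of classical Horn abduction problems $\Pmbb_\cl=\langle\Delta,p,\Hmsf_\cl\rangle$ with $p\notin\Hmsf_\cl$, which is $\Pmsf$-complete (see the proof of Theorem~\ref{theorem:SCArecognitionpolynomial}). Replace each definite clause $\neg x_1\vee\ldots\vee\neg x_k\vee y$ of $\Delta$ by the interval clause $(x_1\!>\!\mathbf{0})\odot\ldots\odot(x_k\!>\!\mathbf{0})\rightarrow(y\!>\!\mathbf{0})$, and each goal clause $\neg x_1\vee\ldots\vee\neg x_k$ by $(x_1\!>\!\mathbf{0})\odot\ldots\odot(x_k\!>\!\mathbf{0})\rightarrow\bot$; call the resulting set of interval clauses $\Gamma^\ddagger$. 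Put $\Hmsf^\ddagger=\{r\!>\!\mathbf{0}\mid r\in\Hmsf_\cl\}\cup\{s\!\leq\!\mathbf{0}\mid\neg s\in\Hmsf_\cl\}$, and translate a~weak conjunction of literals $\varrho$ into the interval term $\varrho^\ddagger$ by turning $\wedge$ into $\odot$, each positive literal $r$ into $r\!>\!\mathbf{0}$, and each negative literal $\neg s$ into $s\!\leq\!\mathbf{0}$. Every interval literal over a~fixed variable that occurs in $\Gamma^\ddagger$ is a~copy of $r\!>\!\mathbf{0}$, with permitted values $(0,1]\subsetneq[0,1]$, so no two of them cover $[0,1]$ and $\Gamma^\ddagger$ is cover-free (Definition~\ref{def:coverfreeintervalclauses}). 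Reading $v(r)>0$ as ``$r$ holds'' and $v(r)=0$ as ``$r$ fails'', every $\Luk$-model of $\Gamma^\ddagger$ restricts to a~classical model of $\Delta$, and conversely a~classical model of $\Delta$, viewed as a~$\{0,1\}$-valued $\Luk$-valuation, satisfies $\Gamma^\ddagger$; arguing as in the proof of Theorem~\ref{theorem:SCArecognitionpolynomial}, $\varrho$ is an arbitrary / proper / $\models_\CPL$-minimal solution of $\Pmbb_\cl$ iff $\varrho^\ddagger$ is the corresponding solution of $\langle\Gamma^\ddagger,p\!>\!\mathbf{0},\Hmsf^\ddagger\rangle$.

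For membership I would reduce solution recognition to polynomially many instances of the task $(\ast)$: \emph{decide whether $\Gamma'\cup\Lambda$ is $\Luk$-satisfiable, where $\Gamma'$ is a~cover-free set of interval clauses and $\Lambda$ a~finite set of interval literals}, together with polynomially many interval-term comparisons, which are polynomial by Proposition~\ref{prop:intervaltermcomplexity}(1). Indeed, $\Gamma,\sigma\not\models_\Luk\bot$ is the instance of $(\ast)$ with $\Lambda$ the set of literals of $\sigma$; for a~conjunct $\lambda$ of the observation $\tau$ we have $\Gamma,\sigma\models_\Luk\lambda$ iff $\Gamma\cup\{\sigma,\lambda\blacklozenge\}$ is $\Luk$-unsatisfiable (all of $\sigma$, $\lambda$, and $\lambda\blacklozenge$ being two-valued), again an instance of $(\ast)$, and $\Gamma,\sigma\models_\Luk\tau$ holds iff this succeeds for every conjunct of $\tau$; properness is the comparison $\sigma\not\models_\Luk\tau$. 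For $\models_\Luk$-minimality, by the argument of Theorem~\ref{theorem:Lukminimalsolutionrecognitioncomplexity} it suffices, once $\sigma$ has been verified to be a~proper solution, to check $\Gamma,\sigma^\flat_\lambda\not\models_\Luk\tau$ for each of the $\Omc(|\Hmsf|)$ ``next-weakest'' terms $\sigma^\flat_\lambda$, which is again a~bunch of $(\ast)$-instances.

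It then remains to place $(\ast)$ in $\Pmsf$ by a~Horn-style unit-propagation procedure. Write each clause of $\Gamma'$ as $\bigodot_{i=1}^{m}\beta_i\rightarrow\eta$ with $\eta$ an interval literal or $\bot$ (Definition~\ref{def:coverfreeintervalclauses}), and keep for every variable $r$ an interval $I_r\subseteq[0,1]$, initialised as the intersection of the permitted sets of the literals of $\Lambda$ over $r$. Repeatedly, if all body literals $\beta_i=p_i\lozenge\cmbf_i$ of some clause are \emph{forced}, i.e.\ each $I_{p_i}$ lies inside the permitted set of $\beta_i$, either report \emph{unsatisfiable} (if $\eta=\bot$), or replace $I_q$ by its intersection with the permitted set of $\eta=q\lozenge\dmbf$, reporting \emph{unsatisfiable} if $I_q$ becomes empty. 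This is monotone, halts after $\Omc(|\Gamma'|)$ updates, and each \emph{unsatisfiable} verdict is correct, since a~forced body $\Luk$-forces its head (body literals being two-valued). If no such verdict is reached, build a~model $v$: for each $r$, the body literals over $r$ that were \emph{not} forced come from $\Gamma'$ and have up-closed or down-closed permitted sets, so by cover-freeness their union omits a~non-empty open sub-interval of $[0,1]$; since $I_r$ is a~non-empty interval not contained in that union --- otherwise, being connected, it would lie entirely on one side of the omitted gap, whence a~short analysis of the interval endpoints shows that the extremal such literal would in fact be forced, a~contradiction --- one may pick $v(r)\in I_r$ that falsifies every non-forced body literal over $r$. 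Then every clause of $\Gamma'$ is satisfied (a~fired clause by construction of $I_q$; an unfired one through a~false body literal) and every literal of $\Lambda$ holds because $v(r)\in I_r$, so $v$ is a~model. I~expect this model-construction step to be the main obstacle: it is exactly where cover-freeness is needed --- for general interval clauses, literals such as $r\!\geq\!\cmbf$ and $r\!\leq\!\dmbf$ with $\cmbf\leq\dmbf$ would close off the gap --- which is why dropping cover-freeness pushes the complexity up to $\Sigma^\Pmsf_2$ (Theorems~\ref{theorem:intervalclauseabductioncomplexityrecognition} and~\ref{theorem:intervalclauseabductioncomplexityexistence}); note also that the step uses cover-freeness of $\Gamma'$ only, not of $\Lambda$, which is what makes the reduction of recognition to $(\ast)$ legitimate. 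Hence $(\ast)\in\Pmsf$, and with the hardness half the problem is $\Pmsf$-complete.
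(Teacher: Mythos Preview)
Your proposal is correct, and the hardness half is essentially the paper's reduction with the cosmetic change of using $r\!>\!\mathbf{0}$ in place of $r\!\geq\!\mathbf{1}$ (both yield a~cover-free theory of interval clauses, and the model-correspondence argument is the same). The membership half, however, follows a~genuinely different route.

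The paper obtains membership by a~\emph{reduction to classical Horn abduction}: it introduces a~fresh propositional variable $r_\lambda$ for every interval literal~$\lambda$ occurring in the problem, turns each interval clause into a~Horn clause over these variables, and adds Horn clauses $r_\lambda\!\rightarrow\!r_{\lambda'}$ whenever $\lambda\models_\Luk\lambda'$ and $r_\lambda\!\wedge\!r_{\lambda'}\!\rightarrow\!\bot$ whenever $\lambda\odot\lambda'$ is $\Luk$-unsatisfiable. The correctness proof shows that classical models of the Horn theory correspond to $\Luk$-models of the cover-free theory via the sets~$\Vmc_{p_j}$, and then simply invokes the known $\Pmsf$ bound for Horn abduction. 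Your argument instead builds a~\emph{direct} unit-propagation procedure on intervals~$I_r$ and proves its completeness by the connectedness-plus-gap analysis; this is essentially an ``inlined'' version of Horn unit propagation specialised to interval literals. What the paper's approach buys is modularity --- it piggybacks on an existing complexity result and avoids the endpoint case analysis you flag as the main obstacle --- whereas your approach is self-contained and makes explicit \emph{where} cover-freeness is consumed (only in the model-construction step, and only for~$\Gamma'$, not for~$\Lambda$), which is arguably more informative. Both are valid; the paper's is shorter once the Horn result is taken for granted.
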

\begin{restatable}{theorem}{coverfreesolutionexistence}\label{theorem:coverfreesolutionexistence}
Let $\Pmbb=\langle\Gamma,\tau,\Hmsf\rangle$ be an $\Luk$-abduction problem with $\Gamma$ a~CF set of interval clauses. Then it is $\np$-complete to decide whether $\Pmbb$ has (proper) solutions.
\end{restatable}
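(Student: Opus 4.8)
The plan is to establish both bounds by relating $\Luk$-abduction problems with CF interval-clause theories to \emph{classical Horn abduction}, whose solution-existence problem is $\np$-complete \cite{EiterGottlob1995}. The hardness direction is the easy half: classical Horn abduction problems embed into CF interval-clause $\Luk$-abduction via essentially the reduction already used for the simple clause case in Theorem~\ref{theorem:SCArecognitionpolynomial}, replacing each positive literal $r$ by $r\geq\mathbf{1}$, each negative literal $\neg q$ by $q\leq\mathbf{0}$, and rewriting each Horn clause $\bigodot^m_{i=1}l_i\rightarrow l$ accordingly. One checks that the resulting theory is CF, since the only interval literals that appear over any variable are $r\geq\mathbf{1}$ and $r\leq\mathbf{0}$, and for these we never have $\Luk\models(r\geq\mathbf{1})\oplus(r\leq\mathbf{0})$. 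The solution-preservation argument is exactly the one in Theorem~\ref{theorem:SCArecognitionpolynomial}: $\Luk$-satisfiability matches $\CPL$-satisfiability on $\{0,1\}$-valuations, and unit resolution is $\Luk$-sound. Since all solutions here are proper ($p\notin\Hmsf$), the reduction also yields hardness for proper-solution existence.

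For membership in $\np$, I would guess an interval term $\sigma$ built from hypotheses in $\Hmsf$ (polynomially many literals) and verify in polynomial time, using Theorem~\ref{theorem:coverfreesolutionrecognition}, that $\sigma\in\Sol(\Pmbb)$ (resp.\ $\sigma\in\PropSol(\Pmbb)$). So everything reduces to the polynomial-time recognition guaranteed by Theorem~\ref{theorem:coverfreesolutionrecognition}, which in turn rests on the fact that over CF interval-clause theories, checking $\Luk$-satisfiability of $\Gamma\cup\{\sigma\}$ and checking entailment $\Gamma,\sigma\models_\Luk\chi$ are polynomial. The key structural point is that in the implicative reading $\bigodot^m_{i=1}(p_i\lozenge\cmbf_i)\rightarrow(q\lozenge\dmbf)$, a $\{0,1\}$-style unit-propagation procedure on the ``active'' interval literals terminates in polynomial time precisely because cover-freeness forbids the situation where two complementary interval literals over the same variable together force a clause to behave nonmonotonically; this is what makes the fragment behave like Horn logic rather than like the full interval-clause fragment of Theorem~\ref{theorem:intervalclauseabductioncomplexityexistence}.

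The main obstacle I anticipate is making precise the reduction of CF interval-clause satisfiability (and entailment) to a polynomial decision procedure in Theorem~\ref{theorem:coverfreesolutionrecognition}, on which this theorem leans. Unlike the simple clause case, where Proposition~\ref{prop:simpleclausescomplexity} gives linear-time satisfiability directly via linear inequalities, interval literals are genuinely two-valued, so one cannot simply pass to an LP; instead one needs a propagation argument showing that the CF restriction guarantees a unique least ``forced'' assignment of truth values to the interval literals appearing in $\Gamma$, from which consistency and entailment of $p$ (or a literal) can be read off. Once that recognition result is in hand, the present theorem is immediate: hardness by the Horn reduction above, and membership by guess-and-check. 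I would therefore present this proof as a short corollary-style argument, deferring the technical core to Theorem~\ref{theorem:coverfreesolutionrecognition}.
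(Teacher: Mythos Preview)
Your proposal is correct and matches the paper's own proof: membership is obtained by guessing a candidate solution and invoking the polynomial-time recognition result of Theorem~\ref{theorem:coverfreesolutionrecognition}, and hardness is obtained by the Horn-to-CF translation sending $r\mapsto r\geq\mathbf{1}$ and $\neg q\mapsto q\leq\mathbf{0}$, which is precisely the reduction~\eqref{equ:HorntoCFreduction} the paper reuses. Your side remarks about how Theorem~\ref{theorem:coverfreesolutionrecognition} might be proved via unit propagation are speculative (the paper in fact proves it by a reduction \emph{to} classical Horn abduction rather than by a direct propagation argument), but this is orthogonal to the present theorem, which only cites that result as a black box.
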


Determining the relevance of hypotheses in CF theories is $\np$-complete. The proof is the same as for Theorem~\ref{theorem:ICrelevancecomplexity} (but we reduce from hypotheses relevance for Horn theories).
\begin{theorem}\label{theorem:CFrelevance}
It is $\np$-complete (resp.\ $\conp$-complete) to decide, given an $\Luk$-abduction problem $\Pmbb=\langle\Gamma,\chi,\Hmsf\rangle$ with $\Gamma$~being a~CF set of interval clauses and $\lambda\in\Hmsf$, whether $\lambda$ is relevant (resp.\ necessary) w.r.t.\ $\Sol(\Pmbb)$. The same holds for relevance and necessity w.r.t.\ $\PropSol(\Pmbb)$ and $\LukminSol(\Pmbb)$.
\end{theorem}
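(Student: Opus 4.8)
The statement to prove is Theorem~\ref{theorem:CFrelevance}: $\np$-completeness of relevance and $\conp$-completeness of necessity of a hypothesis $\lambda$ with respect to $\Sol(\Pmbb)$, $\PropSol(\Pmbb)$, and $\LukminSol(\Pmbb)$, when $\Gamma$ is a cover-free (CF) set of interval clauses. The paper explicitly tells us the proof mirrors Theorem~\ref{theorem:ICrelevancecomplexity}, so I'd organize it as (i) a membership argument and (ii) a hardness reduction.

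For **membership in $\np$** (relevance): guess an interval term $\sigma$ built from hypotheses in $\Hmsf$; by Theorem~\ref{theorem:coverfreesolutionrecognition}, checking whether $\sigma$ is a (proper / $\models_\Luk$-minimal) solution is in $\Pmsf$, so the whole procedure is $\np$. For **necessity in $\conp$**: the complement is exactly "there exists a solution $\sigma$ with $\lambda\notin\sigma$," which by the same reasoning is in $\np$, so necessity is in $\conp$. These arguments are uniform across $\Sol$, $\PropSol$, and $\LukminSol$ because the recognition results of Theorem~\ref{theorem:coverfreesolutionrecognition} cover all three.

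For **hardness**, I would reuse the relevance construction $\Pmbb^\rmsf=\langle\Gamma^\rmsf,\chi^\rmsf,\Hmsf^\rmsf\rangle$ from~\eqref{equ:relevanceLuk}, but starting from a \emph{classical Horn} abduction problem $\Pmbb=\langle\Gamma,p,\Hmsf\rangle$ with observation $p$, rather than from an arbitrary $\Luk$-problem. First translate $\Pmbb$ into an $\Luk$-problem with a CF interval-clause theory exactly as in the hardness part of Theorem~\ref{theorem:SCArecognitionpolynomial} / Theorem~\ref{theorem:coverfreesolutionrecognition} (replace a Horn clause $\{l_1,\dots,l_k\}$ by the $\oplus$-clause, positive literals $r$ by $r\geq\mathbf{1}$, negative literals $\neg q$ by $q\leq\mathbf{0}$), obtaining a CF interval-clause theory since each clause uses each variable once with a fixed polarity. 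Then apply~\eqref{equ:relevanceLuk} with the fresh variables $t,t',t''$: the formulas $t\rightarrow\psi$ for $\psi$ a $\oplus$-clause (equivalently an interval clause) are again interval clauses by~\eqref{equ:alternativeclauses}, and $t\rightarrow\neg t'$, $t\rightarrow t''$, $t'\rightarrow t''$, $t'\rightarrow\chi$ are interval clauses too; one checks these additions keep the theory cover-free (the $t,t',t''$ literals introduced are $\geq\mathbf{1}$ only, and no two complementary $\oplus$-covering literals over the same variable appear). As in Theorem~\ref{theorem:relevancecomplexity}, $t$ is relevant (and $t'$ not necessary) w.r.t.\ $\Sol(\Pmbb^\rmsf)$ iff the underlying problem has a solution, and since solution existence for Horn / CF-interval-clause theories is $\np$-complete (Theorem~\ref{theorem:SCAsolutionexistencenp} / Theorem~\ref{theorem:coverfreesolutionexistence}), this gives $\np$-hardness of relevance and $\conp$-hardness (via the complement, "not necessary") of necessity. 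The hardness transfers to $\PropSol$ and $\LukminSol$ because all solutions in the image of the reduction are proper (the observation variable does not occur in hypotheses) and the relevance/necessity of $t,t'$ is unaffected by restricting to $\models_\Luk$-minimal solutions, exactly as argued at the end of Theorem~\ref{theorem:relevancecomplexity}.

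**Main obstacle.** The delicate point is verifying that \emph{both} translation layers preserve the cover-free property simultaneously: the Horn$\to\oplus$-clause step and the $t$-guarding step from~\eqref{equ:relevanceLuk} must be checked to never introduce a variable carrying two interval literals $\lambda,\lambda'$ with $\Luk\models\lambda\oplus\lambda'$. For the Horn part this is immediate since each variable keeps a single polarity ($\geq\mathbf{1}$ or $\leq\mathbf{0}$, never both, and $(r\geq\mathbf{1})\oplus(r\leq\mathbf{0})$ is \emph{not} $\Luk$-valid). For the guard variables one must confirm that introducing only literals of the form $t\geq\mathbf{1}$ (and their occurrences inside implicative clauses, i.e.\ as antecedents $\bigodot(\cdot\lozenge\mathbf{1})$) never creates a covering pair; the same for $\neg t'$ which, rewritten, is $t'\leq\mathbf{0}$, so $t'$ acquires literals $t'\geq\mathbf{1}$ and $t'\leq\mathbf{0}$ — here one needs to double-check this pair is not $\oplus$-valid (it is not, by the semantics of interval literals), so cover-freeness survives. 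I'd expect the bookkeeping here, rather than any conceptual difficulty, to be the part requiring care.
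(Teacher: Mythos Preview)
Your membership arguments are exactly right and match the paper. For hardness, your approach via the gadget~\eqref{equ:relevanceLuk} applied on top of the Horn$\to$CF translation is correct and is one natural reading of the paper's terse hint. A~slightly more direct route, which is arguably what the parenthetical ``reduce from hypotheses relevance for Horn theories'' points to, is to skip the gadget entirely: since the Horn$\to$CF translation of Theorem~\ref{theorem:coverfreesolutionrecognition} sets up a bijection between (proper, minimal) solutions of the Horn problem and of the resulting CF problem, it automatically preserves relevance and necessity of the translated hypothesis. As Horn hypothesis relevance is already $\np$-complete and necessity $\conp$-complete, hardness follows immediately with no extra bookkeeping about~$t,t',t''$.

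One small imprecision in your cover-freeness check: rewriting $\neg t'$ as $t'\leq\mathbf{0}$ is \emph{not} a semantic equivalence (if $t'$ is replaced by the interval literal $t'\geq\mathbf{1}$, then $\neg(t'\geq\mathbf{1})\equiv_\Luk t'<\mathbf{1}$, and the pair $t'\geq\mathbf{1}$, $t'<\mathbf{1}$ \emph{is} covering). Your reduction still goes through, but for the right reason: you should render $t\rightarrow\neg t'$ in the implicative form $(t\geq\mathbf{1})\odot(t'\geq\mathbf{1})\rightarrow\bot$, which uses only $\geq\mathbf{1}$ literals and hence keeps the theory cover-free while preserving exactly the property the gadget needs (that $t\geq\mathbf{1}$ and $t'\geq\mathbf{1}$ cannot both hold). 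With that adjustment your two-layer construction is sound; the direct route simply avoids having to make this check at all.
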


The next statement follows from Theorem~\ref{theorem:coverfreesolutionrecognition} as for interval terms $\sigma$ and $\tau$ and a~CF theory $\Gamma$, it takes polynomial time to check whether $\Gamma,\sigma\models_\Luk\tau$.
\begin{theorem}\label{theorem:CFtheoryminimalinconp}
Let $\Pmbb=\langle\Gamma,\tau,\Hmsf\rangle$ be an $\Luk$-abduction problem with $\Gamma$ a~CF theory and $\tau$ an interval term. Then given an interval term $\sigma$, it is in $\conp$ to decide whether $\sigma\in\ThminSol(\Pmbb)$.
\end{theorem}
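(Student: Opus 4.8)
The plan is to establish the theorem by showing that the \emph{complementary} problem --- deciding whether $\sigma$ is \emph{not} a theory-minimal solution of $\Pmbb$ --- lies in $\np$. Unfolding Definition~\ref{def:Lukabduction}, $\sigma\in\ThminSol(\Pmbb)$ holds iff (a)~$\sigma$ is a proper solution of $\Pmbb$, and (b)~there is no proper solution $\rho$ of $\Pmbb$ with $\Gamma,\sigma\models_\Luk\rho$ and $\Gamma,\rho\not\models_\Luk\sigma$. Hence $\sigma\notin\ThminSol(\Pmbb)$ iff either $\sigma\notin\PropSol(\Pmbb)$ or a witnessing $\rho$ as in~(b) exists. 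The first disjunct is handled immediately: by Theorem~\ref{theorem:coverfreesolutionrecognition}, recognising whether $\sigma\in\PropSol(\Pmbb)$ is in $\Pmsf$ when $\Gamma$ is cover-free, so that disjunct and its negation are decidable in polynomial time.

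For the existential disjunct I would guess the candidate $\rho$: since $\rho$ must be an interval term built from hypotheses of $\Hmsf$ and interval literals are two-valued (so $\lambda\odot\lambda\equiv_\Luk\lambda$ holds for interval literals $\lambda$), we may take $\rho$ to consist of pairwise distinct literals from $\Hmsf$, giving $\rho$ size linear in $|\Hmsf|$ and thus guessable. Given a guessed $\rho$, the verification reduces to three tests: (i)~$\rho\in\PropSol(\Pmbb)$; (ii)~$\Gamma,\sigma\models_\Luk\rho$; and (iii)~$\Gamma,\rho\not\models_\Luk\sigma$. Test~(i) is polynomial by Theorem~\ref{theorem:coverfreesolutionrecognition}. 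Tests~(ii) and~(iii) are instances of deciding entailment of an interval term from a cover-free theory together with an interval term, which --- as noted just before the theorem --- is polynomial (test~(iii) being the complement of such a check). Consequently the non-deterministic procedure ``guess $\rho$, run (i)--(iii)'' is polynomial-time, so ``$\sigma\notin\ThminSol(\Pmbb)$'' is in $\np$, and therefore deciding $\sigma\in\ThminSol(\Pmbb)$ is in $\conp$.

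The main obstacle is the polynomial-time claim for $\Gamma,\vartheta\models_\Luk\vartheta'$ with $\Gamma$ cover-free and $\vartheta,\vartheta'$ interval terms; everything else is bookkeeping on top of Theorem~\ref{theorem:coverfreesolutionrecognition}. The crucial observation is that cover-freeness is exactly what makes a set of interval $\oplus$-clauses behave, under propagation of two-valued interval constraints, like a classical Horn theory: adding the interval literals of $\vartheta$ to $\Gamma$ preserves cover-freeness, from an implicative clause $\bigodot_i(p_i\lozenge\cmbf_i)\rightarrow(q\lozenge\dmbf)$ with satisfied body one derives a single tighter interval constraint on $q$, constraints only strengthen bodies monotonically, $\bot$-headed clauses detect inconsistency, and --- since no clause $\lambda\oplus\lambda'$ with $\Luk\models\lambda\oplus\lambda'$ occurs --- there is never a forced case split on a variable. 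One should check that maintaining, per variable, the current strict/non-strict lower and upper bounds suffices, that the fixpoint is reached in polynomially many rounds, and that it is sound and complete for $\models_\Luk$ of a single interval literal; entailment of the term $\vartheta'$ then follows by running this for each of its (polynomially many) literals, exactly in the spirit of the Horn-style forward chaining underlying Theorem~\ref{theorem:coverfreesolutionrecognition}.
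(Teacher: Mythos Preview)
Your proposal is correct and follows exactly the paper's route: pass to the complement, guess a competing interval term $\rho$ of polynomial size, and verify in polynomial time using Theorem~\ref{theorem:coverfreesolutionrecognition} together with the polynomial-time check of $\Gamma,\vartheta\models_\Luk\vartheta'$ for CF~$\Gamma$ that the paper states just before the theorem.

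One side remark on your final paragraph: the claim that ``adding the interval literals of $\vartheta$ to $\Gamma$ preserves cover-freeness'' is not true in general, since hypotheses in $\Hmsf$ (and hence literals of $\vartheta$) may form covering pairs with literals already in $\Gamma$ or with one another. Fortunately this is not needed for the argument. In the Horn reduction underlying the proof of Theorem~\ref{theorem:coverfreesolutionrecognition}, the literals of $\vartheta$ are translated to \emph{positive unit clauses} $r_\lambda$, so the classical target theory remains Horn irrespective of any new covering pairs; cover-freeness is required of $\Gamma$ alone (to make the translation complete for entailment), not of $\Gamma\cup\{\vartheta\}$. Your forward-chaining intuition is right in spirit, but the correct bookkeeping is at the level of the $r_\lambda$ variables rather than via an enlarged CF theory.
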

\section{Conclusion and Discussion\label{sec:conclusion}}
We studied abduction in Łukasiewicz logic and its clausal fragments. Our analysis gives an almost complete outline of the complexity of the main decision problems related to abduction (Tables~\ref{tab:complexitysolutions} and~\ref{tab:clausecomplexitysolutions}). We established that the complexity of abductive reasoning is never higher than that in $\CPL$~\cite{EiterGottlob1995}, \cite{CreignouZanuttini2006}, \cite{PichlerWoltran2010}, \cite{PfandlerPichlerWoltran2015}. Moreover, the complexity of $\Luk$-abduction in clausal fragment is \emph{lower} than that of the classical abductive reasoning as long as clauses do not contain interval literals. 

Several questions remain open. First, we do not know the exact complexity of theory-minimal solution recognition and relevance in the full language nor its clausal fragments. One way to approach this would be to establish the complexity of the closely related notion of theory prime implicants in $\CPL$ and its fragments~\cite{Marquis1995}.

It would also be instructive to consider abductive reasoning in fuzzy logics when the entailment is defined via the preservation of the \emph{truth degree} from the premise to the conclusion as proposed by~\cite{BouEstevaFontGilGodoTorrensVerdu2009,ErtolaEstevaFlaminioGodoNoguera2015}. Indeed, one might not be guaranteed that the statements in the theory are \emph{absolutely true}. Still, in this case, it is reasonable to expect that the observation should be \emph{at least as true} as the theory and the explanation.

Furthermore, \cite{DuboisPrade1992}, \cite{Poole1993}, \cite{DuboisGilioKern-Isberner2008}, and~\cite{SatoIshihataInoue2011} have considered abduction in possibilistic and probabilistic contexts. It is also known from~\cite{HajekTulipani2001,BaldiCintulaNoguera2020} that so-called ‘two-layered’ fuzzy logics can be adapted to reasoning about uncertainty. Thus, it makes sense to explore probabilistic abduction using two-layered logics.

We also note that the context from Example~\ref{example:thresholds} can be represented in abductive constraint logic programming (ACLP) studied by~\cite{KakasMichaelMourlas2000}. It thus makes sense to explore the connection between abduction in fuzzy logic and ACLP.

It is also known that machine learning for visual perception problems is related to abduction~\cite{Shanahan2005,LiangWangZhouYang2022}. As fuzzy logic has also found numerous applications in visual perception, it makes sense to explore the applications of abduction in fuzzy logic in this field.

Finally, recall that there are algorithms for solving fuzzy logic abduction problems with fuzzy sets. 
In our framework, however, a~solution corresponds to a~set of \emph{intervals} of permitted values of hypotheses. As $\Luk$-entailment is reducible to \emph{finite-valued} Łukasiewicz logics~\cite{AguzzoliCiabattoni2000}, we might hope that it is possible to compute the ‘minimally sufficient’ set of permitted values from the shape of the theory. Thus, it would be instructive to devise algorithms for the generation of solutions in the form of interval terms. As satisfiability and validity in~$\Luk$ are reducible to mixed-integer programming (MIP)~\cite{Haehnle1992,Haehnle1994}, it would make sense to apply  MIP solvers. Moreover, since the semantics of~$\Luk$ can be given over \emph{rational numbers}~\cite{EstevaGispertGodoMontagna2002}, one may try to apply \emph{rational} MIP solvers such as the one by~\cite{CookKochSteffyWolter2013} as they produce \emph{exact} solutions to mixed-integer problems.
\section*{Acknowledgements}
The $1^\mathsf{st}$ author was supported by JSPS KAKENHI Grant Number JP25K03190 and JST CREST Grant Number JPMJCR22D3, Japan. We would also like to thank the reviewers for their constructive suggestions and spotting a~mistake in the first version of the paper.
\bibliographystyle{kr}
\bibliography{kr-sample}
\clearpage
\appendix
\section{Proofs of Section~\ref{sec:intervalterms}}
\intervaltermcomplexity*
\begin{proof}
We begin with Item~1. For an interval literal $\lambda$ over $p\in\Prop$, define
\begin{align*}
\interval(p,p\leq\cmbf)&=[0,c]&\interval(p,p\geq\cmbf)&=[c,1]\\
\interval(p,p<\cmbf)&=[0,c)&\interval(p,p>\cmbf)&=(c,1]
\end{align*}
It is clear that $v(\lambda)=1$ iff $v(p)\in\interval(p,\lambda)$ (and $v(\lambda)=0$, otherwise). Now for an interval term $\sigma$ and $p\in\Prop(\sigma)$, set
\begin{align*}
\interval(p,\sigma)&=\bigcap\limits_{\lambda\in\sigma}\interval(p,\lambda)
\end{align*}
One can straightforwardly verify that $\sigma$ is $\Luk$-satisfiable iff $\interval(p,\sigma)\neq\varnothing$ for every $p\in\Prop(\sigma)$. Moreover, if $\sigma$ is not satisfiable, then $v(\sigma)=0$ for every $v$.

Now let $\sigma$ and $\tau$ be two \emph{$\Luk$-satisfiable} interval terms. We show that $\sigma\models_\Luk\tau$ iff the following two conditions are met:
\begin{enumerate}[$(a)$,noitemsep,topsep=1pt]
\item $\Prop(\sigma)\supseteq\Prop(\tau)$;
\item $\interval(p,\tau)\subseteq\interval(p,\sigma)$ for every $p\in\Prop(\sigma)$.
\end{enumerate}

Assume that $(a)$ and $(b)$ hold and that $v(\sigma)=1$. Then, clearly $v(p)\in\interval(p,\sigma)$ for every $p\in\Prop(\sigma)$. By $(a)$ and~$(b)$, we have $v(p)\in\interval(p,\tau)$ for every $p\in\Prop(\tau)$, whence $v(\tau)=1$.

Conversely, assume that $(a)$ does not hold and that there is some $p\in\Prop(\tau)\setminus\Prop(\sigma)$. Since $\sigma$ is satisfiable, let $v(\sigma)=1$ and set $v(p)\notin\interval(p,\tau)$. It is clear that $v(\tau)=0$. Now, assume that $(b)$ does not hold, i.e., there is some $p\in\Prop(\sigma)$ s.t.\ $\interval(p,\tau)\not\subseteq\interval(p,\sigma)$. Now let $v$ be an $\Luk$-valuation s.t.\ $v(\sigma)=1$ and $v(p)=x$ for some $x\in\interval(p,\sigma)\setminus\interval(p,\tau)$. It is clear that $v(\tau)=0$.

Finally, given two interval terms $\sigma$ and $\tau$, one of the three options holds: either $\sigma$ is unsatisfiable (whence, $\sigma\models_\Luk\tau$), $\sigma$ is satisfiable but $\tau$ is not (whence, $\sigma\not\models_\Luk\tau$), or both are satisfiable. As it takes polynomial time to verify whether there is some $p\in\Prop(\sigma)$ or $q\in\Prop(\tau)$ s.t.\ $\interval(p,\sigma)=\varnothing$ or $\interval(q,\tau)=\varnothing$, the satisfiability of interval terms can be checked in polynomial time. Finally, for the third case, we need to verify that $(a)$ and $(b)$ hold which can also be done in polynomial time.

For Item~2, the membership is evident from Proposition~\ref{prop:Lukintnpcomplete} and hardness can be obtained by the following reduction from $\Luk$-validity. Namely, $\chi\in\LLukint$ is $\Luk$-valid (i.e., is entailed by the empty set of formulas) iff $p\geq\mathbf{1}\models_\Luk\chi$ for some $p\notin\Prop(\chi)$.

The hardness part of Item~3 is obtained by the reduction from $\Luk$-unsatisfiability. Let $\Gamma=\{\phi\}$. It is clear that $\phi$ is $\Luk$-un\-sa\-tis\-fiable iff $\phi,p\geq\mathbf{1}\models q\geq\mathbf{1}$ for some $p,q\notin\Prop(\phi)$. The membership follows from Proposition~\ref{prop:Lukintnpcomplete}.
\end{proof}
\section{Proofs of Section~\ref{sec:Lukabduction}}
\CPLtoLuk*
\begin{proof}
Observe that given an $\Luk$-valuation $v$, $v(p\vee\neg p)=1$ iff $v(p)\in\{0,1\}$. Furthermore, on $\{0,1\}$ $\odot$ works the same as $\wedge$ and $\oplus$ as $\vee$ (cf.~Definition~\ref{def:Luksemantics} and~\eqref{equ:weakconnectives}). Now let $v$ be a~\emph{classical valuation} witnessing $\Gamma\not\models_\CPL\chi$. It is clear that $v$~also witnesses $\Gamma^\Luk,\{p\vee\neg p\mid p\in\Prop[\Gamma\cup\{\chi\}]\}\not\models_\Luk\chi$. Conversely, let $v$ be an $\Luk$-valuation witnessing $\Gamma^\Luk,\{p\vee\neg p\mid p\in\Prop[\Gamma\cup\{\chi\}]\}\not\models_\Luk\chi$. As we have just observed, this means that all variables are evaluated over $\{0,1\}$. As $\Gamma^\Luk$ and $\chi^\Luk$ are obtained from $\Gamma$ and~$\chi$ by replacing $\wedge$ and $\vee$ with $\odot$ and $\oplus$, it follows that $\Gamma\not\models_\CPL\chi$.
\end{proof}
\begin{definition}\label{def:primeimplicant}
Let $\chi\in\LCPL$ and $\tau$ be a~weak conjunction of simple literals. We say that $\tau$ is a~\emph{prime implicant of}~$\chi$ iff $\tau\models_\CPL\phi$ and there is no other weak conjunction $\tau'$ of simple literals s.t.\ $\tau'\models_\CPL\chi$, $\tau\models_\CPL\tau'$, and $\tau'\not\models_\CPL\tau$.
\end{definition}

The next definition introduces ‘next weakest literals’ given $p\lozenge\cmbf$ and $\Hmsf$. The idea is to incrementally extend the interval corresponding to $p\lozenge\cmbf$.
\begin{definition}\label{def:nextweakestliteral}
For a~finite set of interval literals $\Hmsf$, $\lambda\in\Hmsf$, and $\cmbf\in[0,1]_\Qmbb$, we define
\begin{align*}
\cmbf^\uparrow(p)&=\min\left\{\cmbf'\!\in\!\Qmbb\mid \cmbf'\!>\!\cmbf\text{ and }p\!\leq\!\cmbf'\in\Hmsf;\text{ or }p\!<\!\mathbf{c}'\in\Hmsf\right\}\\
\cmbf^\downarrow(p)&=\max\left\{\cmbf'\!\in\!\Qmbb\mid \cmbf'\!<\!\cmbf\text{ and }p\!\geq\!\cmbf'\in\Hmsf;\text{ or }p\!>\!\mathbf{c}'\!\in\!\Hmsf\right\}
\end{align*}

Now, let $\cmbf^\downarrow(p)$ and $\dmbf^\uparrow(p)$ exist in $\Hmsf$ for~$p$. We define:
\begin{align*}
(p\geq\cmbf)^\flat_\Hmsf&=
\begin{cases}
p\geq\cmbf^\downarrow&\text{if }p\geq\cmbf^\downarrow\in\Hmsf\\
p>\cmbf^\downarrow&\text{otherwise}
\end{cases}\\
(p>\cmbf)^\flat_\Hmsf&=
\begin{cases}
p\geq\cmbf&\text{if }p\geq\cmbf\in\Hmsf\\
p\geq\cmbf^\downarrow&\text{if }p\geq\cmbf^\downarrow\in\Hmsf\\
p>\cmbf^\downarrow&\text{otherwise}
\end{cases}
\end{align*}
\begin{align*}
(p\leq\dmbf)^\flat_\Hmsf&=
\begin{cases}
p\leq\dmbf^\uparrow&\text{if }p\geq\dmbf^\downarrow\in\Hmsf\\
p>\dmbf^\uparrow&\text{otherwise}
\end{cases}\\
(p<\dmbf)^\flat_\Hmsf&=
\begin{cases}
p\leq\dmbf&\text{if }p\leq\dmbf\in\Hmsf\\
p\leq\dmbf^\uparrow&\text{if }p\leq\dmbf^\downarrow\in\Hmsf\\
p<\dmbf^\uparrow&\text{otherwise}
\end{cases}
\end{align*}
\end{definition}
\Lukminimalsolutionrecognitioncomplexity*
\begin{proof}
For the hardness, we provide a~reduction from the prime implicant recognition in classical logic which is known to be $\DP$-complete~\cite[Proposition~111]{Marquis2000HDRUMS}. Let w.l.o.g.\ $\chi$ be classically satisfiable and $\tau$ a~prime implicant of $\chi$. Thus, $\tau\consvDashCPL\chi$ and there is no weak conjunction of simple literals $\sigma$ s.t.\ $\sigma\consvDashCPL\chi$, $\tau\models_\CPL\sigma$, and $\sigma\not\models_\CPL\tau$. We show that $\tau^\odot$ is an $\Luk$-minimal solution to $\Pmbb\!=\!\left\langle\Gamma,\chi^\Luk\!\odot\!q,\Hmsf\right\rangle$ with $q\!\notin\!\Prop(\chi)$ and
\begin{align*}
\Gamma&=\left\{p\vee\neg p\mid p\in\Prop(\chi\wedge q)\right\}\cup\{q\}\\
\Hmsf&=\{p\geq\mathbf{1}\mid p\in\Prop(\chi)\}\cup\{p\leq\mathbf{0}\mid p\in\Prop(\chi)\}
\end{align*}

First, it is clear that all solutions of $\Pmbb$ are proper because $q$ does not occur in $\Hmsf$. Now, consider~$\tau^\odot$. It is clear that $\Gamma,\tau^\odot\consvDashLuk\chi^\Luk\odot q$. Indeed, $\tau$ was classically satisfiable, whence $\tau^\odot$ is $\Luk$-satisfiable by the same valuation $v$ that satisfied $\tau$ since $\odot$ and $\wedge$ behave identically on $\{0,1\}$. But it also means that $\Gamma\cup\{\tau\}$ is $\Luk$-satisfiable as $v(p\vee\neg p)=1$ iff $v(p)\in\{0,1\}$ and we can set $v(q)=1$. Furthermore, if for some $\Luk$-valuation $v'$, $v'(\tau^\odot)=1$ and $v'(\phi)=1$ for every $\phi\in\Gamma$, it means that all variables are evaluated over $\{0,1\}$, i.e., $v'$ is a~classical valuation. Hence, $v'(\chi^\odot\odot q)=1$, as required. It remains to see that there is no solution $\sigma$ s.t.\ $\tau^\odot\models_\Luk\sigma$ and $\sigma\not\models_\Luk\tau^\odot$. Assume for the sake of contradiction that there is such a~solution $\sigma$ and define
\begin{align*}
\sigma^\cl&=\bigwedge\limits_{p\!\geq\!\mathbf{1}\in\sigma}\!\!\!\!p\wedge\bigwedge\limits_{r\!\leq\!\mathbf{0}\in\sigma}\!\!\!\!\neg r
\end{align*}
We will show that $\sigma^\cl\!\!\not\models_\CPL\!\tau$, $\sigma^\cl\!\!\models_\CPL\!\chi$, and $\tau\!\!\models_\CPL\!\sigma^\cl$. Observe that $(\sigma^\cl)^\odot=\sigma$. Now, let $v$ be an $\Luk$-valuation witnessing $\sigma\not\models_\Luk\tau^\odot$. Since $\sigma$ and $\tau^\odot$ are interval terms containing only literals of the form $r\leq\mathbf{0}$ and $r\geq\mathbf{1}$, it means that $v(\sigma)=1$ and $v(\tau^\odot)=0$. Now, we define a~classical valuation $v^\cl$ as follows:
\begin{align*}
v^\cl(r)&=
\begin{cases}
1&\text{if }v(s\geq\mathbf{1})=1\text{ or }v(r\leq\mathbf{0})=0\\
0&\text{if }v(r\leq\mathbf{0})=1\text{ or }v(r\geq\mathbf{1})=0
\end{cases}
\end{align*}
It is clear that $v^\cl$ witnesses $\sigma^\cl\not\models_\CPL\tau$. Finally, using the definition of $\Gamma$, that $\sigma$ contains only literals of the form $r\leq\mathbf{0}$ and $r\geq\mathbf{1}$, and that all $\LLuk$ connectives behave classically on $\{0,1\}$, we obtain that $\Gamma,\sigma\consvDashLuk\chi\odot q$ entails $\sigma^\cl\models_\CPL\chi$ and $\tau^\odot\models_\Luk\sigma$ entails $\tau\models_\CPL\sigma^\cl$. This, however, means that $\tau$ is not a~prime implicant of $\chi$, contrary to the assumption.

For the converse, let $\tau\in\LukminSol(\Pmbb)$. Define $\tau_\cl=\bigwedge\limits_{(r\!\geq\!\mathbf{1})\in\tau}\!\!\!\!r\wedge\bigwedge\limits_{(s\!\leq\!\mathbf{0})\in\tau}\!\!\!\!\neg s$. From Proposition~\ref{prop:CPLtoLuk} and the fact that $\tau\simeq_\Luk\bigodot\limits_{(r\!\geq\!\mathbf{1})\in\tau}\!\!\!\!r\odot\bigodot\limits_{(s\!\leq\!\mathbf{0})\in\tau}\!\!\!\!\neg s$, we have $\tau_\cl\consvDashLuk\chi$. Now assume for contradiction that there is some $\sigma$ s.t.\ $\sigma\models_\CPL\chi$, $\tau\models_\CPL\sigma$, but $\sigma\not\models_\CPL\tau$. It is easy to check that in this case, $\sigma^\odot\in\Sol(\Pmbb)$, $\tau\models_\Luk\sigma^\odot$, and $\sigma^\odot\not\models_\Luk\tau$, i.e., $\tau$ is not a~$\models_\Luk$-minimal solution, contrary to the assumption.

Now, consider the membership. Given $\Pmbb=\langle\Gamma,\chi,\Hmsf\rangle$, $\tau=\bigodot\limits^n_{i=1}\lambda_i$, and $k\in\{1,\ldots,n\}$, let
\begin{align*}
\tau^\flat_{\lambda_k}&=
\begin{cases}
\bigodot\limits_{\scriptsize{\begin{matrix}\lambda'\in\tau\\\lambda'\neq\lambda\end{matrix}}}\lambda'\odot{\lambda^\flat_k}_\Hmsf&\text{if }{\lambda^\flat_k}_\Hmsf\text{ is defined}\\
\bigodot\limits^{k-1}_{i=1}\lambda_i\odot\bigodot\limits^n_{j=k+1}\!\!\!\!\lambda_j&\text{otherwise}
\end{cases}
\end{align*}
We use an $\np$-oracle to guess $\Omc(|\Hmsf|)$ valuations $v_\mathsf{sat}$, $v_\mathsf{prp}$, and $v_\lambda$ (for each $\lambda\in\tau$) and verify whether they witness (i) $\Gamma,\tau\not\models_\Luk\bot$, (ii) $\tau\not\models_\Luk\chi$, and (iii) $\Gamma,\tau^\flat_{\lambda}\not\models_\Luk\chi$, respectively. Parallel to that (as we do not need the results of (i)--(iii)), we conduct a~$\conp$ check that (iv) $\Gamma,\tau\models_\Luk\chi$. It follows from the definition of $\models_\Luk$-minimal solutions that $\tau$ is a~$\models_\Luk$-minimal solution iff the $\np$ and $\conp$ checks succeed.
\end{proof}
\theoryminimalrecognition*
\begin{proof}
Let $\Pmbb=\langle\Gamma,\chi,\Hmsf\rangle$ be a~$\Luk$-abduction problem, and $\sigma$~is the interval term we are testing. We sketch a~$\Sigma^\Pmsf_2$ procedure for the complementary problem of testing whether $\sigma$~is \emph{not} a~theory-minimal solution. We first guess either (1) ‘not a proper solution’ or (2) another interval term $\sigma'$. In the case~(1), due to the $\DP$-completeness of proper solution recognition (Theorem~\ref{theorem:propersolutionrecognitioncomplexity}), we make one or two calls to an $\np$ oracle to verify that $\sigma$ is indeed not a~proper solution (in which case we return ‘yes’). In the case (2), we can make four calls to an $\np$ oracle: two to verify to verify that $\sigma'$ is a proper solution; one to verify that that $\Gamma,\sigma\models_\Luk\sigma'$, and another one to check that $\Gamma,\sigma'\not\models_\Luk\sigma$ (returning ‘yes’ if the calls show this to be the case). By Proposition~\ref{prop:intervaltermcomplexity}, these entailments can indeed be verified by an $\np$-oracle. The procedure we have just described will return yes iff the input $\sigma$ is not a~theory-minimal solution, which yields the desired membership in $\Pi^\Pmsf_2$ for the original problem. 
\end{proof}
\theoryrelevancerecognition*
\begin{proof}
We consider the relevance. Recall from Theorem~\ref{theorem:theoryminimalrecognition} that a~theory-minimal solution can be recognised in~$\Pi^\Pmsf_2$. Thus, it suffices to guess a~theory-minimal solution $\tau$ and then check that $\lambda$ indeed occurs there. It follows that the relevance of a~hypothesis w.r.t.\ the set of theory-minimal solutions can be decided in~$\Sigma^\Pmsf_3$. In the same way, we can show that \emph{non-necessity} of a~hypothesis can also be decided in~$\Sigma^\Pmsf_3$. Hence, necessity is decidable in~$\Pi^\Pmsf_3$.
\end{proof}
\section{Proofs of Section~\ref{sec:simpleclauseabduction}}
\begin{definition}\label{def:SCtolinearinequalities}
Let $\kappa=\bigoplus\limits^m_{i=1}p_i\oplus\bigoplus\limits^n_{j=1}\neg q_j$ be a~simple clause and $\lambda=p\lozenge\cmbf$ an interval literal. We define linear inequalities $\kappa^\lineq$ and $\lambda^\lineq$ as follows:
\begin{align*}
\kappa^\lineq&=\sum\limits^m_{i=1}x_{p_i}+\sum\limits^n_{j=1}(1-x_{q_j})\geq1&\lambda^\lineq&=x_{p}\lozenge c
\end{align*}

Now let
\begin{align*}
\Gamma&=\{\kappa_i\mid i\in\{1,\ldots,m\}\}\cup\{\tau_j\mid j\in\{1,\ldots,n\}\}
\end{align*}
be a~set containing simple clauses and interval terms s.t.\ $\tau_j=\bigodot\limits^{k_j}_{i=1}\lambda^j_i$. We define the system of linear inequalities $\Gamma^\lineq$ that corresponds to $\Gamma$ as follows:
\begin{align*}
\Gamma^\lineq&=\{\kappa^\lineq_i\!\mid i\!\in\!\{1,\ldots,m\}\}\!\cup\!\bigcup\limits^n_{j=1}\{(\lambda^j_i)^\lineq\!\mid i\!\in\!\{1,\ldots,k_j\}\}
\end{align*}
\end{definition}
\SCArecognitionpolynomial*
\begin{proof}
We provide the remaining details of the membership proof. We assume w.l.o.g.\ that $\sigma=\bigodot^n_{i=1}\lambda_i$ and that $\chi$ is an interval or a~simple clause. Indeed, it is easy to see that explaining a~simple or an interval term $\tau=\bigodot^m_{i=1}\lambda'_i$ is equivalent to explaining all its literals separately. To establish that $\sigma\in\Sol(\Pmbb)$, we need to check that (i) $\Gamma,\sigma\not\models_\Luk\bot$ and (ii) $\Gamma,\sigma\models_\Luk\chi$. For (i), observe from Definitions~\ref{def:Luksemantics} and~\ref{def:SCtolinearinequalities} that $\Gamma\cup\{\sigma\}$ is satisfiable iff $\Smsf=\Gamma^\lineq\cup\bigcup^n_{i=1}\{\lambda_i^\lineq\}$ has a~solution over $[0,1]$. For~(ii), we have two cases depending on whether $\chi$ is a~simple clause or an interval clause.

First, let $\chi=\bigoplus^{k}_{i=1}p_i\oplus\bigoplus^m_{i=1}\neg q_i$. We have that $\Gamma,\sigma\models_\Luk\chi$ iff $\Smsf\cup\left\{\sum^k_{i=1}x_{p_i}+\sum^m_{i=1}(1-x_{q_i})<1\right\}$ \emph{does not have} a~solution over $[0,1]$. Second, if $\chi=\bigoplus^k_{i=1}\!\lambda'_i$, then \mbox{$\Gamma,\sigma\models_\Luk\chi$} iff $\Smsf\cup\{\neg\lambda'^\lineq_i\mid1\!\leq\!i\!\leq\!k\}$ does not have a~solution over $[0,1]$.

As both (i) and (ii) can be determined in polynomial time, arbitrary solutions can be recognised in polynomial time too. For proper solutions, we need to check that $\sigma\not\models_\Luk\chi$. Again, it is easy to see that this check can also be done in polynomial time. Finally, as entailment between interval terms is decidable in polynomial time (Proposition~\ref{prop:intervaltermcomplexity}) and as there are only polynomially many candidates for every given interval term (recall the proof of Theorem~\ref{theorem:Lukminimalsolutionrecognitioncomplexity}), $\models_\Luk$-minimal solutions are recognisable in polynomial time as well.
\end{proof}
\SCAtheoryminimalrecognitionconp*
\begin{proof}
We consider the complementary problem: given $\tau$ and $\Pmbb=\langle\Gamma,\chi,\Hmsf\rangle$, verify that it is \emph{not} a~theory-minimal solution and show that it belongs to~$\np$. Indeed, it suffices to guess $\tau'$ s.t.\ $\tau'$ is a~proper solution to $\Pmbb$, $\Gamma,\tau\models_\Luk\tau'$, and $\Gamma,\tau'\not\models_\Luk\tau$. Theorem~\ref{theorem:SCArecognitionpolynomial} shows that these checks can be done in polynomial time.
\end{proof}
\intervalclauseabductioncomplexityrecognition*
\begin{proof}
Membership follows from Theorem~\ref{theorem:arbitrarysolutionrecognitioncomplexity}. We show the hardness via a~reduction from the classical solution recognition. Assume that $\Pmbb=\langle\Gamma,p,\Hmsf\rangle$ is a~\emph{classical abduction problem} with $\Gamma=\{\kappa_i\mid i\in\{1,\ldots,n\}\}$ being a~set of disjunctive clauses and that $\tau$ is a~solution to $\Pmbb$. We define $\Pmbb_\interval=\langle\Gamma_\interval,p,\Hmsf_\interval\rangle$ and $\tau_\interval$ as follows:
\begin{align}\label{equ:intervalclausereduction}
\Gamma_\interval&=\left\{\bigoplus\limits_{q_i\in\kappa_i}(q_i\!\geq\!\mathbf{1})\oplus\bigoplus\limits_{\neg r_i\in\kappa_i}(r_i\!<\!\mathbf{1})\mid i\in\{1,\ldots,n\}\right\}\nonumber\\
\Hmsf_\interval&=\{s\geq\mathbf{1}\mid s\in\Hmsf\}\cup\{t<\mathbf{1}\mid\neg t\in\Hmsf\}\nonumber\\
\tau_\interval&=\bigodot\limits_{s'\in\tau}(s'\geq\mathbf{1})\odot\bigodot\limits_{\neg t'\in\tau}(t<\mathbf{1})
\end{align}
Now let $\tau\!\in\!\Sol(\Pmbb)$, i.e., $\Gamma,\tau\!\consvDashCPL\!p$. Clearly, $\Gamma_\interval,\tau_\interval\!\not\models_\Luk\!\bot$. For the entailment, assume for contradiction that $v$ witnesses $\Gamma_\interval,\tau_\interval\not\models_\Luk p$. Consider a classical valuation $v_\cl$ s.t. $v_\cl(q)=1$ iff $v(q)=1$. One can easily see that $v_\cl$ witnesses $\Gamma,\tau\not\models_\CPL p$. For the converse direction, let $\tau\in\Sol(\Pmbb_\interval)$. A~solution for $\Pmbb$ can be obtained by replacing $r\geq\mathbf{1}$ with $r$ and $r<1$ with $\neg r$. The reasoning is similar.
\end{proof}
\coverfreesolutionrecognition*
\begin{proof}
For the membership, we construct a~reduction to the classical \emph{Horn} abduction for which solution recognition is polynomial. Let $\Gamma=\{\kappa_1,\ldots,\kappa_m,\kappa'_1,\ldots\kappa'_{m'}\}$ be a~set of interval clauses of the form $\kappa_i=\bigodot_{j=1}^{n_i}(p_j\lozenge\cmbf_j)\rightarrow(q_i\lozenge\dmbf_i)$ and $\kappa'_i=\bigodot_{j=1}^{n'_i}(p'_j\lozenge\cmbf_j)\rightarrow\bot$.

Now, for each interval literal $\lambda$, let $r_\lambda$ be a~fresh propositional variable. We define 
\begin{align*}
\kappa_\Hmc&\coloneqq\bigwedge_{j=1}^{n_i}r_{p_j\lozenge\cmbf_j}\rightarrow r_{q_i\lozenge\dmbf_i}&
\kappa'_\Hmc&\coloneqq\bigwedge_{j=1}^{n'_i}r_{p'_j\lozenge\cmbf_j}\rightarrow\bot
\end{align*}

We consider the following \emph{classical} abduction problem: $\Pmbb_\Hmc=\langle\Gamma_\Hmc,\tau_\Hmc,\Hmsf_\Hmc\rangle$ defined as described below.
\begin{align}\label{equ:Hornreductionproblem}
\Gamma_\Hmc&=\left\{\kappa_\Hmc\mid\kappa\in\Gamma\right\}\cup\{\kappa'_\Hmc\mid\kappa'\in\Gamma\}\nonumber\\
&\quad~\left\{r_\lambda\!\wedge\!r_{\lambda'}\!\rightarrow\!\bot\left|\begin{matrix}\Luk\!\models\!\lambda\odot\lambda'\rightarrow\bot\\\lambda\text{ and }\lambda'\text{ occur in }\Pmbb^\odot\end{matrix}\right.\right\}\cup\nonumber\\
&\quad~\left\{r_\lambda\rightarrow r_{\lambda'}\left|\begin{matrix}\Luk\models\lambda\rightarrow\lambda'\\\lambda\text{ and }\lambda'\text{ occur in }\Pmbb^\odot
\end{matrix}\right.\right\}\nonumber\\
\tau_\Hmc&=\bigwedge\limits_{\lambda\in\tau}r_\lambda\nonumber\\
\Hmsf_\Hmc&=\{r_\lambda\mid\lambda\in\Hmsf\}
\end{align}

As one can see, $\Gamma_\Hmc$ is polynomial in the size of $\Gamma$ and (recall Proposition~\ref{prop:intervaltermcomplexity}) it takes polynomial time to determine whether $\lambda\rightarrow\lambda'$ is $\Luk$-valid and $\lambda\odot\lambda'$ is $\Luk$-unsatisfieable. Furthermore, $\Gamma_\Hmc$ is a~set of Horn formulas and only contains variables of the form $r_\lambda$ with $\lambda$ being an interval term occurring in~$\Pmbb$. We show that $\sigma$~is a~(proper) solution to $\Pmbb$ iff $\sigma_\Hmc=\bigwedge\limits_{\lambda\in\sigma}r_\lambda$ is a~(proper) solution to $\Pmbb_\Hmc$.

Assume that $\sigma\!\in\!\Sol(\Pmbb)$. We show that $\Gamma_\Hmc,\sigma_\Hmc\!\consvDashCPL\!\tau_\Hmc$. Let $v$ be an $\Luk$-valuation witnessing that $\Gamma,\sigma\not\models_\Luk\bot$. We construct the classical valuation $v_\Hmc$ as follows:
\begin{align}
v_\Hmc(r_\lambda)&=
\begin{cases}
1&\text{if }v(\lambda)=1\\
0&\text{otherwise}
\end{cases}\label{equ:Hornvaluation}
\end{align}
It is clear from the construction of $\Gamma_\Hmc$ and $\sigma_\Hmc$ that $v_\Hmc$ witnesses $\Gamma_\Hmc,\sigma_\Hmc\not\models_\CPL\bot$.

Now let $\vmbf$ be a~classical valuation s.t.\ $\vmbf(\phi)=1$ for all $\phi\in\Gamma_\Hmc$, $\vmbf(\sigma_\Hmc)=1$ but $\vmbf(\tau_\Hmc)=0$. We need to define an $\Luk$-valuation $v_\vmbf$ s.t.\ $v_\vmbf(\lambda)=1$ iff $\vmbf(r_\lambda)=1$ would hold for every $\lambda$ occurring in~$\Pmbb$. For every $p_j\in\Prop[\Pmbb]$, we define 
\begin{align}
\Vmc_{p_j}&=\bigcap\limits_{\vmbf(r_{p_j\geq\cmbf_j})=1}\!\!\!\!\!\!\!\![c_j,1]\cap\!\!\!\!\!\bigcap\limits_{\vmbf(r_{p_j>\cmbf'_j})=1}\!\!\!\!\!\!\!\!(c'_j,1]\cap\nonumber\\
&\quad~\bigcap\limits_{\vmbf(r_{p_j\leq\cmbf''_j})=0}\!\!\!\!\!\!\!\!(c''_j,1]\cap\!\!\!\!\!\bigcap\limits_{\vmbf(r_{p_j<\cmbf'''_j})=0}\!\!\!\!\!\!\!\![c'''_j,1]\cap\nonumber\\
&\quad~\bigcap\limits_{\vmbf(r_{p_j\geq\dmbf_j})=0}\!\!\!\!\!\!\!\![0,d_j)\cap\!\!\!\!\!\bigcap\limits_{\vmbf(r_{p_j>\dmbf'_j})=0}\!\!\!\!\!\!\!\![0,d'_j]\cap\nonumber\\
&\quad~\bigcap\limits_{\vmbf(r_{p_j\leq\dmbf''_j})=1}\!\!\!\!\!\!\!\![0,d''_j]\cap\!\!\!\!\!\bigcap\limits_{\vmbf(r_{p_j<\dmbf'''_j})=1}\!\!\!\!\!\!\!\![0,d'''_j)\label{equ:valuesets}
\end{align}
It is clear that $\Vmc_{p_j}\neq\varnothing$ for each $p_j$. Indeed, otherwise, there would be two variables $r_\lambda$ and $r_{\lambda'}$ s.t.\ $\vmbf(r_\lambda)=\vmbf(r_{\lambda'})=1$ and $\lambda$ and $\lambda'$ are two interval literals over $p_j$ with $\lambda\odot\lambda'$ being $\Luk$-unsatisfiable. Hence, $r_\lambda\wedge r_{\lambda'}\rightarrow\bot\in\Gamma_\Hmc$, which means that $\vmbf$ does not satisfy $\Gamma_\Hmc$, contrary to the assumption.

We now set for every $p_j$, $v_\vmbf(p_j)=x$ for some $x\in\Vmc_{p_j}$. It is easy to see that $v_\vmbf(\theta)=1$ for all $\theta\!\in\!\Gamma$ and $v_\vmbf(\sigma)\!=\!1$ but $v_\vmbf(\tau)\!=\!0$. Thus, $\Gamma,\sigma\not\models_\Luk\tau$ which contradicts the assumption that $\sigma\!\in\!\Sol(\Pmbb)$. Thus, $\Gamma_\Hmc,\sigma_\Hmc\!\models_\CPL\!\tau_\Hmc$. It follows now that $\sigma_\Hmc$ is a~solution to~$\Pmbb_\Hmc$.

Conversely, let $\sigma_\Hmc$ be a~solution to~$\Pmbb_\Hmc$, that is, $\Gamma_\Hmc,\sigma_\Hmc\consvDashCPL\tau_\Hmc$. Assume that $\vmbf$ is a~classical valuation witnessing $\Gamma_\Hmc,\sigma_\Hmc\not\models_\CPL\bot$. We define $v_\vmbf$ using~\eqref{equ:valuesets} as in the previous part of the proof which gives us that $v_\vmbf$ witnesses $\Gamma,\sigma\not\models_\Luk\bot$. It remains to see that $\Gamma,\sigma\models_\Luk\tau$. To do that, assume for contradiction that there is some $\Luk$-valuation witnessing $\Gamma,\sigma\not\models_\Luk\tau$. We define $v_\Hmc$ as shown in~\eqref{equ:Hornvaluation}. It is clear that $v_\Hmc$ would witness $\Gamma_\Hmc,\sigma_\Hmc\not\models\tau_\Hmc$, contrary to the assumption. Hence, $\sigma\in\Sol(\Pmbb)$, as required.

Thus, we have that recognition of arbitrary solutions to problems whose theories are sets of cover-free clauses is polynomial. As our observation is also an interval term, it follows from Proposition~\ref{prop:intervaltermcomplexity} that verifying $\sigma\not\models_\Luk\tau$ takes polynomial time. This means that determining whether $\sigma$ is a~\emph{proper} solution also takes polynomial time. Finally, to see that recognition of \emph{entailment-minimal} solutions is polynomial as well, recall from the proof of Theorem~\ref{theorem:Lukminimalsolutionrecognitioncomplexity} that there are only polynomially many candidate interval terms we need to check for a~given~$\sigma$.

For the hardness, we construct a~logspace reduction of the solution recognition for classical Horn abduction problems to the solution recognition for Łukasiewicz abduction problems with cover-free theories. Namely, let $\Pmbb=\langle\Gamma,\tau,\Hmsf\rangle$ be a~classical Horn abduction problem with $\Gamma$ represented as Horn implications $\eta=\bigwedge^m_{i=1}p_i\rightarrow q_i$ and $\theta=\bigwedge^{m'}_{i=1}p'_i\rightarrow\bot$ and $\Hmsf\cap\Prop(\tau)=\varnothing$. We define
\begin{align*}
\eta^\Cmsf&\coloneqq\bigodot\limits^m_{i=1}(p_i\!\geq\!\mathbf{1})\rightarrow(q\geq\mathbf{1})
&
\theta^\Cmsf&\coloneqq\bigodot\limits^{m'}_{i=1}(p'_i\!\geq\!\mathbf{1})\rightarrow\bot
\end{align*}
and set $\Pmbb^\Cmsf=\langle\Gamma^\Cmsf,\tau^\Cmsf,\Hmsf^\Cmsf\rangle$ as follows:
\begin{align}\label{equ:HorntoCFreduction}
\Gamma^\Cmsf&=\left\{\eta^\Cmsf\mid\eta\in\Gamma\right\}\cup\left\{\theta^\Cmsf\mid\theta\in\Gamma\right\}\nonumber\\
\tau^\Cmsf&=\bigodot_{q\in\tau}(q\!\geq\!\mathbf{1})\odot\bigodot_{\neg q\in\tau}(q\!\leq\!\mathbf{0})\\
\Hmsf&=\{r\!\geq\!\mathbf{1}\mid r\in\Hmsf\}\cup\{s\leq\mathbf{0}\mid\neg s\in\Hmsf\}\nonumber
\end{align}
It is clear that $\Gamma^\Cmsf$ is cover-free and that the size of $\Pmbb^\Cmsf$ is linear in the size of~$\Pmbb$. Now assume that $\sigma\in\Sol(\Pmbb)$. We show that $\sigma^\Cmsf\in\Sol(\Pmbb^\Cmsf)$. First, it is clear that $\Gamma^\Cmsf,\sigma^\Cmsf\not\models_\Luk\bot$ because $\Gamma,\sigma\not\models_\CPL\bot$. To check the entailment, observe that there is a~unit resolution inference of literals comprising $\tau$ from $\Gamma\cup\{\sigma\}$. It is clear that classical unit resolution is sound for interval clauses: if $v(p\geq\mathbf{1})=1$ and $v(((p\!\geq\!\mathbf{1})\odot\tau)\rightarrow\lambda)=1$, then $v(\tau\rightarrow\lambda)=1$; if $v(p\leq\mathbf{0})=1$ and $v(\tau\rightarrow(p\geq\mathbf{1}))=1$, then $v(\tau\rightarrow\bot)=1$. Thus, reusing the classical inference in~$\Luk$, we obtain $\Gamma^\Cmsf,\sigma^\Cmsf\models_\Luk\tau^\Cmsf$, as required. Conversely, let $\sigma\in\Sol(\Pmbb^\Hmc)$ and define $\sigma_\cl=\bigwedge\limits_{(r\!\geq\!\mathbf{1})\in\sigma}\!\!\!\!r\wedge\bigwedge\limits_{(s\!\leq\!\mathbf{0})\in\sigma}\!\!\!\!\neg s$. It is clear that $\Gamma,\sigma_\cl\models_\CPL\tau$. To see that $\Gamma,\sigma_\cl\not\models_\CPL\bot$, assume for contradiction that \mbox{$\Gamma,\sigma_\cl\!\models_\CPL\!\bot$.} Then there is a~unit resolution inference of the empty clause from $\Gamma\cup\{\sigma_\cl\}$. But as we can reuse it in~$\Luk$, this would mean that $\Gamma^\Cmsf,\sigma\models_\Luk\bot$, contrary to the assumption. Contradiction.

As $\Hmsf\cap\Prop(\tau)=\varnothing$, it follows that all solutions to $\Pmbb$ and $\Pmbb^\Cmsf$ are proper. Hence, the reduction works for both arbitrary and proper solutions. Finally, to see that recognition of entailment-minimal solutions is $\Pmsf$-hard, one can easily check that if $\sigma\in\LukminSol(\Pmbb)$, then $\sigma^\Cmsf\in\LukminSol(\Pmbb^\Cmsf)$. Conversely, if $\sigma'\in\LukminSol(\Pmbb^\Cmsf)$, then $\sigma'_\cl\in\LukminSol(\Pmbb)$.
\end{proof}
\coverfreesolutionexistence*
\begin{proof}
The membership follows from Theorem~\ref{theorem:coverfreesolutionrecognition}. For the hardness, we reuse the reduction shown in~\eqref{equ:HorntoCFreduction}.
\end{proof}
\end{document}